\newtheorem{thm}{Theorem}[section]
\newtheorem{lem}[thm]{Lemma}
\newtheorem{cor}[thm]{Corollary}
\newtheorem{pro}[thm]{Proposition}
\theoremstyle{definition}
\newtheorem{ex}[thm]{Example}
\newtheorem{rmk}[thm]{Remark}
\newtheorem{defi}[thm]{Definition}
\newcommand {\emptycomment}[1]{}
\newcommand{\lon }{\,\rightarrow\,}
\newcommand{\be }{\begin{equation}}
\newcommand{\ee }{\end{equation}}
\newcommand{\dm}{$\frkD$-map~}
\newcommand{\dms}{$\frkD$-maps~}
\newcommand{\ddm}{$\huaD$-map~}
\newcommand{\ddms}{$\huaD$-maps~}
\newcommand{\g}{\mathfrak g}
\newcommand{\h}{\mathfrak h}
\newcommand{\huaG}{\mathcal{G}}
\newcommand{\huaD}{\mathcal{D}}
\newcommand{\huaO}{{\mathcal{O}}}
\newcommand{\frkl}{\mathfrak l}
\newcommand{\frkD}{\mathfrak D}
\newcommand{\dM}{\mathrm{d}}
\newcommand{\Vir}{\mathrm{Vir}}
\newcommand{\Courant}[1]{\left\llbracket  #1\right\rrbracket }
\newcommand{\br}[1]{   [ \cdot,    \cdot  ]   }
\newcommand{\CE}{\mathsf{CE}}
\newcommand{\Hom}{\mathrm{Hom}}
\newcommand{\Der}{\mathrm{Der}}
\newcommand{\NR}{\mathrm{NR}}
\newcommand{\Inn}{\mathrm{Inn}}
\newcommand{\gl}{\mathfrak {gl}}
\newcommand{\ad}{\mathrm{ad}}
\newcommand{\K}{\mathbb{K}}
\begin{document}

\title[Deformation maps of quasi-twilled Lie algebras]{Deformation maps of quasi-twilled Lie algebras}

\author{Jun Jiang}
\address{Department of Mathematics, Jilin University, Changchun 130012, Jilin, China}
\email{junjiang@jlu.edu.cn}

\author{Yunhe Sheng}
\address{Department of Mathematics, Jilin University, Changchun 130012, Jilin, China}
\email{shengyh@jlu.edu.cn}

\author{Rong Tang}
\address{Department of Mathematics, Jilin University, Changchun 130012, Jilin, China}
\email{tangrong@jlu.edu.cn}


\begin{abstract}
In this paper, we provide a unified approach to study the cohomology theories and deformation theories of various types of operators in the category of Lie algebras, including modified $r$-matrices, crossed homomorphisms, derivations, homomorphisms, relative Rota-Baxter operators, twisted Rota-Baxter operators, Reynolds operators and deformation maps of matched pairs of Lie algebras. The main ingredients are quasi-twilled Lie algebras. We introduce two types of deformation maps of a quasi-twilled Lie algebra. Deformation maps of type I unify  modified $r$-matrices, crossed homomorphisms, derivations and homomorphisms between Lie algebras, while  deformation maps of type II unify relative Rota-Baxter operators, twisted Rota-Baxter operators, Reynolds operators and deformation maps of matched pairs of Lie algebras. We further give the controlling algebras and cohomologies of these two types of deformation maps, which not only recover the existing results for  crossed homomorphisms, derivations, homomorphisms, relative Rota-Baxter operators, twisted Rota-Baxter operators and Reynolds operators, but also leads to some new results which are unable to solve before, e.g. the controlling algebras and cohomologies  of modified $r$-matrices and deformation maps of matched pairs of Lie algebras.

\end{abstract}

\renewcommand{\thefootnote}{}
\footnotetext{2020 Mathematics Subject Classification.    17B38, 17B40,17B56}

\keywords{quasi-twilled Lie algebra, deformation map, relative Rota-Baxter operator, twisted Rota-Baxter operator, crossed homomorphism,  modified $r$-matrix, cohomology, deformation}

\maketitle

\tableofcontents

\allowdisplaybreaks


\section{Introduction}

A classical approach to study a mathematical structure is to associate to it invariants. Among these, cohomology theories occupy a  central position as they enable for example to control deformation  or extension problems. The concept of a formal deformation of an algebraic structure began with the seminal
work of Gerstenhaber~\cite{Ge0,Ge} for associative
algebras. Nijenhuis and Richardson   extended this study to Lie algebras
~\cite{NR,NR2}. Deformations of other
algebraic structures such as pre-Lie algebras have also been
developed~\cite{CL}. More generally, deformation theory
for algebras over quadratic operads was developed by Balavoine~\cite{Bal}. For more
general operads we refer the reader to \cite{KSo,LV,Ma}, and the references therein.
There is a well known slogan, often attributed to Deligne, Drinfeld and Kontsevich:  {\em every reasonable deformation theory is controlled by a differential graded (dg) Lie algebra, determined up to quasi-isomorphism}. This slogan has been made into a rigorous theorem by Lurie  and Pridham, cf. \cite{Lu,Pr}.

It is also meaningful to deform {\em maps} compatible with given algebraic structures. Recently, the cohomology and deformation theories of various operators were established with fruitful applications, e.g morphisms   \cite{Barmeier,Borisov,Das,Fre,Fregier,FZ},   derivations  \cite{TFS},  $\huaO$-operators (also called relative Rota-Baxter operators)   \cite{Das20,Das1,TBGS,Uc},   crossed homomorphisms   \cite{Das-ch,JS,PSTZ},  twisted Rota-Baxter operators and  Reynolds operators   \cite{Das0,Das-tRB}. The key step in most of the above studies is to construct the controlling algebra, namely an algebra whose Maurer-Cartan elements are the given structures, using the method of derived brackets \cite{Kosmann-Schwarzbach,Ma-0,Vo}. Then twisting the controlling algebra by a Maurer-Cartan element, one can obtain the algebra that governs deformations of the given operator, as well as the coboundary operator in the deformation complex.

Modified $r$-matrices, namely solutions of the modified Yang-Baxter equation, are important operators that have deep applications in mathematical physics, e.g. the Lax equation and the factorization problem  \cite{RS1, STS,STS2}.  Even though there are very fruitful results for various kinds of operators as aforementioned, but the controlling algebra for modified $r$-matrices  is still unknown. On the other hand, the notion of deformation maps of matched pairs of Lie algebras was introduced in \cite{AM,AM1} in the study of classifying compliments. It is also useful to develop  the cohomology and deformation theories for deformation maps of matched pairs of Lie algebras.

We propose a unified approach to study all aforementioned operators. On the one hand, we recover all the existing theories. On the other hand, we obtain some new results. We give the controlling algebra of modified $r$-matrices, which is a curved $L_\infty$-algebra, and establish the cohomology and deformation theories for deformation maps of matched pairs of Lie algebras.  We study all the operators under the general framework of quasi-twilled Lie algebras, which are generalization of quasi-Lie bialgebras \cite{KS1,KS-quasi}, and include direct sum of Lie algebras, semidirect products, action Lie algebras and matched pairs of Lie algebras as particular cases. We introduce two types of deformation maps of a quasi-twilled Lie algebra. Deformation maps of type I unify modified $r$-matrices, crossed homomorphisms, derivations and homomorphisms between Lie algebras, and  deformation maps of type II unify relative Rota-Baxter operators, twisted Rota-Baxter operators, Reynolds operators and deformation maps of matched pairs of Lie algebras. We further give the controlling algebras and cohomologies of these two types of deformation maps, and realize the above purposes.

Note that simultaneous  deformations of parts of aforementioned operators and algebras were studied in \cite{Barmeier,Das-sim,Fregier,LST,WZ}, and it would be helpful to develop a unified approach to study simultaneous deformations, which will be considered in the future. 

The paper is organized as follows. In Section \ref{sec:qtl}, we introduce the notion of a  quasi-twilled Lie algebra  and give various examples. In Section \ref{sec:dualdef}, we introduce the notion of a deformation map  of type I of a quasi-twilled Lie algebra, and give its controlling algebra and cohomology. In particular, we obtain the curved $L_\infty$-algebra, whose Maurer-Cartan elements are modified $r$-matrices. In Section \ref{sec:def}, we introduce the notion of a deformation map  of type II of a quasi-twilled Lie algebra, and give its controlling algebra and cohomology. As a byproduct, we give the controlling algebra and cohomology of a deformation map of a matched pair of Lie algebras.

\vspace{2mm}
\noindent
{\bf Acknowledgements. } This research is supported by NSFC (11922110,12371029).

\section{Quasi-twilled Lie algebras}\label{sec:qtl}

In this section, we introduce the notion of a quasi-twilled Lie algebra, and give various examples.

Let $\g$ be a vector space. Define the graded vector space
$\oplus_{n=0}^{+\infty} \Hom(\wedge^{n+1}\g,\g)$
with the degree of elements in $\Hom(\wedge^n\g,\g)$ being $n-1$. For $f\in \Hom(\wedge^m\g,\g), g\in \Hom(\wedge^n\g,\g)$, the  Nijenhuis-Richardson bracket $[\cdot,\cdot]_\NR$ is defined by
$$ [f,g]_{\NR}:=f\circ g- (-1)^{(m-1)(n-1)}g\circ f,$$
with $f\circ g\in \Hom(\wedge^{m+n-1}\g,\g)$ being defined by
\begin{equation}
(f\circ g)(x_1,\cdots,x_{m+n-1}):=\sum_{\sigma\in S(n,m-1)} (-1)^\sigma f(g(x_{\sigma(1)},\cdots,x_{\sigma(n)}),x_{\sigma(n+1)}, \cdots,x_{\sigma(m+n-1)}),
\label{eq:fgcirc}
\end{equation}
where the sum is over $(n,m-1)$-shuffles. Recall that a permutation $\tau\in S_n$ is called an  $(i,n-i)$-shuffle if $\tau(1)<\cdots <\tau(i)$ and $\tau(i+1)<\cdots <\tau(n)$.
Then $\big(\oplus_{n=0}^{+\infty} \Hom(\wedge^{n+1}\g, \g),[\cdot,\cdot]_{\NR}\big)$ is a   graded Lie algebra \cite{NR,NR2}. With this setup, a Lie algebra structure on $\g$ is precisely a  solution $\pi\in \Hom(\wedge^2\g, \g)$ of the Maurer-Cartan equation
$$ [\pi,\pi ]_{\NR}=0.$$

Let $\g$ and $\h$ be vector spaces. The elements in $\g$ are denoted by $x$ and the elements in $\h$ are denoted by $u$. For a multilinear map $\kappa: \wedge^{k}\g\otimes\wedge^{l}\h\lon \g,$ we define $\hat{\kappa}\in\Hom(\wedge^{k+l}(\g\oplus\h), \g\oplus\h)$ by
\begin{equation*}
\hat{\kappa}\Big((x_1, u_1),\cdots, (x_{k+l}, u_{k+l})\Big)=\sum_{\tau\in S(k,l)}(-1)^{\tau}\Big(\kappa(x_{\tau(1)},\cdots,x_{\tau(k)},u_{\tau(k+1)},\cdots, u_{\tau(k+l)}),0\Big).
\end{equation*}
Similarly, for  a multilinear map $\kappa: \wedge^{k}\g\otimes\wedge^{l}\h\lon \h,$ we define $\hat{\kappa}\in\Hom(\wedge^{k+l}(\g\oplus\h), \g\oplus\h)$ by
\begin{equation*}
\hat{\kappa}\Big((x_1, u_1),\cdots, (x_{k+l}, u_{k+l})\Big)=\sum_{\tau\in S(k,l)}(-1)^{\tau}\Big(0, \kappa(x_{\tau(1)},\cdots,x_{\tau(k)},u_{\tau(k+1)},\cdots,u_{\tau(k+l)})\Big).
\end{equation*}
The linear map $\hat{\kappa}$ is called a {\bf lift} of $\kappa$. We define $\g^{k,l}=\wedge^{k}\g\otimes\wedge^{l}\h$. Then  $\wedge^{n}(\g\oplus\h)\cong\oplus_{k+l=n}\g^{k,l}$ and $\Hom(\wedge^{n}(\g\oplus\h),\g\oplus\h)\cong(\oplus_{k+l=n}\Hom(\g^{k,l},\g))\oplus(\oplus_{k+l=n}\Hom(\g^{k,l},\h))$, where the isomorphism is given by the lift. See \cite{LST, Uc} for more details. In the sequel, we will omit the notation $\hat{\cdot}$.

\begin{defi}
Let $(\huaG, [\cdot, \cdot]_\huaG)$ be a Lie algebra with a decomposition into two subspaces $\huaG=\g\oplus\h$. The triple $(\huaG, \g, \h)$
is called a {\bf quasi-twilled Lie algebra} if $\h$ is a Lie subalgebra of $(\huaG, [\cdot, \cdot]_\huaG)$.
\end{defi}

Let $(\huaG, \g, \h)$ be a quasi-twilled Lie algebra. Denote the Lie bracket of $\huaG$ by $\Omega$. Then there exists
$\pi\in\Hom(\wedge^{2}\g, \g), ~\rho\in\Hom(\g\otimes\h, \h),~ \mu\in\Hom(\wedge^{2}\h, \h), ~\eta\in\Hom(\g\otimes\h, \g)$ and $\theta\in\Hom(\wedge^{2}\g, \h)$, such that
 \begin{equation}\label{eq:o}
\Omega=\pi+\rho+\mu+\eta+\theta.
\end{equation}
More precisely, for all $x, y\in\g,~ u, v\in\h$, we have
\begin{eqnarray*}
  \Omega((x,u),(y,v))=\big(\pi(x,y)+\eta(x,v)-\eta(y,u),\mu(u,v)+\rho(x,v)-\rho(y,u)+\theta(x,y)\big).
\end{eqnarray*}

In fact, $[\Omega, \Omega]_{\NR}=0$ if and only if the following equations hold
\begin{eqnarray}\label{111}
\left\{\begin{array}{rcl}
~~[\mu, \mu]_{\NR} &=&0, \\
~~[\pi, \pi]_{\NR}+2\eta\circ\theta &=&0, \\
~~2[\mu, \eta]_{\NR}+[\eta, \eta]_{\NR}&=&0, \\
~~[\rho, \mu]_{\NR}+\rho\circ\eta&=&0, \\
~~[\rho, \theta]_{\NR}+[\pi, \theta]_{\NR}&=&0, \\
~~[\pi, \eta]_{\NR}+\eta\circ\rho&=&0, \\
~~[\mu, \theta]_{\NR}+[\pi, \rho]_{\NR}+\theta\circ\eta+\frac{1}{2}[\rho, \rho]_{\NR}&=&0.
\end{array}\right.
\end{eqnarray}

\begin{pro}\label{etarep}
With the above notations, $\sigma:\h\lon\gl(\g)$ is a representation of the Lie algebra $(\h, \mu)$ on the vector space $\g$, where
$$
\sigma(v)x=-\eta(x, v), \quad \forall x\in\g, v\in\h.
$$
\end{pro}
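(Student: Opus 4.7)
The goal is to verify the Lie algebra representation axiom $\sigma(\mu(u,v))=[\sigma(u),\sigma(v)]$ as operators on $\g$, for all $u,v\in\h$. My plan is to extract this from a single component of the Maurer--Cartan system \eqref{111}, namely $2[\mu,\eta]_{\NR}+[\eta,\eta]_{\NR}=0$, which is precisely the piece that couples $\mu$ and $\eta$ quadratically and therefore has a chance of producing the representation identity.

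First I would observe that, since both (lifted) $\mu$ and $\eta$ sit in Nijenhuis--Richardson degree $1$, the identity unfolds to $\mu\circ\eta+\eta\circ\mu+\eta\circ\eta=0$, where $\circ$ is the composition of \eqref{eq:fgcirc}. I would then evaluate this universal identity on the distinguished triple $\xi_1=(x,0),\ \xi_2=(0,u),\ \xi_3=(0,v)\in\g\oplus\h$, which is tailored to isolate the operator identity I want.

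The bulk of the argument is a short case-by-case expansion of the three $\circ$-compositions on $(\xi_1,\xi_2,\xi_3)$ using the lift formulas $\hat\mu((x_1,u_1),(x_2,u_2))=(0,\mu(u_1,u_2))$ and $\hat\eta((x_1,u_1),(x_2,u_2))=(\eta(x_1,u_2)-\eta(x_2,u_1),0)$, summed over the three $(2,1)$-shuffles in $S_3$. Two of the three compositions are essentially free. The term $(\mu\circ\eta)(\xi_1,\xi_2,\xi_3)$ vanishes because $\hat\eta$ always outputs in $\g$ while $\hat\mu$ annihilates any pair containing a $\g$-entry; the term $(\eta\circ\mu)(\xi_1,\xi_2,\xi_3)$ collapses to the single surviving shuffle $\sigma=(2,3,1)$, which produces $-\eta(x,\mu(u,v))$ in the $\g$-slot. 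The term $(\eta\circ\eta)(\xi_1,\xi_2,\xi_3)$ has two nonzero shuffles (the shuffle placing $\xi_2,\xi_3$ first dies because $\hat\eta$ is zero on pairs in $\h$), contributing $\eta(\eta(x,u),v)-\eta(\eta(x,v),u)$.

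Summing the three pieces and reading off the $\g$-component yields the scalar identity $\eta(\eta(x,u),v)-\eta(\eta(x,v),u)=\eta(x,\mu(u,v))$. Substituting $\sigma(v)x=-\eta(x,v)$ rewrites the left-hand side as $-[\sigma(u),\sigma(v)]x$ and the right-hand side as $-\sigma(\mu(u,v))x$, which is the desired representation axiom. There is no real obstacle; the only care needed is the bookkeeping of the three shuffle signs, so that the sign convention hidden in $\sigma(v)x=-\eta(x,v)$ comes out consistent on both sides.
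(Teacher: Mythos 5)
Your proposal is correct and follows exactly the paper's route: the paper's proof consists precisely of invoking the component $2[\mu,\eta]_{\NR}+[\eta,\eta]_{\NR}=0$ of the Maurer--Cartan system \eqref{111}, and your computation simply fills in the shuffle-by-shuffle expansion (with the correct signs and vanishing terms) that the paper leaves implicit. Nothing to flag.
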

\begin{proof}
By \eqref{111}, we have $2[\mu, \eta]_{\NR}+[\eta, \eta]_{\NR}=0$ which implies that $\sigma(v)x=-\eta(x, v)$ is a representation of the Lie algebra $(\h, \mu)$ on the vector space $\g$.
\end{proof}

Quasi-twilled Lie algebras can be viewed as natural generalizations of quasi-Lie bialgebras. Let $\g$ be a vector space and $\pi\in\wedge^2\g^*\otimes\g, ~\mu\in\g^*\otimes\wedge^2\g$ and $\theta\in\wedge^3\g^*$. Recall that the quadruple $(\g, \pi,\mu, \theta)$ is called a {\bf quasi-Lie bialgebra} \cite{B,KS1,KS-quasi} if
\begin{equation*}
\frac{1}{2}\{\pi, \pi\}+\{\mu, \theta\}=0, \quad \{\pi, \mu\}=0, \quad \{\mu, \mu\}=0, \quad \{\pi, \theta\}=0,
\end{equation*}
where $\{\cdot, \cdot\}$ is the  big bracket, which is the canonical Poisson bracket on $T^*[2]\g[1]$. For convenience and to be consistence with the notations for quasi-twilled Lie algebras, we will also view $\pi\in\wedge^2\g^*\otimes\g,~ \mu\in\g^*\otimes\wedge^2\g$ and $\theta\in\wedge^3\g^*$ as maps in $ \Hom(\wedge^{2}\g, \g),  \Hom(\wedge^{2}\g^*, \g^*) $ and $ \Hom(\wedge^{2}\g, \g^*)$ respectively. Then it is well known that the double $\g\oplus\g^*$ is a Lie algebra and it follows that $(\g\oplus\g^*,  \g, \g^*)$ is a quasi-twilled Lie algebra.

In the sequel, we give various examples of quasi-twilled Lie algebras.

Let $(\g, [\cdot,\cdot]_\g)$ be a Lie algebra.  For $\lambda\in\mathbb{K}$, define a bracket operation  $[\cdot, \cdot]_M$ on $ \g\oplus\g $ by
$$
[(x, u), (y, v)]_M=([x, v]_\g-[y, u]_\g, \lambda[x, y]_\g+[u, v]_\g), \quad \forall x, y, u, v\in\g.
$$
That is $\pi=\rho=0, \eta=\mu=[\cdot, \cdot]_\g$ and $\theta=\lambda[\cdot, \cdot]_\g$ in \eqref{eq:o}. Then we have that $[\cdot, \cdot]_M$ is a Lie algebra structure on $\g\oplus\g$. Denote this Lie algebra by $\g\oplus _M\g$.

\begin{ex}\label{ex:modir}
Let $(\g, [\cdot,\cdot]_\g)$ be a Lie algebra.  Then $(\g\oplus _M\g, \g, \g)$ is a quasi-twilled Lie algebra.
\end{ex}

Let $\rho:\g\to \Der(\h)$ be an action of a Lie algebra $\g$ on a Lie algebra $\h$. For any $\lambda\in \K$, then $(\g\oplus\h,[\cdot,\cdot]_\rho)$ is a Lie algebra, where the Lie bracket $[\cdot,\cdot]_\rho$ is given by
\begin{equation}
  [(x, u), (y, v)]_\rho=([x,y]_\g, \rho(x)v-\rho(y)u+\lambda[u,v]_\h),\quad \forall x,y\in\g, u,v \in\h.
\end{equation}
This Lie algebra is denoted by $\g\ltimes_\rho\h$ and called the {\bf action Lie algebra}.

\begin{ex}\label{ex:rRB1}
Let $\rho:\g\to \Der(\h)$ be an action of a Lie algebra $\g$ on a Lie algebra $\h$. Then $(\g\ltimes_\rho\h, \g, \h)$ is   a quasi-twilled Lie algebra.
\end{ex}

\begin{ex}\label{ex:rRB}
Let $\rho:\g\to \gl(V)$ be a representation of a Lie algebra $\g$ on a vector space $V$. Then $(\g\ltimes_\rho V, \g, V)$ is   a quasi-twilled Lie algebra, where $\g\ltimes_\rho V$ is the {\bf semidirect product Lie algebra} with the Lie bracket
$$
[(x, u), (y, v)]_{\rho}=([x, y]_\g, \rho(x)v-\rho(y)u), \quad \forall u, v\in V, x, y\in\g.
$$
\end{ex}

Let $(\g, [\cdot, \cdot]_\g)$ and $(\h, [\cdot, \cdot]_\h)$ be Lie algebras. Then there is the direct product Lie algebra  $(\g\oplus\h, [\cdot, \cdot]_{\oplus})$,  where
\begin{equation*}
[(x, u), (y, v)]_\oplus=([x, y]_\g, [u, v]_\h), \quad \forall x, y\in\g, u, v\in\h.
\end{equation*}

\begin{ex}\label{directpro}
The direct product Lie algebra $(\g\oplus\h,\g,\h)$ is a quasi-twilled Lie algebra.
\end{ex}

\begin{ex}\label{ex:twistedRB}
  Let $\rho:\g\to \gl(V)$ be a representation of a Lie algebra $\g$ on a vector space $V$ and $\omega\in\Hom(\wedge^2\g,V)$ be a $2$-cocycle. Then $(\g\oplus V, [\cdot,\cdot]_{\rho,\omega})$ is a Lie algebra, where the Lie bracket $[\cdot,\cdot]_{\rho,\omega}$ is given by
\begin{equation}
  [(x, u), (y, v)]_{\rho,\omega}=([x,y]_\g, \rho(x)v-\rho(y)u+\omega(x,y)),\quad \forall x,y\in\g, u,v \in V.
\end{equation}
Denote this Lie algebra by $\g\ltimes _{\rho,\omega}V$. Moreover, $(\g\ltimes _{\rho,\omega}V, \g, V)$ is   a quasi-twilled Lie algebra.
\end{ex}

\begin{ex}\label{ex:Reynolds}
  As a special case of Example \ref{ex:twistedRB}, consider $V=\g,~\rho=\ad$ and $\omega(x,y)=[x,y]_\g$. Then we obtain a quasi-twilled Lie algebra $(\g\ltimes_{\ad,\omega} \g,\g,\g)$.
\end{ex}

\begin{rmk}\label{rmk-nex}
 In fact, the above examples can be unified via extensions of Lie algebras. Recall that a Lie algebra  $\huaG$ is an extension of a Lie algebra $\g$ by a Lie algebra $\h$ if we have the following exact sequence:
\begin{equation}\label{seq:nonabelianext}
0\longrightarrow\h\stackrel{i}{\longrightarrow}\huaG\stackrel{\pi}{\longrightarrow} \g\longrightarrow 0.
\end{equation}
By  choosing a section $s:\g\lon\huaG$ of the extension \eqref{seq:nonabelianext},  $ \huaG$ is equal to  $ s(\g)\oplus i(\h) $, and $i(\h)$ is a subalgebra. Thus,   $(\huaG, s(\g), i(\h))$ is a quasi-twilled Lie algebra.
\end{rmk}

A {\bf matched pair of Lie algebras}  consists of a pair of Lie algebras  $(\g,\h)$, a  representation $\rho: \g\to\gl(\h)$ of $\g$ on $\h$ and a   representation $\eta: \h\to\gl(\g)$ of $\h$ on $\g$ such that
\begin{eqnarray}
\label{eq:mp1}\rho(x) [u,v]_{\h}&=&[\rho(x)u,v]_{\h}+[u,\rho(x) v]_{\h}+\rho\big((\eta(v)x\big)u-\rho\big(\eta(u)x\big) v,\\
\label{eq:mp2}\eta(u) [x, y]_{\g}&=&[\eta(u)x,y]_{\g}+[x,\eta(u) y]_{\g}+\eta\big(\rho(y) u\big)x-\eta\big(\rho(x)u\big)y,
\end{eqnarray}
 for all $x,y\in \g$ and $u,v\in \h$. We will denote a matched pair of Lie algebras by $(\g,\h;\rho,\eta)$, or simply by $(\g,\h)$.
Let $(\g,\h;\rho,\eta)$ be a matched pair of Lie algebras. Then there is a Lie algebra structure on the direct sum space $\g\oplus \h$ with the Lie bracket $[\cdot,\cdot]_{\bowtie }$ given by
\[[(x, u), (y, v)]_{\bowtie }=\big([x,y]_\g+\eta(u)y-\eta(v) x,[u,v]_\h+\rho(x)v-\rho(y)u\big).\]
Denote this Lie algebra by $\g\bowtie \h$.

\begin{ex}\label{ex:mch}
Let $(\g,\h;\rho,\eta)$ be a matched pair of Lie algebras. Then $(\g\bowtie \h, \g, \h)$ is a quasi-twilled Lie algebra. Note that the representation $\eta$ in the definition of a matched pair and the $\eta$ in \eqref{eq:o} are related as follows: $\rho(x)v=\rho(x,v),~\eta(u)(y)=-\eta(y,u)$.
\end{ex}

\begin{rmk}
 In Section \ref{sec:dualdef},  we will introduce the notion of a deformation map of  type I of a quasi-twilled Lie algebra, and will see that deformation maps of  type I of   quasi-twilled Lie algebras given in Examples \ref{ex:modir}, \ref{ex:rRB1}, \ref{ex:rRB} and \ref{directpro} are exactly modified $r$-matrices (solutions of the modified classical Yang-Baxter equation), crossed homomorphisms, derivations and Lie algebra homomorphisms.
\end{rmk}

\begin{rmk}
 In Section \ref{sec:def}, we will introduce the notion of a deformation map of  type II of a quasi-twilled Lie algebra, and  will see that deformation maps of  type II of   quasi-twilled Lie algebras given in Examples \ref{ex:rRB1}, \ref{ex:rRB}, \ref{ex:twistedRB} \ref{ex:Reynolds} and \ref{ex:mch} are exactly relative Rota-Baxter operators of weight $\lambda$, relative Rota-Baxter operators of weight $0$, twisted Rota-Baxter operators, Reynolds operators and deformation maps of a matched pair of Lie algebras. So deformation maps of  type II of   quasi-twilled Lie algebras provide a unified approach to study relative Rota-Baxter operators, twisted Rota-Baxter operators, Reynolds operators and deformation maps of a matched pair of Lie algebras.
\end{rmk}

\emptycomment{
\begin{ex}
A Lie-Yamaguti algebra is a vector space $\g$ endowed with a bilinear map $\cdot:\g\otimes\g\lon\g$ and a  trilinear map $[\cdot,\cdot,\cdot]:\g\otimes\g\otimes \g\lon\g$ such that, for all $x,y,z,w,t\in\g$ the following equations are hold:
\begin{eqnarray}
\label{LY-1}&&x\cdot x=0,\\
\label{LY-2}&&[x,x,y]=0,\\
\label{LY-3}&&(x\cdot y)\cdot z+(y\cdot z)\cdot x+(z\cdot x)\cdot y+[x,y,z]+[y,z,x]+[z,x,y]=0,\\
\label{LY-4}&&[x\cdot y,z,w]+[y\cdot z,x,w]+[z\cdot x,y,w]=0,\\
\label{LY-5}&&[x,y,z\cdot w]=[x,y,z]\cdot w+z\cdot[x,y,w],\\
\label{LY-6}&&[x,y,[z,w,t]]=[[x,y,z],w,t]]+[z,[x,y,w],t]+[z,w,[x,y,t]].
\end{eqnarray}
Let $(\g,\cdot,[\cdot,\cdot,\cdot])$ be a Lie-Yamaguti algebra. For any $x,y\in\g$, we define a linear map $\ad_{x,y}:\g\lon\g$ by
\begin{eqnarray*}
\ad_{x,y}z:=[x,y,z],\,\,\forall z\in\g.
\end{eqnarray*}
By \eqref{LY-5} and \eqref{LY-6},  $\ad_{x,y}$ is a derivation of the Lie-Yamaguti algebra $(\g,\cdot,[\cdot,\cdot,\cdot])$ and called an inner derivation. Moreover, we denote  the linear span of the inner derivations by $\Inn(\g)$. The standard envelope of the Lie-Yamaguti algebra is the Lie algebra structure on the direct sum space  $\Inn(\g)\oplus\g$ which is given by
\begin{eqnarray*}
{}[\ad_{x,y},\ad_{z,w}]&=&\ad_{[x,y,z],w}+\ad_{z,[x,y,w]},\\
{}[\ad_{x,y},z]&=&[x,y,z],\\
{}[x,y]&=&\ad_{x,y}+x\cdot y,\,\,\forall x,y,z,w\in\g.
\end{eqnarray*}
Thus, $(\g\oplus\Inn(\g),\g,\Inn(\g))$ is a quasi-twilled Lie algebra.
\end{ex}

\begin{ex}
Consider the Virasoro algebra $\Vir$ which is defined  by the basis elements $\{L_i|i\in \mathbb Z\}\cup\{c\}$  and the following Lie bracket:
\begin{eqnarray*}
{}[L_m,L_n]&=&(m-n)L_{m+n}+\delta_{m+n,0}\frac{m^3-m}{12}c,\\
{}[L_m,c]&=&0,\,\,\forall m,n\in\mathbb Z.
\end{eqnarray*}
Moreover, we denote the Lie subalgebra $\oplus_{i=0}^{+\infty}\mathbb C L_i$ by $\Vir_+$ and the Lie subalgebra $(\oplus_{i=1}^{+\infty}\mathbb C L_{-i})\oplus \mathbb C c$ by $\Vir_-$. Thus, we gain that $(\Vir,\Vir_+,\Vir_-)$ is a quasi-twilled Lie algebra.
\end{ex}
}

\section{The controlling algebras and cohomologies of deformation maps of  type I}\label{sec:dualdef}

In this section,  $(\huaG, \g, \h)$ is always a quasi-twilled Lie algebra, and the Lie bracket on $\huaG$ is denoted by $$\Omega=\pi+\rho+\mu+\eta+\theta,$$
where $\pi\in\Hom(\wedge^{2}\g, \g),~ \rho\in\Hom(\g\otimes\h, \h),~ \mu\in\Hom(\wedge^{2}\h, \h),~ \eta\in\Hom(\g\otimes\h, \g)$ and $\theta\in\Hom(\wedge^{2}\g, \h)$.

\subsection{Deformation maps of  type I of a quasi-twilled Lie algebra}
In this subsection, we introduce the notion of deformation maps of  type I of a quasi-twilled Lie algebra, which unify modified $r$-matrices, crossed homomorphisms, derivations and homomorphisms between Lie algebras

\begin{defi}
Let $(\huaG, \g, \h)$ be a quasi-twilled Lie algebra. A {\bf deformation map of type I} ($\frkD$-map for short) of $(\huaG, \g, \h)$ is a linear map $D: \g\lon\h$   such that
\begin{equation*}
D\big(\eta(x, D(y))-\eta(y, D(x))+\pi(x, y)\big)=\mu(D(x), D(y))+\rho(x, D(y))-\rho(y, D(x))+\theta(x, y).
\end{equation*}
\end{defi}

\begin{rmk}
\dms may not exists. Consider the quasi-twilled Lie algebra $(\g\ltimes_{\rho,\omega} V, \g, V)$ given in Example \ref{ex:twistedRB} obtained from a representation $\rho$ of $\g$ on $V$ and a $2$-cocycle $\omega$. A linear map $D:\g\lon\h$  is a \dm   if and only if
\begin{equation*}
\omega(x, y)=-\Big(\rho(x)D(y)-\rho(y)D(x)-D([x, y]_\g)\Big)=\dM_{\CE}(-D)(x, y), \quad \forall x, y\in\g,
\end{equation*}
where $\dM_{\CE}$ is the corresponding Chevalley-Eilenberg coboundary operator of the Lie algebra $\g$ with coefficients in the representation $(V; \rho)$.
Thus, the quasi-twilled Lie algebra $(\g\ltimes_{\rho,\omega} V, \g, V)$ given in Example \ref{ex:twistedRB} admits  a \dm   if and only if $\omega$ is an exact $2$-cocycle.
\end{rmk}
Let $D:\g\lon\h$ be a linear map. Denote the graph of $D$ by $$\mathrm{Gr}(D)=\{(x, D(x))|x\in\g\}.$$
\begin{pro}\label{pro-mat}
A linear map $D:\g\lon\h$  is a \dm if and only if  $\mathrm{Gr}(D)$ is a subalgebra. In this case  $(\h,\mathrm{Gr}(D))$  is also a matched pair of Lie algebras.
\end{pro}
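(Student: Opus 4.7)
The plan is a direct verification using the decomposition $\Omega = \pi + \rho + \mu + \eta + \theta$ of the bracket on $\huaG$. First, for any $x,y \in \g$, I would compute
\begin{equation*}
[(x,D(x)),(y,D(y))]_{\huaG} = \bigl(\pi(x,y)+\eta(x,D(y))-\eta(y,D(x)),\ \mu(D(x),D(y))+\rho(x,D(y))-\rho(y,D(x))+\theta(x,y)\bigr).
\end{equation*}
By definition, this element lies in $\mathrm{Gr}(D)$ if and only if its second component equals $D$ applied to its first component, which is exactly the defining identity of a $\frkD$-map. Thus $\mathrm{Gr}(D)$ is a subalgebra of $(\huaG,\Omega)$ iff $D$ is a $\frkD$-map, giving the first equivalence.

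For the matched pair statement, I would first check that $\huaG = \h \oplus \mathrm{Gr}(D)$ as vector spaces: any $(x,u)\in\g\oplus\h$ decomposes uniquely as $(0,u-D(x)) + (x,D(x))$, and the intersection $\h \cap \mathrm{Gr}(D)$ is trivial. Since $\h$ is a Lie subalgebra of $\huaG$ by the hypothesis that $(\huaG,\g,\h)$ is quasi-twilled, and $\mathrm{Gr}(D)$ is a Lie subalgebra by the first half of the proposition, the triple $(\huaG,\h,\mathrm{Gr}(D))$ is a twilled Lie algebra (a quasi-twilled Lie algebra where both complementary subspaces are subalgebras). The standard correspondence between twilled Lie algebras and matched pairs of Lie algebras (obtained by reading off the mutual actions $\h \to \gl(\mathrm{Gr}(D))$ and $\mathrm{Gr}(D) \to \gl(\h)$ from the cross-components of $\Omega$) then yields the matched pair $(\h, \mathrm{Gr}(D))$.

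I do not expect a serious obstacle: both directions reduce to bookkeeping on the components of $\Omega$, and the matched pair statement is the well-known consequence of a Lie algebra splitting into two complementary Lie subalgebras. The only point requiring care is explicitly identifying the two mutual actions from the decomposition of $\Omega$ restricted to $\h$ and $\mathrm{Gr}(D)$, but this is routine once the subalgebra property of $\mathrm{Gr}(D)$ is established. Since the paper has already set up the necessary formalism, I would keep the exposition short and focus on displaying the bracket computation above, letting the equivalence be read off directly.
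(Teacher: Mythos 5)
Your proposal is correct and follows essentially the same route as the paper: the same explicit computation of $\Omega$ on pairs of graph elements, the same reading-off of the \dm{} identity from the condition that the second component be $D$ of the first, and the same appeal to the standard fact that a Lie algebra splitting into two complementary subalgebras yields a matched pair. The paper states the last step in one sentence, so your slightly more explicit verification of the direct-sum decomposition is a harmless elaboration rather than a different argument.
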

\begin{proof}
For all $(x, D(x)), (y, D(y))\in\mathrm{Gr}(D)$, we have
\begin{eqnarray*}
&&\Omega\Big((x, D(x)), (y, D(y))\Big)\\
&=&\Big(\pi(x, y)+\eta(x, D(y))-\eta(y, D(x)), \mu(D(x), D(y))+\rho(x, D(y))-\rho(y, D(x))+\theta(x, y)\Big).
\end{eqnarray*}
Thus, $\mathrm{Gr}(D)$ is a Lie subalgebra of $\huaG$, i.e. $\Omega\Big((x, D(x)), (y, D(y))\Big)\in\mathrm{Gr}(D)$, if and only if
$$
\mu(D(x), D(y))+\rho(x, D(y))-\rho(y, D(x))+\theta(x, y)=D\Big(\pi(x, y)+\eta(x, D(y))-\eta(y, D(x))\Big),
$$
namely $D$ is a \dm of the quasi-twilled Lie algebra $(\huaG, \g, \h)$.

Since $ \huaG=\mathrm{Gr}(D)\oplus\h$, it follows that  $(\h,\mathrm{Gr}(D))$  is  a matched pair of Lie algebras.
\end{proof}

\begin{rmk}
Let $(\huaG, \g, \h)$ be a quasi-twilled Lie algebra. Then we have $\huaG=\g\oplus\h$. Since any compliment of  $\h$ in $\huaG$ is isomorphic to a graph of a linear map $D$ from $\g$ to $\h$. Then by Proposition \ref{pro-mat}, to find a space $V$ which is a compliment of $\h$ in $\huaG$ such that  $(\h,V)$ is a matched pair of Lie algebras  is equivalent to find a \dm of the  quasi-twilled Lie algebra $(\huaG, \g, \h)$.
\end{rmk}

\begin{ex}
  Consider the quasi-twilled Lie algebra  $(\g\oplus_M \g, \g, \g)$ given in  Example \ref{ex:modir}.
In this case, a \dm of $(\g\oplus_M \g, \g, \g)$ is a linear map $D:\g\lon\g$ such that
$$
[D(x), D(y)]_\g-D([D(x), y]_\g+[x, D(y)]_\g)=-\lambda[x, y]_\g.
$$

Note that this equation is called the {modified Yang-Baxter equation} by Semenov-Tian-Shansky in the seminal work \cite{STS}, whose solutions are called    {\bf modified $r$-matrices} on the Lie algebra $(\g, [\cdot, \cdot]_\g)$.
\end{ex}

\begin{rmk}
Modified $r$-matrices   play an important role in
studying solutions of Lax equations \cite{RS1, STS,STS2}. Furthermore,  modified $r$-matrices are intimately related to particular factorization problems in
the corresponding Lie algebras and Lie groups. This factorization problem was considered by Reshetikhin and  Semenov-Tian-Shansky in the framework of the enveloping algebra of a Lie algebra with a modified $r$-matrix to study quantum integrable systems \cite{RS88}. Moreover, any modified $r$-matrix induces a post-Lie algebra  \cite{BGN}.
\end{rmk}

\begin{ex}
 Consider the quasi-twilled Lie algebra $(\g\ltimes_\rho \h, \g, \h)$ given in Example \ref{ex:rRB1} obtained from the action Lie algebra $\g\ltimes_\rho \h$.
In this case, a \dm of $(\g\ltimes_\rho\h, \g, \h)$ is a linear map $D:\g\lon\h$ such that
$$
D([x, y]_\g)=\rho(x)D(y)-\rho(y)D(x)+\lambda[D(x), D(y)]_\h,
$$
which   is exactly a {\bf crossed homomorphism of weight $\lambda$} from the Lie algebra  $\g$ to the Lie algebra $\h$ \cite{Lue}.
\end{ex}

\begin{rmk}
Note that crossed homomorphisms of weight $-1$ are   $\varepsilon$-derivations on the Lie algebras, which  play   crucial roles in the Jacobian
conjecture \cite{Zhao-1} and the  Mathieu-Zhao subspace theory \cite{VZ}. On the other hand,   crossed homomorphisms of weight $1$ are deeply related to the representation theory of  Cartan type Lie algebras \cite{PSTZ} and post-Lie algebras \cite{MQ}.
\end{rmk}

\begin{ex}
 Consider the quasi-twilled Lie algebra $(\g\ltimes_\rho V, \g, V)$ given in Example \ref{ex:rRB} obtained from the semidirect product Lie algebra $\g\ltimes_\rho V$.
In this case, a \dm   is a linear map  $D:\g\lon V$ such that
$$
D([x, y]_\g)=\rho(x) D(y) -\rho(y)D(x),
$$
which implies that $D$ is a {\bf derivation} from $(\g, [\cdot, \cdot]_\g)$ to  $V$. In particular, if $\rho$ is the adjoint representation of $\g$ on itself, then we obtain the usual derivation.
\end{ex}

\begin{ex}\label{ex:der}
 Consider the quasi-twilled Lie algebra $(\g\oplus \h, \g, \h)$ given in Example \ref{directpro} obtained from the direct product Lie algebra.
In this case, a \dm of $(\g\oplus\h, \g, \h)$ is a linear map $D:\g\lon\h$ such that
$$
D([x, y]_\g)=[D(x), D(y)]_\h,
$$
which   is exactly a {\bf Lie algebra homomorphism} from $(\g, [\cdot, \cdot]_\g)$ to $(\h, [\cdot, \cdot]_\h)$.
\end{ex}

At the end of this subsection, we illustrate the roles that \dms play in the twisting theory.
Let $D: \g\lon\h$ be a linear map. It follows that $D^{2}=0$ and $[\cdot, D]_{\NR}$ is a derivation of the graded Lie algebra $\big(\oplus_{n=0}^{+\infty } \Hom(\wedge^{n+1}(\g\oplus\h), \g\oplus\h),[\cdot,\cdot]_{\NR}\big)$. Then we gain  that $e^{[\cdot, D]_{\NR}}$ is an automorphism of the graded Lie algebra $\big(\oplus_{n=0}^{+\infty} \Hom(\wedge^{n+1}(\g\oplus\h), \g\oplus\h),[\cdot,\cdot]_{\NR}\big)$.

\begin{defi}
Let $D: \g\lon\h$ be a linear map. The transformation $\Omega^{D}\triangleq e^{[\cdot, D]_{\NR}}\Omega$ is called the {\bf twisting} of $\Omega$ by $D$.
\end{defi}

\begin{pro}\label{lemtwist}
With the above notations, we gain that
\begin{eqnarray}\label{twistor-1}
\Omega^D=e^{-D}\circ \Omega \circ (e^{D}\otimes e^{D})
\end{eqnarray}
 is a Lie algebra structure on $\g\oplus\h$ and $e^{D}:(\huaG,\Omega^{D})\lon(\huaG,\Omega)$
is an isomorphism between Lie algebras.
\end{pro}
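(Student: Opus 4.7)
The plan is to establish the three claims in order: the explicit formula \eqref{twistor-1}, the Jacobi identity for $\Omega^D$, and the isomorphism property of $e^D$. The key inputs, both already noted before the proposition, are that $D$ sits in degree $0$ in the graded Lie algebra $(\oplus_n \Hom(\wedge^{n+1}(\g\oplus\h),\g\oplus\h),[\cdot,\cdot]_{\NR})$ and satisfies $D\circ D=0$ (because the lift of a map $\g\to\h$ vanishes on the $\h$-component).

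The main step is the formula \eqref{twistor-1}, which I would verify by direct computation. Since $D^2=0$, the exponentials $e^{\pm D}=\Id\pm D$ are $2$-term sums, so $e^{-D}\circ\Omega\circ(e^D\otimes e^D)$ expands into $2\cdot 2\cdot 2=8$ terms. On the other side, writing $\delta=[\cdot,D]_{\NR}$ and applying the definition \eqref{eq:fgcirc} of the Nijenhuis-Richardson bracket to a degree-$1$ and a degree-$0$ element gives
$\delta(\Omega)(x,y)=\Omega(D(x),y)+\Omega(x,D(y))-D(\Omega(x,y))$.
Iterating $\delta$ and using $D\circ D=0$ to kill terms, I expect $\delta^4(\Omega)=0$, so $e^{\delta}(\Omega)=\Omega+\delta(\Omega)+\tfrac{1}{2}\delta^2(\Omega)+\tfrac{1}{6}\delta^3(\Omega)$ is a finite sum. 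Collecting the integer coefficients produced by the brackets (which, from the shape of the computation, I anticipate to be $1$, $2$, and $-6$ at the three successive stages) and matching them against the $\tfrac{1}{k!}$ weights yields exactly the $8$-term expansion of the right-hand side.

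With \eqref{twistor-1} in hand, the remaining assertions are essentially formal. For the Jacobi identity I invoke the standard fact that $[\cdot,D]_{\NR}$ is a degree-$0$ derivation of the graded Lie algebra, so $e^{[\cdot,D]_{\NR}}$ is an automorphism; applying it to the Maurer-Cartan equation $[\Omega,\Omega]_{\NR}=0$ produces $[\Omega^D,\Omega^D]_{\NR}=0$, which is precisely the Jacobi identity for $\Omega^D$. For the isomorphism, rearranging \eqref{twistor-1} as $e^D\circ\Omega^D=\Omega\circ(e^D\otimes e^D)$ is exactly the homomorphism condition, and $e^D$ is bijective with inverse $e^{-D}$.

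The main obstacle is the bookkeeping in verifying \eqref{twistor-1}: tracking which iterated terms survive the nilpotency $D\circ D=0$ and confirming that the factorial weights from the exponential $e^{\delta}$ cancel against the integer coefficients produced by the NR brackets. Because the series terminates after three iterations, however, the calculation is finite and can be carried out term-by-term rather than by any clever algebraic manipulation.
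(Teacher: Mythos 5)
Your proposal is correct and follows essentially the same route as the paper: the Jacobi identity for $\Omega^D$ comes from the fact that $e^{[\cdot,D]_{\NR}}$ is an automorphism of the graded Lie algebra applied to $[\Omega,\Omega]_{\NR}=0$, the formula \eqref{twistor-1} comes from expanding the (terminating, by $D\circ D=0$) exponential series, and the isomorphism statement is a rearrangement of \eqref{twistor-1}. The only difference is that the paper outsources the finite expansion to the analogous computations in the cited references, whereas you carry it out explicitly; your sketch of that computation (termination at $\delta^3$, coefficients $1$, $\pm 2$, $-6$ cancelling the $1/k!$ weights to give the eight-term expansion of $e^{-D}\circ\Omega\circ(e^D\otimes e^D)$) checks out.
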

\begin{proof}
Since $e^{[\cdot, D]_{\NR}}$ is an automorphism of the graded Lie algebra $\big(\oplus_{n=0}^{+\infty} \Hom(\wedge^{n+1}(\h\oplus\g), \h\oplus\g),[\cdot,\cdot]_{\NR}\big)$, we deduce that $\Omega^{D}$ is a Lie algebra structure on $\g\oplus\h$. Moreover,
by the similar computation in \cite{TS,Uc},  we obtain  that $\Omega^D=e^{-D}\circ \Omega \circ (e^{D}\otimes e^{D})$ and $e^{D}:(\huaG,\Omega^{D})\lon(\huaG,\Omega)$
is an isomorphism between Lie algebras.
\end{proof}

\begin{thm}\label{protwistd}
Let $(\huaG, \h, \g)$ be a quasi-twilled Lie algebra and $D:\g\to\h$ a linear map. Then  $((\huaG,\Omega^{D}),\g,\h)$ is a quasi-twilled Lie algebra. Moreover, write $\Omega^{D}=\pi^{D}+\rho^{D}+\mu^{D}+\eta^{D}+\theta^{D}$, $\pi^{D}\in\Hom(\wedge^{2}\g, \g),~ \rho^{D}\in\Hom(\g\otimes\h, \h),~ \mu^{D}\in\Hom(\wedge^{2}\h, \h),~ \eta^{D}\in\Hom(\g\otimes\h, \g)$ and $\theta^{D}\in\Hom(\wedge^{2}\g, \h)$.  We have
\begin{eqnarray*}
\pi^{D}(x, y)&=&\pi(x, y)+\eta(x, D(y))-\eta(y, D(x)),\\
\rho^{D}(x, v)&=&\rho(x, v)+\mu(D(x), v)-D(\eta(x, v)),\\
\mu^{D}(u,v)&=&\mu(u,v),\\
\eta^{D}(x,v)&=&\eta(x,v),\\
\theta^{D}(x,y)&=&\theta(x, y)+\rho(x, D(y))-\rho(y, D(x))-D(\pi(x, y))\\
&&+\mu(D(x), D(y))-D(\eta(x, D(y)))+D(\eta(y, D(x))),
\end{eqnarray*}
for all $x, y\in\g, u,v\in\h$.

Consequently, $D:\g\lon\h$ is a \dm if and only if the Lie algebras $(\g, \pi^D)$ and $(\h, \mu)$ form a matched pair of Lie algebras.
\end{thm}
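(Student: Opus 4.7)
The plan is to establish both claims by directly computing $\Omega^D$ from the formula $\Omega^D = e^{-D}\circ\Omega\circ(e^D\otimes e^D)$ given in Proposition \ref{lemtwist}. Extending $D:\g\to\h$ to $\g\oplus\h$ by zero on $\h$, we have $D\circ D = 0$, so the exponentials truncate to $e^D = \Id + D$ and $e^{-D} = \Id - D$, with $e^D(x,u) = (x,u+D(x))$.

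First I would evaluate $\Omega\bigl((x,u+D(x)),(y,v+D(y))\bigr)$ by expanding $\Omega = \pi+\rho+\mu+\eta+\theta$ and using bilinearity, then apply $e^{-D}$, which subtracts $D$ of the $\g$-component from the $\h$-component. Sorting the resulting terms by bidegree in $\g$ and $\h$ recovers the five claimed formulas; in particular the antisymmetry of $\mu$ lets one fold $\mu(u,D(y))$ into $-\mu(D(y),u)$ when identifying $\rho^D$. Crucially $\mu^D = \mu$, so $\h$ remains a subalgebra under $\Omega^D$, whence $((\huaG,\Omega^D),\g,\h)$ is a quasi-twilled Lie algebra.

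For the equivalence, the resulting expression for $\theta^D$,
\[\theta^D(x,y) = \mu(D(x),D(y)) + \rho(x,D(y)) - \rho(y,D(x)) + \theta(x,y) - D\bigl(\pi(x,y) + \eta(x,D(y)) - \eta(y,D(x))\bigr),\]
shows that $\theta^D \equiv 0$ is literally the \dm equation. Since $\Omega^D$ is a Lie bracket by Proposition \ref{lemtwist}, its five components satisfy the system \eqref{111}. Setting $\theta^D = 0$ in \eqref{111} reduces the seven identities to the Jacobi identities for $(\g,\pi^D)$ and $(\h,\mu)$, the representation axioms for $\eta$ and for $\rho^D$, and the two cross-compatibility identities, which under the dictionary $\rho^D(x)v = \rho^D(x,v)$, $\eta(u)y = -\eta(y,u)$ of Example \ref{ex:mch} are exactly \eqref{eq:mp1} and \eqref{eq:mp2}. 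The converse direction is immediate, since a matched pair structure on $(\g,\h;\rho^D,\eta)$ builds the bracket $[\cdot,\cdot]_{\bowtie}$ on $\g\oplus\h$, which agrees with $\Omega^D$ precisely when $\theta^D = 0$.

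The main obstacle is just the bookkeeping in the bilinear expansion producing the five component formulas; once those are in hand, the equivalence with matched pairs is formal from the established dictionary \eqref{111}, and no new conceptual input is needed beyond Proposition \ref{lemtwist}.
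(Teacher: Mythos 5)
Your proposal is correct and follows essentially the same route as the paper: both compute $\Omega^{D}=e^{-D}\circ\Omega\circ(e^{D}\otimes e^{D})$ with $e^{D}=\Id+D$ (the paper evaluates on the pairs $((x,0),(y,0))$, $((x,0),(0,v))$, $((0,u),(0,v))$ rather than on a general pair, but this is the same bilinear bookkeeping), observe that $\h$ is preserved, and read off the \dm{} condition as $\theta^{D}=0$. Your extra discussion of the matched-pair equivalence via the system \eqref{111} is more explicit than the paper, which leaves that final "consequently" implicit; just note that the converse should be read as the matched pair being realized by $\Omega^{D}$ itself (i.e.\ $\g$ being a subalgebra of $(\huaG,\Omega^D)$), which is exactly $\theta^{D}=0$.
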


\begin{proof}
By \eqref{twistor-1}, we have
\begin{equation*}
\Omega^{D}((0, u), (0, v))=e^{-D}\Omega(e^{D}(0, u), e^{D}(0, v))=\Omega((0, u), (0, v))\in\h.
\end{equation*}
Thus $\h$ is a Lie subalgebra of $(\huaG,\Omega^{D})$, which implies that  $((\huaG,\Omega^{D}),\g,\h)$ is a quasi-twilled Lie algebra.

For all $x, y\in\g, u, v\in\h$, by \eqref{twistor-1}, we have
\begin{eqnarray*}
&&(\pi^{D}(x, y), \theta^{D}(x, y))=\Omega^{D}\Big((x, 0), (y, 0)\Big)\\
&=&\Big(\pi(x, y)+\eta(x, D(y))-\eta(y, D(x)), \theta(x, y)+\rho(x, D(y))-\rho(y, D(x))-D(\pi(x, y))\\
&&+\mu(D(x), D(y))-D(\eta(x, D(y)))+D(\eta(y, D(x)))\Big),
\end{eqnarray*}
and
\begin{eqnarray*}
(\eta^{D}(x, v), \rho^{D}(x, v))&=&\Omega^{D}\Big((x, 0), (0, v)\Big)\\
&=&(\eta(x, v), \rho(x, v)+\mu(D(x), v)-D(\eta(x, v))),
\end{eqnarray*}
which completes the proof.
 \end{proof}

 Apply the above result to the quasi-Lie bialgebra $(\g, \pi,\mu, \theta)$, a \dm gives rise to a Lie bialgebra.

\begin{pro}
Let $D:\g\lon\g^*$ be a \dm of the quasi-twilled Lie algebra $(\g\oplus\g^*, \g, \g^*)$  obtained from the quasi-Lie bialgebra $(\g,\pi,\mu, \theta)$ such that $D=-D^*$. Then $((\g,\pi^D),(\g^*,\mu^D))$ is a Lie bialgebra.
\end{pro}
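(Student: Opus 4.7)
The plan is to realize $(\g\oplus\g^*,\Omega^{D})$ as a Manin triple and then invoke the classical Drinfeld correspondence between Manin triples and Lie bialgebras. Everything we need is essentially already supplied by Theorem~\ref{protwistd} and Proposition~\ref{lemtwist}; the only extra input is the skew-symmetry condition $D=-D^*$, which will guarantee that the canonical invariant pairing survives the twist.

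First I would note that for the quasi-twilled Lie algebra $(\g\oplus\g^*,\g,\g^*)$ coming from the quasi-Lie bialgebra $(\g,\pi,\mu,\theta)$, the $\frkD$-map equation for $D$ is exactly the vanishing of the component $\theta^{D}$ appearing in Theorem~\ref{protwistd}. Together with $\mu^{D}=\mu$, and with $\g^*$ being a Lie subalgebra automatically, this makes $\g$ into a second Lie subalgebra of $(\huaG,\Omega^{D})$ with bracket $\pi^{D}$, so $\Omega^{D}$ splits $\g\oplus\g^*$ into two complementary Lie subalgebras.

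Next I would invoke the canonical non-degenerate symmetric bilinear form
\[
\pair{(x,\xi),(y,\eta)}_{+}:=\pair{\xi,y}+\pair{\eta,x}
\]
on the double of a quasi-Lie bialgebra, which is $\Omega$-invariant by the standard big-bracket construction. Since $e^{D}:(\huaG,\Omega^{D})\to(\huaG,\Omega)$ is a Lie algebra isomorphism by Proposition~\ref{lemtwist}, the pullback of $\pair{\cdot,\cdot}_{+}$ along $e^{D}$ is $\Omega^{D}$-invariant. A direct check gives
\[
\pair{e^{D}(x,0),e^{D}(y,0)}_{+}=\pair{D(x),y}+\pair{D(y),x},\qquad \pair{e^{D}(0,\xi),e^{D}(0,\eta)}_{+}=0,
\]
with the mixed component unchanged, so $e^{D}$ is an isometry for $\pair{\cdot,\cdot}_{+}$ precisely when $D=-D^*$. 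Hence under our hypothesis the pulled-back form agrees with $\pair{\cdot,\cdot}_{+}$ itself, and $\pair{\cdot,\cdot}_{+}$ is $\Omega^{D}$-invariant.

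Assembling the pieces, $(\g\oplus\g^*,\Omega^{D},\g,\g^*,\pair{\cdot,\cdot}_{+})$ is a quadratic Lie algebra with two Lagrangian Lie subalgebras, i.e.\ a Manin triple, so by the Drinfeld correspondence we obtain a Lie bialgebra on $\g$ whose Lie bracket is $\pi^{D}$ and whose cobracket is dual to $\mu=\mu^{D}$, which is exactly the desired statement. The main obstacle will be the bookkeeping in the third step: one must track carefully how the twisting interacts with the canonical pairing and check that skew-symmetry of $D$ is precisely the condition needed for $e^{D}$ to be an isometry; everything else then reduces to citing Theorem~\ref{protwistd} and the Manin-triple/Lie-bialgebra dictionary.
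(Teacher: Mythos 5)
Your proposal is correct and follows essentially the same route as the paper: both first use Theorem \ref{protwistd} to get $\theta^{D}=0$, so that $\g$ and $\g^*$ become complementary Lie subalgebras of $(\huaG,\Omega^{D})$, and both then feed the skew-symmetry $D=-D^*$ into the standard double/Manin-triple picture to conclude. The only difference is that the paper compresses the second step into the phrase ``by $D=-D^*$ we can deduce,'' whereas you make it explicit by checking that $e^{D}$ is an isometry of the canonical pairing exactly when $D=-D^*$ — a welcome elaboration, not a divergence.
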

\begin{proof}
By Theorem \ref{protwistd},  $\theta^D=0$.
Thus $((\g,\pi^D),(\g^*,\mu^D))$ is a matched pair of Lie algebras.
Moreover, by $D=-D^*$, we can deduce that $((\g,\pi^D),(\g^*,\mu^D))$ is a Lie bialgebra.
\end{proof}

\subsection{The controlling algebra of \dms}

In this subsection, we give the controlling algebra of deformation maps of type I, which is  a curved $L_\infty$-algebra. An important byproduct is the controlling algebra of modified $r$-matrices, which is totally unknown before.

\begin{defi}\rm(\cite{KS})
Let $\g=\oplus_{k\in\mathbb Z}\g^k$ be a $\mathbb Z$-graded vector space. A {\bf  curved $L_\infty$-algebra} is a $\mathbb Z$-graded vector space $\g$ equipped with a collection $(k\ge 0)$ of linear maps $l_k:\otimes^k\g\lon\g$ of degree $1$ with the property that, for any homogeneous elements $x_1,\cdots, x_n\in \g$, we have
\begin{itemize}\item[\rm(i)]
{\em (graded symmetry)} for every $\sigma\in S_{n}$,
\begin{eqnarray*}
l_n(x_{\sigma(1)},\cdots,x_{\sigma(n)})=\varepsilon(\sigma)l_n(x_1,\cdots,x_n),
\end{eqnarray*}
\item[\rm(ii)] {\em (generalized Jacobi identity)} for all $n\ge 0$,
\begin{eqnarray*}\label{sh-Lie}
\sum_{i=0}^{n}\sum_{\sigma\in  S(i,n-i) }\varepsilon(\sigma)l_{n-i+1}(l_i(x_{\sigma(1)},\cdots,x_{\sigma(i)}),x_{\sigma(i+1)},\cdots,x_{\sigma(n)})=0,
\end{eqnarray*}

\end{itemize}where $\varepsilon(\sigma)=\varepsilon(\sigma;x_1,\cdots, x_n)$ is the   Koszul sign for a permutation $\sigma\in S_n$ and $x_1,\cdots, x_n\in \g$.
\end{defi}
We denote a curved $L_\infty$-algebra by $(\g,\{l_k\}_{k=0}^{+\infty})$. A curved $L_\infty$-algebra $(\g,\{l_k\}_{k=0}^{+\infty})$ with $l_0=0$ is exactly an $L_\infty$-algebra~\cite{LS}.

\begin{defi}
Let $(\g, \{l_k\}_{k=0}^{+\infty})$ be a curved $L_\infty$-algebra. A {\bf Maurer-Cartan element} is a degree $0$ element $x$ satisfying
\begin{equation*}
 l_0+\sum_{k=1}^{+\infty}\frac{1}{k!}l_k(x, \cdots, x)=0.
\end{equation*}
\end{defi}

Let $x$ be a Maurer-Cartan element of a curved $L_\infty$-algebra $(\g, \{l_k\}_{k=0}^{+\infty})$. Define $l_{k}^{x}:\otimes^{k} \g\lon \g ~~(k\geq1)$ by
\begin{equation*}
l_{k}^{x}(x_1,\cdots,x_k)=\sum_{n=0}^{+\infty}\frac{1}{n!}l_{k+n}(\underbrace{x,\cdots,x}_{n},x_1,\cdots, x_k).
\end{equation*}

\begin{thm}\label{twistLin}{\rm(\cite{DSV,Get})}
With the above notation,  $(\g,\{l_k^{x}\}_{k=1}^{+\infty})$ is an $L_\infty$-algebra which is called the twisted $L_\infty$-algebra by $x$.
\end{thm}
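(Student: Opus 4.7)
The plan is to verify the $L_\infty$-axioms for $(\g, \{l_k^x\}_{k\geq 1})$ in turn. First, by the Maurer-Cartan equation,
$$l_0^x \;=\; \sum_{n=0}^{+\infty}\frac{1}{n!}\, l_n(\underbrace{x,\ldots,x}_{n}) \;=\; 0,$$
so the twisted structure genuinely has no curvature and is a candidate $L_\infty$-algebra. Second, graded symmetry of each $l_k^x$ in its $k$ arguments is inherited term-by-term from graded symmetry of $l_{k+n}$: since $x$ has degree $0$ and occupies the first $n$ slots, permuting $x_1,\ldots,x_k$ produces the same Koszul sign in $l_{k+n}(\underbrace{x,\ldots,x}_{n}, x_{\sigma(1)},\ldots,x_{\sigma(k)})$ as in $l_k^x(x_{\sigma(1)},\ldots,x_{\sigma(k)})$.

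The substantive step is the generalized Jacobi identity. I would substitute the defining series
$$l_i^x(y_1,\ldots,y_i) \;=\; \sum_{a\geq 0}\frac{1}{a!}\, l_{i+a}(\underbrace{x,\ldots,x}_{a}, y_1,\ldots,y_i)$$
into the Jacobi identity for $\{l_k^x\}$, obtaining a sum indexed by $0\leq i\leq n$, shuffles $\sigma\in S(i,n-i)$, and integers $a,b\geq 0$ counting copies of $x$ prepended to the inner and outer brackets. Collecting all terms with fixed total $m=a+b$, the identity $\frac{1}{a!\,b!}=\frac{1}{m!}\binom{m}{a}$ combined with the factorization of each $(a+i,\,b+n-i)$-shuffle of $(\underbrace{x,\ldots,x}_{m}, x_1,\ldots,x_n)$ into $\binom{m}{a}$ placements of the $x$'s times an $(i,n-i)$-shuffle of $(x_1,\ldots,x_n)$ -- all with matching Koszul signs, since $|x|=0$ makes the $x$'s sign-transparent -- reorganizes the collected terms into $\frac{1}{m!}$ times the generalized Jacobi identity of the original curved $L_\infty$-algebra evaluated on the enlarged tuple $(\underbrace{x,\ldots,x}_{m}, x_1,\ldots,x_n)$. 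Each such identity vanishes, so the total vanishes as required.

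The main obstacle is the sign-and-factorial bookkeeping in this reorganization, which is standard but intricate. A more conceptual alternative, followed in the cited references \cite{DSV,Get}, proceeds via the coalgebra formulation: a curved $L_\infty$-structure on $\g$ corresponds to a degree-one codifferential $Q$ on the cofree cocommutative coalgebra $S^c(\g[1])$ with $Q^2=0$, the Maurer-Cartan element $x$ determines the group-like element $e^x$, and the twisted operations $l_k^x$ are precisely the Taylor coefficients of the conjugate codifferential $Q^x$ obtained from $Q$ via pre-multiplication with $e^x$. In this language the Maurer-Cartan equation is exactly $l_0^x=0$, while $(Q^x)^2=0$ is automatic from $Q^2=0$ and simultaneously encodes both the symmetry and the Jacobi identities for $\{l_k^x\}_{k\geq 1}$.
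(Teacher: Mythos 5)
The paper offers no proof of this theorem: it is quoted from the cited references \cite{DSV,Get}, so there is no internal argument to compare against. Your proposal is a correct and complete account of the standard direct verification. The three steps are each sound: $l_0^x=0$ is literally the Maurer--Cartan equation; graded symmetry of $l_k^x$ follows because the prepended copies of $x$ sit in fixed slots and have degree $0$, so the Koszul sign of a permutation of $(x_1,\dots,x_k)$ is unchanged; and the generalized Jacobi identity for $\{l_k^x\}$ does reorganize, via $\tfrac{1}{a!\,b!}=\tfrac{1}{m!}\binom{m}{a}$ and the factorization of shuffles of $(x,\dots,x,x_1,\dots,x_n)$ through shuffles of $(x_1,\dots,x_n)$ (with $\binom{m}{a}$ placements of the sign-transparent $x$'s), into $\sum_m \tfrac{1}{m!}$ times the curved Jacobi identity of the original algebra on the enlarged tuple. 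One small point worth noting in that reorganization: the $i=0$ terms of the twisted identity assemble to $\sum_b\tfrac{1}{b!}\,l_{b+n+1}(l_0^x,x,\dots,x,x_1,\dots,x_n)$, which vanishes precisely because $x$ is Maurer--Cartan, so the bookkeeping closes up consistently. Your coalgebra alternative (conjugating the codifferential on $\Sym^c(\g[1])$ by the automorphism attached to $x$) is indeed the route taken in \cite{DSV,Get}. The only issue you leave untouched is convergence of the defining series, but the paper itself defers that to the weakly filtered hypothesis in the remark immediately following the theorem, so this is not a gap relative to the paper's standing assumptions.
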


\begin{rmk}To ensure the convergence of the series appearing in the definition of Maurer-Cartan elements and Maurer-Cartan twistings above, one need the $L_\infty$-algebra being {filtered} given by Dolgushev and Rogers in \cite{Dolgushev-Rogers}, or weakly filtered given in \cite{LST}. Since all the 	$L_\infty$-algebras under consideration in the sequel satisfy the weakly filtered condition, so we will not mention this point anymore.
\end{rmk}

We recall Voronov's derived bracket construction \cite{Vo}, which is a powerful method for constructing a curved $L_\infty$-algebra.

\begin{defi}\rm(\cite{Vo})
A {\bf curved $V$-data} consists of a quadruple $(L,F,P,\Delta)$, where
\begin{itemize}
\item[$\bullet$] $(L=\oplus L^i,[\cdot,\cdot])$ is a graded Lie algebra,
\item[$\bullet$] $F$ is an abelian graded Lie subalgebra of $(L,[\cdot,\cdot])$,
\item[$\bullet$] $P:L\lon L$ is a projection, that is $P\circ P=P$, whose image is $F$ and kernel is a  graded Lie subalgebra of $(L,[\cdot,\cdot])$,
\item[$\bullet$] $\Delta$ is an element in $L^1$ such that $[\Delta,\Delta]=0$.
\end{itemize}
When $\Delta\in\ker(P)^{1}$ such that $[\Delta, \Delta]=0$, we refer to $(L,F,P,\Delta)$ as a {\bf $V$-data}.
\end{defi}

\begin{thm}\rm(\cite{Vo})\label{cV}
Let $(L,F,P,\Delta)$ be a curved $V$-data. Then  $(F,\{l_k\}_{k=0}^{+\infty})$ is a curved $L_\infty$-algebra, where $l_k$ are given by
\begin{equation*}
l_0=P(\Delta), \quad l_k(x_1, \cdots, x_n)=P([\cdots[[\Delta, x_1], x_2], \cdots, x_n]).
\end{equation*}
\end{thm}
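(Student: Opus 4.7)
The plan is to verify the two axioms of a curved $L_\infty$-algebra for the maps $l_k$ defined by the derived bracket formula. I will work throughout with the graded Lie bracket of $L$ and exploit all three pieces of structure in the hypothesis: that $F$ is abelian, that $\ker P$ is a Lie subalgebra, and that $[\Delta,\Delta]=0$.

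First, I would establish graded symmetry of $l_k$ for $k\geq 1$. The key point is that the graded Jacobi identity in $L$ allows one to swap consecutive inputs $x_i$ and $x_{i+1}$ inside the nested bracket $[\cdots[\Delta,x_1],\ldots,x_k]$ at the cost of the Koszul sign plus a correction term containing the bracket $[x_i,x_{i+1}]$. Since both $x_i,x_{i+1}\in F$ and $F$ is an abelian subalgebra of $L$, this correction vanishes, which propagates the graded symmetry through the whole nested expression. Applying $P$ then gives the graded symmetry of $l_k$; for arbitrary permutations one iterates adjacent transpositions.

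Second, I would verify the generalized Jacobi identity. For the outer expression $l_{n-i+1}\bigl(l_i(x_{\sigma(1)},\ldots,x_{\sigma(i)}),x_{\sigma(i+1)},\ldots,x_{\sigma(n)}\bigr)$, the element $l_i(\ldots)=P(\xi_i)$ sits in $F$, where $\xi_i$ denotes the full nested bracket before projection. The idea is to replace $l_i(\ldots)$ by $\xi_i$ itself: the difference $\xi_i-l_i(\ldots)$ lies in $\ker P$, and since $\ker P$ is a subalgebra of $L$, after nesting with the remaining $x_j$'s and one more copy of $\Delta$ the difference stays in $\ker P$, hence is annihilated by the outermost $P$. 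Once this substitution is made, the Jacobi sum over $i$ and shuffles $\sigma\in S(i,n-i)$ reorganises into $P$ of a sum of doubly-nested brackets involving two copies of $\Delta$ together with $x_1,\ldots,x_n$; repeated use of the graded Jacobi identity in $L$ brings the two $\Delta$'s together and exhibits the whole sum as a bracket containing the factor $[\Delta,\Delta]=0$. The $n=0$ case of the axiom, $l_1(l_0)+\tfrac12 l_2(l_0,l_0)+\cdots=0$, is handled by the same device applied with the $\Delta$-derived bracket acting on $P(\Delta)$.

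The main obstacle is a clean accounting of signs and shuffles: the Koszul signs $\varepsilon(\sigma)$ in the $L_\infty$-axiom must conspire exactly with the signs generated by iterated graded Jacobi when migrating one copy of $\Delta$ through the string of inputs to meet the other copy. A conceptually tidy route, following Voronov, is to pass to a formal variable $x\in F$ (in degree $0$) and form the generating series
\[
l_0+\sum_{k\ge 1}\tfrac{1}{k!}\,l_k(x,\ldots,x)\;=\;P\bigl(e^{-\mathrm{ad}_x}(\Delta)\bigr).
\]
Since $\mathrm{ad}_x$ is an inner derivation of $L$, the identity $[\Delta,\Delta]=0$ is transported to $[e^{-\mathrm{ad}_x}\Delta,\, e^{-\mathrm{ad}_x}\Delta]=0$; projecting by $P$ and using that $F$ is abelian and $\ker P$ is a subalgebra, the polynomial coefficient of $x^{\otimes n}$ in this vanishing identity is exactly the $n$-th generalised Jacobi relation for $\{l_k\}$. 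This packages the entire combinatorics into a single Maurer-Cartan-type equation and bypasses explicit shuffle bookkeeping altogether.
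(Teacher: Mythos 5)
First, note that the paper does not prove this theorem at all --- it is imported from Voronov \cite{Vo} --- so your attempt can only be measured against the standard argument. Your treatment of graded symmetry is fine, and your overall strategy (reduce the generalized Jacobi identity to $[\Delta,\Delta]=0$ via iterated Leibniz expansions) is the right one. The gap is in the step where you pass between $l_i(x_{\sigma(1)},\dots,x_{\sigma(i)})=P(\xi_i)$ and the unprojected bracket $\xi_i$. You justify discarding the difference by saying that $\xi_i-P(\xi_i)\in\ker P$ and that, after nesting with the remaining $x_j$'s and one more copy of $\Delta$, ``the difference stays in $\ker P$.'' That would require $[\Delta,\ker P]\subseteq\ker P$ and $[\ker P,F]\subseteq\ker P$, neither of which follows from the hypotheses: $\ker P$ is a subalgebra, not an ideal, and the other factors in your nested bracket ($\Delta$, which in the curved case has a nonzero component $P\Delta\in F$, and the $x_j\in F$) do not lie in $\ker P$. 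This fails concretely already in the $V$-data of Theorem \ref{VDLI}: there $\pi\in\Hom(\wedge^2\g,\g)\subseteq\ker P$ and $D\in\Hom(\g,\h)\subseteq F$, yet $[\pi,D]_{\NR}=-D\circ\pi\in\Hom(\wedge^2\g,\h)\subseteq F$, so bracketing a kernel element with an element of $F$ lands squarely in $F$. The individual differences you want to annihilate are genuinely nonzero, so the term-by-term substitution is not merely unjustified but false.

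What actually makes the proof work is the identity $P[\alpha,\beta]=P[P\alpha,\beta]+P[\alpha,P\beta]$, which holds because $P[(1-P)\alpha,(1-P)\beta]=0$ ($\ker P$ is a subalgebra) and $[P\alpha,P\beta]=0$ ($F$ is abelian). One expands $0=[\cdots[[\Delta,\Delta],x_1],\cdots,x_n]$ by the graded Leibniz rule into a shuffle sum of brackets $[\xi_\sigma,\eta_\sigma]$ of two derived-bracket expressions (abelianness of $F$ kills the terms where some $x_i$ lands on an $x_j$), applies $P$, and uses the displayed identity: each $P[\xi_\sigma,\eta_\sigma]$ becomes the \emph{pair} of Jacobiator terms $P[P\xi_\sigma,\eta_\sigma]+P[\xi_\sigma,P\eta_\sigma]$, after which $[P\xi_\sigma,\eta_\sigma]=\pm[\cdots[[\Delta,P\xi_\sigma],x_{\cdot}],\cdots]$ because $P\xi_\sigma$ commutes with every $x_j$. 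In other words, the ``error terms'' do not vanish one by one; they cancel only after this symmetric pairing. Your closing generating-series paragraph is the correct packaging --- it amounts to $0=P[e^{-\mathrm{ad}_x}\Delta,e^{-\mathrm{ad}_x}\Delta]=2P\bigl[P(e^{-\mathrm{ad}_x}\Delta),e^{-\mathrm{ad}_x}\Delta\bigr]$ --- but the phrase ``projecting by $P$ and using that $F$ is abelian and $\ker P$ is a subalgebra'' hides exactly the decomposition above, which is the one place the hypotheses do any work and the one place your explicit argument goes wrong. State and use that lemma (and polarize the equal-argument identity to recover the multilinear one) and the proof closes.
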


Now we are ready to give the controlling algebra of \dms of a quasi-twilled Lie algebra.

\begin{thm}\label{VDLI}
Let $(\huaG, \g, \h)$ be a quasi-twilled Lie algebra. Then there is a curved $V$-data $(L,F,P,\Delta)$ as follows:
\begin{itemize}
\item[$\bullet$] the graded Lie algebra $(L,[\cdot,\cdot])$ is given by $(\oplus_{n=0}^{+\infty}\Hom(\wedge^{n+1}\g\oplus\h, \g\oplus\h), [\cdot, \cdot]_{\NR})$,
\item[$\bullet$] the abelian graded Lie subalgebra $F$ is given by $\oplus_{n=0}^{+\infty}\Hom(\wedge^{n+1}\g, \h)$,
\item[$\bullet$] $P:L\lon L$ is the projection onto the subspace $F$,
\item[$\bullet$] $\Delta=\pi+\rho+\mu+\eta+\theta$.
\end{itemize}
Consequently, we obtain a curved $L_{\infty}$-algebra $(\oplus_{n=0}^{+\infty}\Hom(\wedge^{n+1}\g, \h), l_0, l_1, l_2)$, where $l_0,l_1,l_2$ are given by
\begin{eqnarray*}
l_0&=&\theta\\
l_1(f)&=&[\pi+\rho, f]_{\NR}\\
l_2(f, g)&=&[[\mu+\eta, f]_{\NR}, g]_{\NR}.
\end{eqnarray*}

Furthermore, a linear map $D:\g\lon\h$ is a \dm of $(\huaG, \g, \h)$ if and only if $D$ is a Maurer-Cartan element of the above curved $L_{\infty}$-algebra.
\end{thm}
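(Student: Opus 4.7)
The plan is to verify the four axioms of a curved $V$-data for the quadruple $(L, F, P, \Delta)$ and then invoke Theorem \ref{cV}. The graded Lie algebra $(L, [\cdot, \cdot]_{\NR})$ is already in hand, and $\Delta = \Omega$ satisfies $[\Delta, \Delta]_{\NR} = 0$ because $\Omega$ is a Lie bracket on $\huaG$. The abelian property of $F$ is transparent: for $f, g \in F$, both take values in $\h$, and any NR-composition substitutes an output into an input slot; since neither $f$ nor $g$ has $\h$-input slots, $f \circ g = g \circ f = 0$. That $\ker P$ is a subalgebra is a bookkeeping argument tracking whether a given multilinear map admits an $\h$-input or produces a $\g$-output---either of these ``defects'' persists under the NR bracket.

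Having the curved $V$-data, Theorem \ref{cV} endows $F$ with $l_k(f_1, \ldots, f_k) = P([\cdots[[\Delta, f_1], f_2], \ldots, f_k]_{\NR})$. One finds $l_0 = P(\Omega) = \theta$ since $\theta$ is the only component of $\Omega$ lying in $F$. For $l_1$, decompose $[\Omega, f]_{\NR}$ into its five pieces: $[\theta, f]$ vanishes because $\theta$ and $f$ both have $\h$-output, while $[\mu, f]$ and $[\eta, f]$ acquire a $\g$-output or an extra $\h$-input and hence lie in $\ker P$; only $[\pi, f]$ and $[\rho, f]$ remain in $F$, so $l_1(f) = [\pi + \rho, f]_{\NR}$. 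For $l_2(f, g)$ the same tracking shows the $[\pi, f], [\rho, f]$ branches now lie in the abelian $F$ and hence vanish against $g$, whereas the extra $\h$-slot/$\g$-output of $[\mu, f], [\eta, f]$ is precisely what the second bracket with $g \in F$ contracts back into $F$, giving $l_2(f, g) = [[\mu + \eta, f], g]_{\NR}$. A third bracket then produces an uncancellable defect, so $l_k \equiv 0$ for $k \ge 3$.

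The Maurer-Cartan equation therefore reduces to $\theta + l_1(D) + \tfrac{1}{2}l_2(D, D) = 0$. A direct evaluation of \eqref{eq:fgcirc} on the degree-$0$ element $D \in \Hom(\g, \h)$ yields
\begin{eqnarray*}
l_1(D)(x, y) &=& \rho(x, D(y)) - \rho(y, D(x)) - D(\pi(x, y)), \\
\tfrac{1}{2} l_2(D, D)(x, y) &=& \mu(D(x), D(y)) - D\big(\eta(x, D(y)) - \eta(y, D(x))\big),
\end{eqnarray*}
and adding $\theta(x, y)$ reproduces exactly the \dm condition stated in the definition. The main obstacle will be the bookkeeping in computing $l_2$, namely certifying which summand of each nested NR bracket actually lands in $F$ versus $\ker P$, together with the sign accounting coming from the factor $(-1)^{(m-1)(n-1)}$ and the shuffle signs in \eqref{eq:fgcirc}; the vanishing of $l_k$ for $k \ge 3$ is the other delicate point, where one has to verify that, after the two brackets that reabsorb the $\mu$- or $\eta$-defect, the result is already in the abelian part $F$ so that any further bracket dies.
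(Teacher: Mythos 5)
Your proposal is correct and follows essentially the same route as the paper: verify the curved $V$-data axioms, apply Voronov's higher derived brackets (Theorem \ref{cV}), identify $l_0,l_1,l_2$ by tracking which components of the nested Nijenhuis--Richardson brackets survive the projection $P$, observe that $l_k=0$ for $k\geq 3$ since $[[\mu+\eta,f]_{\NR},g]_{\NR}$ already lands in the abelian subalgebra $F$, and match the Maurer--Cartan equation with the defining identity of a \dm. The explicit evaluations of $l_1(D)$ and $\tfrac{1}{2}l_2(D,D)$ agree with those in the paper's proof.
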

\begin{proof}
It is obvious  that $F$ is an abelian graded Lie subalgebra of $L$. Since $P$ is the projection onto $F$, it is obvious that $P^2=P$. Moreover, the kernel of $P$ is a graded Lie subalgebra of $L$. Thus $(L,F,P,\Delta)$ is a curved $V$-data. By Theorem \ref{cV},  we obtain a curved  $L_{\infty}$-algebra $(F, \{l_k\}_{k=1}^{+\infty})$, where $l_k$ are given by
\begin{equation*}
l_k(f_1, \cdots, f_n)=P([\cdots[[\Delta, f_1]_{\NR}, f_2]_{\NR}, \cdots, f_n]_{\NR}).
\end{equation*}

By Theorem \ref{cV}, we have $l_0=P(\Delta)=\theta$ and
\begin{eqnarray*}
l_1(f)&=&P([\pi+\rho+\mu+\eta+\theta, f]_{\NR})=[\pi+\rho, f]_{\NR},\\
l_2(f, g)&=&P([[\pi+\rho+\mu+\eta+\theta, f]_{\NR}, g]_{\NR})=[[\mu+\eta, f]_{\NR}, g]_{\NR},
\end{eqnarray*}
where $f\in\Hom(\wedge^{n}\g, \h), g\in\Hom(\wedge^{m}\g, \h)$. Since $F$ is   abelian and
$$
[[\pi+\rho+\mu+\eta+\theta, f]_{\NR}, g]_{\NR}\in\Hom(\wedge^{n+m}\g, \h),
$$
we have $l_k=0$ for all $k\geq 3$.
 Moreover, we have
\begin{eqnarray*}
  &&l_0(x, y)+l_1(D)(x, y)+\frac{1}{2}l_2(D, D)(x, y)\\
  &=&\theta(x, y)+[\pi+\rho, D]_{\NR}(x, y)+[[\mu+\eta, D]_{\NR}, D]_{\NR}(x, y)\\
  &=&\theta(x, y)+\rho(x, D(y))-\rho(y, D(x))-D(\pi(x, y))+\mu(D(x), D(y))-D(\eta(x, D(y)))+D(\eta(y, D(x))).
\end{eqnarray*}
Thus, $D$ is a Maurer-Cartan element of the curved $L_\infty$-algebra $(\oplus_{n=0}^{+\infty}\Hom(\wedge^{n+1}\g, \h), l_0, l_1, l_2)$ if and only if
$D$ is a \dm of $(\huaG, \g, \h)$. The proof is finished.
\end{proof}

As an immediate application of the above theorem, we obtain  the {\bf controlling algebra of modified $r$-matrices}, namely a curved $L_\infty$-algebra whose Maurer-Cartan elements are modified $r$-matrices.

\begin{cor}\label{cor:conR}
 Consider the quasi-twilled Lie algebra  $(\g\oplus_M \g, \g, \g)$ given in  Example \ref{ex:modir}.
Then $(\oplus_{n=0}^{+\infty}\Hom(\wedge^{n+1}\g, \g), l_0, l_1, l_2)$ is a curved $L_{\infty}$-algebra, where $l_0, l_1$ and $l_2$ are given by $l_0=\lambda[\cdot, \cdot]_\g, l_1=0$,
and
\emptycomment{
\begin{eqnarray*}
&&l_2(f, g)(x_1, \cdots, x_{p+q})\\
&=&(-1)^{p-1}\Big(\sum_{\sigma\in S(q,1,p-1)}-(-1)^{\sigma}f([x_{\sigma(q+1)}, g(x_{\sigma(1)},\cdots, x_{\sigma(q)})]_\g,x_{\sigma(q+2)},\cdots,x_{\sigma(p+q)})\\
&&-(-1)^{pq}\sum_{\sigma\in S(p,1,q-1)}-(-1)^{\sigma}g([x_{\sigma(p+1)}, f(x_{\sigma(1)},\cdots,x_{\sigma(p)})]_\g, x_{\sigma(p+2)},\cdots, x_{\sigma(p+q)}))\\
&&+(-1)^{pq}\sum_{\sigma\in S(p, q)}(-1)^{\sigma}[f(x_{\sigma(1)},\cdots,x_{\sigma(p)}), g(x_{\sigma(p+1)},\cdots,x_{\sigma(p+q)})]_\g
\end{eqnarray*}
}
\begin{eqnarray*}
&&l_2(f, g)(x_1, \cdots, x_{p+q})\\
&=&\sum_{\sigma\in S(q,1,p-1)}(-1)^{p}(-1)^{\sigma}f([x_{\sigma(q+1)}, g(x_{\sigma(1)},\cdots, x_{\sigma(q)})]_\g,x_{\sigma(q+2)},\cdots,x_{\sigma(p+q)})\\
&&-\sum_{\sigma\in S(p,1,q-1)}(-1)^{p(q+1)}(-1)^{\sigma}g([x_{\sigma(p+1)}, f(x_{\sigma(1)},\cdots,x_{\sigma(p)})]_\g, x_{\sigma(p+2)},\cdots, x_{\sigma(p+q)}))\\
&&+\sum_{\sigma\in S(p, q)}(-1)^{p(q+1)}(-1)^{\sigma}[f(x_{\sigma(1)},\cdots,x_{\sigma(p)}), g(x_{\sigma(p+1)},\cdots,x_{\sigma(p+q)})]_\g
\end{eqnarray*}

for all $f\in\Hom(\wedge^{p}\g, \g), g\in\Hom(\wedge^q\g, \g)$.

Moreover, Maurer-Cartan elements of this curved $L_\infty$-algebra are exactly modified $r$-matrices on the Lie algebra $(\g, [\cdot, \cdot]_\g)$.
\end{cor}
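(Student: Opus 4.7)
The plan is to apply Theorem \ref{VDLI} directly to the quasi-twilled Lie algebra $(\g\oplus_M\g,\g,\g)$ of Example \ref{ex:modir}. First I would read off the five components of $\Omega$: by construction $\pi=0$, $\rho=0$, $\mu=[\cdot,\cdot]_\g$, $\eta=[\cdot,\cdot]_\g$, and $\theta=\lambda[\cdot,\cdot]_\g$. Substituting into the general formulas of Theorem \ref{VDLI} immediately gives $l_0=\theta=\lambda[\cdot,\cdot]_\g$ and $l_1(f)=[\pi+\rho,f]_{\NR}=0$, so only $l_2$ requires actual computation.

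For $l_2$, the formula reduces to $l_2(f,g)=[[\mu+\eta,f]_{\NR},g]_{\NR}=2[[\mu,f]_{\NR},g]_{\NR}$ (after accounting for the fact that on the diagonal $\g\subset\g\oplus\g$ the lifts of $\mu$ and $\eta$ coincide, or equivalently by expanding the hat-notation of Section \ref{sec:qtl}). I would then unpack the Nijenhuis-Richardson bracket using \eqref{eq:fgcirc}: for $f\in\Hom(\wedge^p\g,\g)$, the inner bracket $[\mu,f]_{\NR}$ is a sum of a term where $f$ is pre-composed into one slot of $\mu$ (producing the $(p,1,q-1)$-shuffle contribution after the next bracketing) and a term where $\mu$ is pre-composed into one slot of $f$ (producing the $(q,1,p-1)$-shuffle contribution). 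Taking the bracket with $g$ and projecting onto $F=\oplus_n\Hom(\wedge^{n+1}\g,\g)$ (which here is $L$ itself, so $P=\Id$) produces three types of shuffles: the two nested $(r,1,s-1)$-shuffles and the outer $(p,q)$-shuffle from $g\circ f$ and $f\circ g$ cancelling through the double commutator. A careful bookkeeping of the Koszul signs $(-1)^p$, $(-1)^{p(q+1)}$ and $(-1)^\sigma$ then matches the three summands in the statement.

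The last step is to identify Maurer-Cartan elements. For $D\in\Hom(\g,\g)$ the equation $l_0+l_1(D)+\tfrac{1}{2}l_2(D,D)=0$ evaluated on $(x,y)$ gives
\[
\lambda[x,y]_\g+[D(x),D(y)]_\g-D([D(x),y]_\g)-D([x,D(y)]_\g)=0,
\]
which is precisely the modified Yang-Baxter equation, so the Maurer-Cartan elements are exactly modified $r$-matrices. This also follows conceptually from Theorem \ref{VDLI} together with the identification of \dms of $(\g\oplus_M\g,\g,\g)$ with modified $r$-matrices made earlier in Section \ref{sec:dualdef}.

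The main obstacle is purely combinatorial: keeping track of the signs and shuffle types when expanding the nested Nijenhuis-Richardson brackets, since the three terms in the stated formula for $l_2$ arise from different recombinations of the double commutator $[[\mu+\eta,f]_{\NR},g]_{\NR}$, and the asymmetry between $f$ and $g$ in the final expression (one sees $(p,1,q-1)$ versus $(q,1,p-1)$ shuffles with different signs) must be traced carefully back to the Koszul convention built into $[\cdot,\cdot]_{\NR}$. No new ideas are needed beyond Theorem \ref{VDLI}; the corollary is essentially a transcription of its conclusion in this special case.
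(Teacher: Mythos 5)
Your overall strategy is the intended one (the paper offers no separate proof of this corollary; it is meant to follow by specializing Theorem \ref{VDLI} to $\pi=\rho=0$, $\mu=\eta=[\cdot,\cdot]_\g$, $\theta=\lambda[\cdot,\cdot]_\g$), and your readings of $l_0=\theta$, $l_1=0$ and of the Maurer--Cartan equation as the modified Yang--Baxter equation are correct. However, the intermediate identity you assert,
$$l_2(f,g)=[[\mu+\eta,f]_{\NR},g]_{\NR}=2[[\mu,f]_{\NR},g]_{\NR},$$
is false, and the accompanying claim that $F=\oplus_n\Hom(\wedge^{n+1}\g,\g)$ ``is $L$ itself, so $P=\Id$'' is also false. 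In $\g\oplus_M\g$ the two copies of $\g$ play the roles of $\g$ and $\h$ and must be kept distinct: the lift $\hat\mu$ is supported on $\wedge^2\h$ with values in $\h$, while $\hat\eta$ is supported on $\g\otimes\h$ with values in $\g$, so these are different elements of $L$ even though both are ``the bracket $[\cdot,\cdot]_\g$''. Concretely, $\hat f\circ\hat\mu=0$ (because $\hat\mu$ lands in $\h$ and $\hat f$ only consumes arguments from $\g$) and $\hat g\circ(\hat\mu\circ\hat f)=0$, so $[[\mu,f]_{\NR},g]_{\NR}=(\mu\circ f)\circ g$ produces \emph{only} the $(p,q)$-shuffle term $\sum_\sigma\pm[f(\cdots),g(\cdots)]_\g$, with no factor of $2$. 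If you followed your identity literally you would obtain twice that single term and miss the other two summands of the stated formula entirely.

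The two ``insertion'' terms come from $\eta$, not from $\mu$: one has $[\eta,f]_{\NR}=\eta\circ f-(-1)^{p-1}f\circ\eta$ with $\eta\circ f\in\Hom(\wedge^{p+1}\g,\g)$ (values in the \emph{first} copy, hence not in $F$ — this is one place where $P$ genuinely acts) and $f\circ\eta\in\Hom(\wedge^{p}\g\otimes\h,\h)$; bracketing with $g$ and projecting, the term $-(-1)^{p-1}(f\circ\eta)\circ g$ yields the $(q,1,p-1)$-shuffle summand $f([x_{\sigma(q+1)},g(\cdots)]_\g,\cdots)$ and the term $-(-1)^{p(q-1)}\,g\circ(\eta\circ f)$ yields the $(p,1,q-1)$-shuffle summand $g([x_{\sigma(p+1)},f(\cdots)]_\g,\cdots)$. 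So the correct decomposition is
$$l_2(f,g)=\underbrace{(\mu\circ f)\circ g}_{\text{commutator term}}\;-\;(-1)^{p-1}(f\circ\eta)\circ g\;-\;(-1)^{p(q-1)}\,g\circ(\eta\circ f),$$
and the asymmetry between the $f$- and $g$-insertion terms that you flag at the end is exactly the asymmetry between the two terms of $[\eta,f]_{\NR}$. Replace the ``$=2[[\mu,f]_{\NR},g]_{\NR}$'' shortcut by this bidegree bookkeeping and the rest of your argument goes through.
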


\emptycomment{
Let $D:\g\lon\g$ be a modified $r$-matrix on the Lie algebra $(\g, [\cdot, \cdot]_\g)$. By Corollary \ref{cor:conR}, we obtain that $D$ is an Maurer-Cartan element of the curved $L_{\infty}$-algebra  $(\oplus_{n=0}^{+\infty}\Hom(\wedge^{n+1}\g, \g), l_0, l_1, l_2)$. By Theorem \ref{twistLin}, we have the twisted $L_\infty$-algebra structure on $\oplus_{n=0}^{+\infty}\Hom(\wedge^{n+1}\g, \g)$ as following:
\begin{eqnarray}
\label{twist-rota-baxter-1}l_1^{D}(f)&=&l_2(D,f),\\
\label{twist-rota-baxter-2}l_2^{D}(f,g)&=&l_2(f,g),\\
l^D_k&=&0,\,\,\,\,k\ge3,
\end{eqnarray}
where $f\in\Hom(\wedge^{p}\g, \g), g\in\Hom(\wedge^q\g, \g)$. Now  we are ready to give the
  $L_\infty$-algebra that controls deformations of the modified $r$-matrix $D$.

\begin{thm}\label{thm:deformation-mybe}
Let $D:\g\lon\g$ be a modified $r$-matrix on the Lie algebra $(\g, [\cdot, \cdot]_\g)$. Then for a linear map $D':\g\rightarrow \g$, $D+D'$ is a modified $r$-matrix if and only if $D'$ is a Maurer-Cartan element of the twisted $L_\infty$-algebra $(\oplus_{n=0}^{+\infty}\Hom(\wedge^{n+1}\g, \g), l_1^{D}, l_2^{D})$, that is $D'$ satisfies the Maurer-Cartan equation:
$$
l_1^{D}(D')+\frac{1}{2}l_2^{D}(D',D')=0.
$$
\end{thm}
\begin{proof}
By Corollary \ref{cor:conR}, $D+D'$ is a modified $r$-matrix  if and only if
  $$l_0+\frac{1}{2}l_2(D+D',D+D')=0.$$
Since $D$ is a Maurer-Cartan element of the curved $L_{\infty}$-algebra  $(\oplus_{n=0}^{+\infty}\Hom(\wedge^{n+1}\g, \g), l_0, l_1, l_2)$, we deduce that  the above condition is equivalent to
  $$l_2(D,D')+\frac{1}{2}l_2(D',D')=0.$$
That is, $l_1^{D}(D')+\frac{1}{2}l_2^{D}(D',D')=0,$
which implies that $D'$ is a Maurer-Cartan element of the twisted $L_\infty$-algebra $(\oplus_{n=0}^{+\infty}\Hom(\wedge^{n+1}\g, \g), l_1^{D}, l_2^{D})$.
\end{proof}
}

Through Theorem \ref{VDLI}, one can also recover the controlling algebras for crossed homomorphisms, derivations and Lie algebra homomorphisms after the suspension.  \emptycomment{Let $V =\oplus_{i\in\mathbb{Z}}V^i$ be a graded
vector space. Recall that  the suspension operator $s : V\lon sV$ is defined by assigning $V$ to the graded
vector space $s(V) =\oplus_{i\in\mathbb{Z}}(sV)^i$ with $(sV)^i=V^{i-1}$. There is a natural degree $1$ map $s : V\lon sV$
that is the identity map of the underlying vector space, sending $v\in V$ to its suspended copy $sv\in sV$.}

\begin{cor}\label{cor:conch}
Consider the quasi-twilled Lie algebra $(\g\ltimes_\rho \h, \g, \h)$ given in Example \ref{ex:rRB1} obtained from the action Lie algebra $\g\ltimes_\rho \h$.
Then $(\oplus_{n=1}^{+\infty}\Hom(\wedge^{n}\g, \h), \dM, \Courant{\cdot,\cdot})$  is a differential graded Lie algebra, where the differential $\dM$ is given by \begin{eqnarray}
\label{eq:dualdr}\dM(f)(x_1, \cdots, x_{n+1})
&=&\sum_{i=1}^{n+1}(-1)^{n+i}\rho(x_i)f(x_1,\cdots,\hat{x_i}, \cdots, x_{n+1})\\
\nonumber&&+\sum_{i<j}(-1)^{n+i+j-1}f([x_i, x_j]_\g, x_1\cdots, \hat{x_i}, \cdots, \hat{x_j}, \cdots, x_{n+1}),
\end{eqnarray}
for all $f\in\Hom(\wedge^{n}\g, \h)$,
 and the graded Lie bracket $\Courant{\cdot,\cdot}$ is given by
 \begin{eqnarray}\label{eq:duald}
\qquad \Courant{f, g}(x_1, \cdots, x_{p+q})&=&\sum_{\sigma\in S(p, q)}(-1)^{pq+1}(-1)^{\sigma}\lambda[f(x_{\sigma(1)},\cdots,x_{\sigma(p)}), g(x_{\sigma(p+1)},\cdots,x_{\sigma(p+q)})]_\h,
\end{eqnarray}
for all $f \in\Hom(\wedge^{p}\g, \h), g\in\Hom(\wedge^{q}\g, \h)$.  This differential graded Lie algebra is exactly the {\bf controlling algebra for crossed homomorphisms of weight $\lambda$} given in \cite{PSTZ}.
\end{cor}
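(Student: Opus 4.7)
The plan is to specialize Theorem \ref{VDLI} to the quasi-twilled Lie algebra of Example \ref{ex:rRB1}. First I would read off the five components of $\Omega$ from the bracket $[(x,u),(y,v)]_\rho=([x,y]_\g,\rho(x)v-\rho(y)u+\lambda[u,v]_\h)$: this gives $\pi=[\cdot,\cdot]_\g$, $\rho$ equal to the given action, $\mu=\lambda[\cdot,\cdot]_\h$, while both $\eta=0$ (no $\g$-valued mixing term appears) and $\theta=0$ (the bracket of two $\g$-elements has zero $\h$-component).

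With these identifications in hand, Theorem \ref{VDLI} instantly collapses: $l_0=\theta=0$ means the curved $L_\infty$-algebra is an honest $L_\infty$-algebra, the formula $l_2(f,g)=[[\mu+\eta,f]_\NR,g]_\NR$ simplifies to $[[\mu,f]_\NR,g]_\NR$, and we already know from the proof of Theorem \ref{VDLI} that $l_k=0$ for $k\ge 3$. With the standard grading convention (elements of $\Hom(\wedge^{n}\g,\h)$ placed in degree $n-1$), the remaining data $(l_1,l_2)$ is precisely a differential graded Lie algebra structure on $\oplus_{n=1}^{+\infty}\Hom(\wedge^{n}\g,\h)$.

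It then remains to expand $l_1$ and $l_2$ via the Nijenhuis--Richardson formula \eqref{eq:fgcirc} and match the stated expressions. For $f\in\Hom(\wedge^{n}\g,\h)$, the piece $[\pi,f]_\NR$ produces the terms $f([x_i,x_j]_\g,\ldots)$ while $[\rho,f]_\NR$ produces the terms $\rho(x_i)f(\ldots)$; assembling these under the lift convention reproduces \eqref{eq:dualdr}. For $l_2(f,g)$ with $f\in\Hom(\wedge^p\g,\h)$ and $g\in\Hom(\wedge^q\g,\h)$, the inner bracket $[\mu,f]_\NR$ feeds $f$ into one slot of $\mu=\lambda[\cdot,\cdot]_\h$; feeding $g$ into the other slot via the outer NR bracket is the only surviving contribution on purely $\g$-valued inputs, yielding the shuffle sum $\lambda[f(\cdots),g(\cdots)]_\h$ as in \eqref{eq:duald}. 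The Maurer--Cartan equation of this dgLA then coincides with the equation defining a crossed homomorphism of weight $\lambda$, by the very characterization of \dms worked out after Example \ref{ex:rRB1}, so the identification with the controlling algebra of \cite{PSTZ} is automatic.

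The hard part will be the sign bookkeeping: carefully tracking the Koszul signs produced by the NR bracket together with the shuffle signs and the degree shift, so that the signs $(-1)^{n+i}$, $(-1)^{n+i+j-1}$ in \eqref{eq:dualdr} and $(-1)^{pq+1}$ in \eqref{eq:duald} come out exactly as stated rather than differing by a global sign or a suspension-dependent convention. Once this sign analysis is in place, the remaining verification (that the maps so obtained satisfy the dgLA axioms) is subsumed by Theorem \ref{VDLI} and requires no further work.
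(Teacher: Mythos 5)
Your proposal is correct and follows exactly the route the paper intends: the corollary is stated as a direct specialization of Theorem \ref{VDLI} (with $\pi=[\cdot,\cdot]_\g$, $\mu=\lambda[\cdot,\cdot]_\h$, $\eta=\theta=0$, so $l_0=0$ and the curved $L_\infty$-algebra collapses to a dgLA after the degree shift), and the paper gives no further argument beyond this. Your identification of the components, the vanishing of $l_k$ for $k\ge 3$, and the expansion of $l_1,l_2$ via the Nijenhuis--Richardson bracket all match; the only remaining work is the sign/suspension bookkeeping, which you correctly flag.
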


\begin{cor}\label{con-deri}
Consider the quasi-twilled Lie algebra $(\g\ltimes_\rho V, \g, V)$ given in Example \ref{ex:rRB} obtained from the semidirect product Lie algebra $\g\ltimes_\rho V$.
Then $(\oplus_{n=1}^{+\infty}\Hom(\wedge^{n}\g, V), \dM)$ is a cochain complex, where $\dM$ is given by \eqref{eq:dualdr}.
A linear map $D\in\Hom(\g,V)$ is a derivation if and only if $\dM(D)=0$. Therefore, the cochain complex $(\oplus_{n=1}^{+\infty}\Hom(\wedge^{n}\g, V), \dM)$ can be viewed as the {\bf controlling algebra for  derivations} from the Lie algebra $(\g, [\cdot, \cdot]_\g)$ to   $V$. See \cite{TFS} for more details.
\end{cor}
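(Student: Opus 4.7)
The plan is to obtain this corollary as a direct specialization of Theorem \ref{VDLI} to the semidirect product quasi-twilled Lie algebra $(\g\ltimes_\rho V, \g, V)$ of Example \ref{ex:rRB}. In this case the five components of $\Omega$ collapse dramatically: $\pi=[\cdot,\cdot]_\g$ and $\rho$ is the given representation, while $\mu=0$ (since $V$ carries the trivial bracket), $\eta=0$ (since $V$ does not act on $\g$), and $\theta=0$ (there is no cocycle twist).

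Plugging into Theorem \ref{VDLI} immediately yields $l_0=\theta=0$ and $l_2(f,g)=[[\mu+\eta,f]_{\NR},g]_{\NR}=0$, leaving $l_1(f)=[\pi+\rho,f]_{\NR}$ as the only potentially nonzero operation. With $l_0=0$, the curved $L_\infty$ identity at $n=1$ reduces to $l_1\circ l_1=0$, so $(\oplus_{n\geq 1}\Hom(\wedge^n\g,V),l_1)$ is indeed a cochain complex.

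I would then verify that $l_1$ coincides with $\dM$ as written in \eqref{eq:dualdr}. For $f\in\Hom(\wedge^n\g,V)$ lifted to a map on $\wedge^n(\g\oplus V)$, unfolding $[\pi+\rho,f]_{\NR}=(\pi+\rho)\circ f-(-1)^{n-1}f\circ(\pi+\rho)$ produces two families of shuffle terms: $\rho\circ f$ yields the sum of $\rho(x_i)f(x_1,\ldots,\hat{x_i},\ldots,x_{n+1})$ terms, while $f\circ\pi$ yields the sum of $f([x_i,x_j]_\g,x_1,\ldots,\hat{x_i},\ldots,\hat{x_j},\ldots,x_{n+1})$ terms. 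The remaining combinations $\pi\circ f$ and $f\circ\rho$ vanish on pure $\g$-inputs because the lift of $f$ takes values in $V$ while $\pi$ only accepts $\g$-arguments. The one real bookkeeping task is matching the Koszul signs arising from the shuffle combinatorics against the explicit factors $(-1)^{n+i}$ and $(-1)^{n+i+j-1}$ in \eqref{eq:dualdr}; I expect this sign matching to be the main (but routine) obstacle.

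Finally, for $D\in\Hom(\g,V)$, the Maurer-Cartan equation from Theorem \ref{VDLI} reduces to $l_1(D)=\dM(D)=0$, which upon specialization to $n=1$ in \eqref{eq:dualdr} reads $\rho(x)D(y)-\rho(y)D(x)-D([x,y]_\g)=0$. This is precisely the derivation condition extracted from the $\frkD$-map equation in the unnumbered example immediately following Example \ref{ex:rRB}, giving the claimed equivalence between derivations and $1$-cocycles of $\dM$.
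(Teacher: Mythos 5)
Your proposal is correct and follows exactly the route the paper intends: Corollary \ref{con-deri} is a direct specialization of Theorem \ref{VDLI} to the case $\mu=\eta=\theta=0$, where only $l_1=[\pi+\rho,\cdot]_{\NR}$ survives, $l_1^2=0$ follows from the (curved) $L_\infty$ identities (or equivalently from $[\pi+\rho,\pi+\rho]_{\NR}=0$), and the Maurer--Cartan equation collapses to the cocycle condition $\dM(D)=0$, which is the derivation identity. The sign bookkeeping you flag is indeed the only remaining routine check and matches the factor $(-1)^{n-1}$ relating \eqref{eq:dualdr} to the standard Chevalley--Eilenberg differential, consistent with Proposition \ref{deficoofdefor11}.
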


\begin{cor}\label{con-homo}
 Consider the quasi-twilled Lie algebra $(\g\oplus\h, \g, \h)$ given in Example \ref{directpro} obtained from the direct product Lie algebra $\g\oplus\h$. Then $(\oplus_{n=1}^{+\infty}\Hom(\wedge^{n}\g, \h), \dM, \Courant{\cdot,\cdot})$  is a differential graded Lie algebra, where the graded Lie bracket $\Courant{\cdot,\cdot}$ is given by \eqref{eq:duald} and the differential $\dM$ is given by
\begin{eqnarray*}
\dM(f)(x_1, \cdots, x_{p+1})&=&\sum_{i<j}(-1)^{p+i+j-1}f([x_i, x_j]_\g, x_1\cdots, \hat{x}_i, \cdots, \hat{x}_j, \cdots, x_{p+1}),
\end{eqnarray*}
 for all $f \in\Hom(\wedge^{p}\g, \h)$.  This differential graded Lie algebra is exactly the {\bf controlling algebra for Lie algebra homomorphisms}. See \cite{Das, Fre, FZ, NR1} for more details.
\end{cor}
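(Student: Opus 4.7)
The plan is to specialize Theorem \ref{VDLI} to the quasi-twilled Lie algebra $(\g\oplus\h,\g,\h)$ arising from the direct product Lie algebra of Example \ref{directpro}. In the decomposition \eqref{eq:o} of its Lie bracket one has $\eta=\rho=\theta=0$, while $\pi=[\cdot,\cdot]_\g$ and $\mu=[\cdot,\cdot]_\h$. Substituting these into the conclusions of Theorem \ref{VDLI} yields $l_0=\theta=0$, $l_1(f)=[\pi,f]_{\NR}$, $l_2(f,g)=[[\mu,f]_{\NR},g]_{\NR}$, and $l_k=0$ for $k\geq 3$. Since the curvature $l_0$ vanishes, the curved $L_\infty$-algebra on $\oplus_{n=0}^{+\infty}\Hom(\wedge^{n+1}\g,\h)$ reduces to an $L_\infty$-algebra with only unary and binary brackets; after the standard d\'ecalage (viewing $\Hom(\wedge^n\g,\h)$ in degree $n-1$) this is exactly a DGLA, which disposes of the algebraic half of the statement for free.

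Next I would unpack $l_1$ and $l_2$ via the shuffle formula \eqref{eq:fgcirc}. For $f\in\Hom(\wedge^p\g,\h)$, the composition $\hat\pi\circ\hat f$ vanishes because the lift $\hat f$ lands in $\h$ while $\hat\pi$ reads only the $\g$-components of its arguments; hence $l_1(f)=-(-1)^{p-1}f\circ\pi=(-1)^p f\circ\pi$. Expanding $f\circ\pi$ as a sum over $(2,p-1)$-shuffles and computing the inversion count $i+j-3$ for the shuffle with $\sigma(1)=i<j=\sigma(2)$ produces the stated formula for $\dM$ with sign $(-1)^{p+i+j-1}$. For $l_2$, the same vanishing principle forces $\hat f\circ\hat\mu=0$ (since $\hat\mu$ lands in $\h$ but $\hat f$ sees only $\g$-components), so $[\mu,f]_{\NR}=\mu\circ f$; iterating, $g\circ(\mu\circ f)=0$ by the same token, and only $(\mu\circ f)\circ g$ survives. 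A careful accounting of the $(p,1)$- and $(q,p)$-shuffles arising along the way, together with the relabeling from $(q,p)$- to $(p,q)$-shuffles that introduces the expected factor $(-1)^{pq}$, produces the $(p,q)$-shuffle sum in \eqref{eq:duald}, now without the factor $\lambda$ because the target bracket here is simply $[\cdot,\cdot]_\h$.

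Finally, Example \ref{ex:der} identifies the $\frkD$-maps of this quasi-twilled Lie algebra with Lie algebra homomorphisms $\g\to\h$, while Theorem \ref{VDLI} identifies $\frkD$-maps with Maurer--Cartan elements of the constructed algebra. Since $l_0=0$ and $l_k=0$ for $k\geq 3$, the Maurer--Cartan equation collapses to the classical DGLA form $\dM(D)+\frac{1}{2}\Courant{D,D}=0$, so the DGLA indeed controls Lie algebra homomorphisms and recovers the constructions of \cite{Das,Fre,FZ,NR1}. The main obstacle is purely bookkeeping: correctly tracking the shuffle signs through \eqref{eq:fgcirc} so they reproduce the stated signs in $\dM$ and $\Courant{\cdot,\cdot}$; no new idea beyond Theorem \ref{VDLI} is required.
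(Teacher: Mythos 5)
Your proposal is correct and follows exactly the route the paper intends: the corollary is a direct specialization of Theorem \ref{VDLI} to $\pi=[\cdot,\cdot]_\g$, $\mu=[\cdot,\cdot]_\h$, $\eta=\rho=\theta=0$, followed by the suspension, and your vanishing arguments ($\pi\circ f=0$, $f\circ\mu=0$, $g\circ(\mu\circ f)=0$) and shuffle-sign counts all check out. Two small points to fix in the write-up: after the d\'ecalage the DGLA grading places $\Hom(\wedge^{n}\g,\h)$ in degree $n$ (not $n-1$, which is the pre-suspension $L_\infty$-grading, and with which the bracket would have degree $1$ rather than $0$); and the raw bracket computes to $l_2(f,g)=(-1)^{p(q+1)}\sum_{\sigma\in S(p,q)}(-1)^{\sigma}[f(x_{\sigma(1)},\cdots,x_{\sigma(p)}),g(x_{\sigma(p+1)},\cdots,x_{\sigma(p+q)})]_\h$, which agrees with \eqref{eq:duald} only after multiplying by the d\'ecalage sign $(-1)^{p-1}$, so that factor must be carried explicitly rather than left inside the ``careful accounting.''
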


Let $D:\g\lon\h$ is a \dm of $(\huaG, \g, \h)$. By Theorem \ref{VDLI}, we obtain that $D$ is a  Maurer-Cartan element of the curved $L_{\infty}$-algebra  $(\oplus_{n=0}^{+\infty}\Hom(\wedge^{n+1}\g, \h), l_0, l_1, l_2)$. By Theorem \ref{twistLin}, we have the twisted $L_\infty$-algebra structure on $\oplus_{n=0}^{+\infty}\Hom(\wedge^{n+1}\g, \h)$ as following:
\begin{eqnarray}
\label{twist-1}l_1^{D}(f)&=&l_1(f)+l_2(D,f),\\
\label{twist-2}l_2^{D}(f,g)&=&l_2(f,g),\\
l^D_k&=&0,\,\,\,\,k\ge3,
\end{eqnarray}
where $f\in\Hom(\wedge^{p}\g, \h), g\in\Hom(\wedge^q\g, \h)$. Now  we are ready to give the
  $L_\infty$-algebra that controls deformations of the \dm $D$.

\begin{thm}\label{thm:defd1}
Let $D:\g\lon\h$ be a \dm of $(\huaG, \g, \h)$. Then for a linear map $D':\g\rightarrow \h$, $D+D'$ is a \dm if and only if $D'$ is a Maurer-Cartan element of the twisted $L_\infty$-algebra $(\oplus_{n=0}^{+\infty}\Hom(\wedge^{n+1}\g, \h), l^D_1, l^D_2)$, that is $D'$ satisfies the Maurer-Cartan equation:
$$
l_1^{D}(D')+\frac{1}{2}l_2^{D}(D',D')=0.
$$
\end{thm}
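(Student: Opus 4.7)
The plan is to reduce the theorem to a direct expansion of the Maurer-Cartan equation for the curved $L_\infty$-algebra provided by Theorem \ref{VDLI}, combined with the definition of the twisted structure in \eqref{twist-1}--\eqref{twist-2}. By Theorem \ref{VDLI}, the map $D+D':\g\lon\h$ is a \dm of $(\huaG,\g,\h)$ if and only if
\[
l_0+l_1(D+D')+\tfrac{1}{2}l_2(D+D',D+D')=0
\]
in the curved $L_\infty$-algebra $(\oplus_{n=0}^{+\infty}\Hom(\wedge^{n+1}\g,\h),l_0,l_1,l_2)$.

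The first step is to expand this equation. Using the linearity of $l_1$ and the graded symmetry of $l_2$ (which in this case gives $l_2(D,D')=l_2(D',D)$ since $D,D'$ both have degree $0$), one obtains
\[
l_0+l_1(D)+\tfrac{1}{2}l_2(D,D)+l_1(D')+l_2(D,D')+\tfrac{1}{2}l_2(D',D')=0.
\]
Since $D$ is assumed to be a \dm, Theorem \ref{VDLI} tells us that the first three terms vanish, so the equation reduces to
\[
l_1(D')+l_2(D,D')+\tfrac{1}{2}l_2(D',D')=0.
\]

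The second step is to match this with the Maurer-Cartan equation of the twisted $L_\infty$-algebra. By the explicit formulas \eqref{twist-1} and \eqref{twist-2}, we have $l_1^D(D')=l_1(D')+l_2(D,D')$ and $l_2^D(D',D')=l_2(D',D')$. Therefore the reduced equation is precisely
\[
l_1^{D}(D')+\tfrac{1}{2}l_2^{D}(D',D')=0,
\]
which is the Maurer-Cartan equation in $(\oplus_{n=0}^{+\infty}\Hom(\wedge^{n+1}\g,\h),l^D_1,l^D_2)$. All steps are reversible, which gives the desired equivalence.

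There is essentially no hard step here; the substantive content has already been packaged into Theorems \ref{VDLI} and \ref{twistLin}. The only thing to be careful about is the graded symmetry of $l_2$ applied to two degree-zero elements (giving a factor of $2$ on the cross term), and verifying that $l_k^D=0$ for $k\geq 3$, which follows immediately from $l_k=0$ for $k\geq 3$ in the curved structure of Theorem \ref{VDLI}. This also guarantees convergence of the series defining $l_k^D$ trivially, so the weakly filtered condition is automatic.
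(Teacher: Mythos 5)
Your proposal is correct and follows essentially the same route as the paper: expand the Maurer--Cartan equation for $D+D'$ in the curved $L_\infty$-algebra of Theorem \ref{VDLI}, cancel the terms using that $D$ is itself a Maurer--Cartan element, and identify the remainder with the Maurer--Cartan equation of the twisted structure via \eqref{twist-1}--\eqref{twist-2}. The extra remarks on the symmetry of $l_2$ and the vanishing of the higher brackets are fine but not needed beyond what the paper already records.
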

\begin{proof}
By Theorem \ref{VDLI}, $D+D'$ is a \dm  if and only if
  $$l_0+l_1(D+D')+\frac{1}{2}l_2(D+D',D+D')=0.$$
Since $D$ is a Maurer-Cartan element of the curved $L_{\infty}$-algebra  $(\oplus_{n=0}^{+\infty}\Hom(\wedge^{n+1}\g, \h), l_0, l_1, l_2)$, we deduce that  the above condition is equivalent to
  $$l_1(D')+l_2(D,D')+\frac{1}{2}l_2(D',D')=0.$$
That is, $l_1^{D}(D')+\frac{1}{2}l_2^{D}(D',D')=0,$
which implies that $D'$ is a Maurer-Cartan element of the twisted $L_\infty$-algebra $(\oplus_{n=0}^{+\infty}\Hom(\wedge^{n+1}\g, \h), l^D_1, l^D_2)$.
\end{proof}

\emptycomment{
Let $D$ be a \dm  of a quasi-twilled Lie algebra $(\huaG, \g, \h)$. Define $l_{k}^{D}:\otimes^{k} F\lon F ~~(k\geq1)$ by
\begin{equation*}
l_{k}^{D}(f_1,\cdots,f_k)=\sum_{n=0}^{+\infty}\frac{1}{n!}l_{k+n}(\underbrace{D,\cdots,D}_{n},f_1,\cdots, f_k).
\end{equation*}
More precisely, we have
\begin{equation}\label{lemgez}
l_1^D(f)=l_1(f)+l_2(D, f),\quad l_2^D(f, g)=l_2(f, g).
\end{equation}
and $l_k^D=0$ for $k\geq 3$.

\begin{thm}\label{thm:defd1}
With above notations, $(F,\{l_k^{D}\}_{k=1}^{2})$ is an $L_\infty$-algebra.  Moreover,  $D+D'$ is a \dm for a linear map $D':\g\lon\h$ if and only if $D'$ is a Maurer-Cartan element of the $L_{\infty}$-algebra $(\oplus_{n=0}^{+\infty}\Hom(\wedge^{n+1}\g, \h), \{l_k^D\}_{k=1}^{2})$.
\end{thm}
\begin{proof}
By Theorem \ref{twistLin}, $(F,\{l_k^{D}\}_{k=1}^{2})$ is an $L_\infty$-algebra.

Let $D+D'$ is a \dm of a quasi-twilled Lie algebra $(\huaG, \g, \h)$. By Theorem \ref{VDLI}, we have
\begin{equation*}
\theta+l_1(D+D')+\frac{1}{2}l_2(D+D', D+D')=0,
\end{equation*}
which implies that
\begin{equation*}
l_1(D')+l_2(D, D')+\frac{1}{2}l_2(D', D')=0.
\end{equation*}
By \eqref{lemgez}, we obtain $$l_1^{D}(D')+\frac{1}{2}l_2^{D}(D', D')=0.$$ Thus $D'$ is a Maurer-Cartan element of the $L_{\infty}$-algebra $(\oplus_{n=0}^{+\infty}\Hom(\wedge^{n+1}\g, \h), \{l_k^D\}_{k=1}^{2})$.
\end{proof}
}

\begin{rmk}
  Apply Theorem \ref{thm:defd1} to Corollary \ref{cor:conR}, one obtains the differential graded Lie algebra governing deformations of a modified $r$-matrix. Note that this differential graded Lie algebra was first given in \cite{JS2} via a different approach.  Apply Theorem \ref{thm:defd1} to Corollaries \ref{cor:conch}-\ref{con-homo},
  one can also obtain the algebras governing deformations of    crossed homomorphisms, derivations  and Lie algebra homomorphisms. 
\end{rmk}

\emptycomment{
\begin{rmk}\label{rmk111}
Let $D$ be a \dm of a quasi-twilled Lie algebra $(\huaG,\g,\h)$. By Theorem \ref{protwistd}, $((\huaG,\Omega^D),\g,\h)$  is a quasi-twilled Lie algebra   and $\theta^D=0$. Denote by
$$
L=\oplus^{+\infty}_{n=0}\Hom(\wedge^{n+1}(\g\oplus\h), \g\oplus\h), \quad\text{and} \quad F=\oplus_{n=0}^{+\infty}\Hom(\wedge^{n+1}\g, \h).
$$
Then $(L, F, P, \Delta=\Omega^{D})$ is a $V$-data, where   $P: L\lon L$ is the projection onto the space $F$. By Theorem \ref{VDLI},   $(F, \{\frkl_k\}_{k=1}^{+\infty})$ is an $L_{\infty}$-algebra, where
\begin{equation*}
\frkl_k(f_1, \cdots, f_k)=P([\cdots[[\Delta, f_1]_\NR, f_2]_\NR, \cdots, f_k]_\NR).
\end{equation*}
By direct calculation, the $L_{\infty}$-algebra   is the same with the $L_{\infty}$-algebra $(F,\{l_k^{D}\}_{k=1}^{2})$ given in Theorem \ref{thm:defd1}.
\end{rmk}
}

\subsection{Cohomology of \dms}

In this subsection, we introduce a cohomology theory of \dms which will unify the cohomologies of various operators.

\begin{lem}
Let $(\huaG, \g, \h)$ be a quasi-twilled Lie algebra and $D:\g\lon\h$ be a \dm of $(\huaG, \g, \h)$. Then
$$\pi^{D}(x, y)=\pi(x, y)+\eta(x, D(y))-\eta(y, D(x))$$
is a Lie algebra structure on $\g$ and
$$\rho^{D}(x) v=\rho(x, v)+\mu(D(x), v)-D(\eta(x, v))$$
is a representation of $(\g, \pi^{D})$ on $\h$.
\end{lem}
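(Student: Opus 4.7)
The plan is to deduce the lemma directly from Theorem~\ref{protwistd}, so that essentially no independent computation is required. By Proposition~\ref{lemtwist}, the twisted element $\Omega^{D}=e^{[\cdot,D]_{\NR}}\Omega$ is again a Lie bracket on $\g\oplus\h$; by Theorem~\ref{protwistd}, the triple $((\huaG,\Omega^{D}),\g,\h)$ is a quasi-twilled Lie algebra, and the explicit formulas given there for $\pi^{D}$ and $\rho^{D}$ match the expressions appearing in the statement of the lemma. Moreover, the explicit formula for $\theta^{D}$ in Theorem~\ref{protwistd} is precisely the defect measuring the $\frkD$-map property of $D$; since $D$ is a $\frkD$-map, this gives $\theta^{D}=0$.

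Next, I would expand the Maurer-Cartan identity $[\Omega^{D},\Omega^{D}]_{\NR}=0$ into the seven component equations of the system~\eqref{111}, applied to the components $(\pi^{D},\rho^{D},\mu^{D},\eta^{D},\theta^{D})$. With $\theta^{D}=0$, two of these become
\begin{equation*}
[\pi^{D},\pi^{D}]_{\NR}=0,\qquad [\pi^{D},\rho^{D}]_{\NR}+\tfrac{1}{2}[\rho^{D},\rho^{D}]_{\NR}=0.
\end{equation*}
A routine unpacking of the Nijenhuis-Richardson bracket (of the same flavor as the proof of Proposition~\ref{etarep}) identifies the first equation with the Jacobi identity for $\pi^{D}$ on $\g$, and the second with the assertion that the lifted map $\rho^{D}:\g\to\gl(\h)$ satisfies $\rho^{D}(\pi^{D}(x,y))=[\rho^{D}(x),\rho^{D}(y)]$, which is precisely the representation axiom.

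As an even quicker alternative, one may simply invoke the concluding sentence of Theorem~\ref{protwistd}: the $\frkD$-map condition is equivalent to $(\g,\pi^{D})$ and $(\h,\mu)$ forming a matched pair of Lie algebras, whence both conclusions of the lemma are already packaged into the definition of a matched pair.

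No substantive obstacle is anticipated, as the computational heart has been absorbed into Theorem~\ref{protwistd}. The only point requiring care is a bookkeeping check of the lift convention, to ensure that the chosen sign for $\rho^{D}$ indeed produces a \emph{left} action of $(\g,\pi^{D})$ on $\h$ and not its opposite; this is the same sign subtlety handled in Proposition~\ref{etarep}.
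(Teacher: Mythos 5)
Your proposal is correct, and your ``quicker alternative'' is exactly the paper's proof: the lemma is deduced in one line from the final assertion of Theorem \ref{protwistd}, namely that the $\frkD$-map condition makes $\big((\g,\pi^D),(\h,\mu);\rho^D,\eta\big)$ a matched pair of Lie algebras, whence $(\g,\pi^{D})$ is a Lie algebra and $\rho^{D}$ is a representation by the very definition of a matched pair. Your primary route --- verifying $\theta^{D}=0$ from the $\frkD$-map identity and then extracting $[\pi^{D},\pi^{D}]_{\NR}=0$ and $[\pi^{D},\rho^{D}]_{\NR}+\frac{1}{2}[\rho^{D},\rho^{D}]_{\NR}=0$ from the system \eqref{111} applied to the twisted structure --- is likewise valid and simply makes explicit the computation that the paper's one-line proof packages into the words ``matched pair.''
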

\begin{proof}
Since $\big((\g, \pi^D),(\h, \mu);\rho^{D},\eta\big)$ is a matched pair of Lie algebras, we gain that $(\g,\pi^{D})$ is a Lie algebra and $(\h;\rho^{D})$ is a representation of $(\g,\pi^{D})$.
\end{proof}

Let $\dM_{\CE}^D: \Hom(\wedge^{k}\g,\h)\lon\Hom(\wedge^{k+1}\g,\h)$ be the corresponding Chevalley-Eilenberg coboundary operator of the Lie algebra $(\g, \pi^D)$ with coefficients in the representation $(\h, \rho^D)$. More precisely, for all $f\in\Hom(\wedge^{k}\g,\h)$ and $x_1,\cdots,x_{k+1}\in\g$, we have
\begin{eqnarray}\label{defidD!}
&&\dM_{\CE}^{D}f(x_1,\cdots,x_{k+1})\\
\nonumber&=&\sum_{i=1}^{k+1}(-1)^{i+1}\rho^D(x_i) f(x_1,\cdots,\hat{x_i},\cdots,x_{k+1})\\
\nonumber&&+\sum_{i<j}(-1)^{i+j}f(\pi^D(x_i, x_j),x_1,\cdots,\hat{x_i},\cdots,\hat{x_j},\cdots,x_{k+1})\\
\nonumber&=&\sum_{i=1}^{k+1}(-1)^{i+1}\mu(D(x_i), f(x_1,\cdots,\hat{x_i},\cdots,x_{k+1}))+\sum_{i=1}^{k+1}(-1)^{i+1}\rho(x_i, f(x_1,\cdots,\hat{x_i},\cdots,x_{k+1}))\\
\nonumber&&-\sum_{i=1}^{k+1}(-1)^{i+1}D(\eta(x_i, f(x_1,\cdots,\hat{x_i},\cdots,x_{k+1})))\\
\nonumber&&+\sum_{i<j}(-1)^{i+j}f(\pi(x_i, x_j),x_1,\cdots,\hat{x_i},\cdots,\hat{x_j},\cdots,x_{k+1})\\
\nonumber&&+\sum_{i<j}(-1)^{i+j}f(\eta(x_i, D(x_j))-\eta(x_j, D(x_i)),x_1,\cdots,\hat{x_i},\cdots,\hat{x_j},\cdots,x_{k+1}).
\end{eqnarray}

Now, we define the cohomology of a \dm $D:\g\lon\h$. Define the space of $0$-cochains $C^{0}(D)$ to be $0$ and define the space of $1$-cochains $C^{1}(D)$ to be $\h$. For $n\geq 2$, define the space of $n$-cochains $C^{n}(D)$ by $C^{n}(D)=\Hom(\wedge^{n-1}\g,\h)$.

\begin{defi}
Let $D:\g\lon\h$ be a \dm of a quasi-twilled Lie algebra $(\huaG, \g, \h)$. The cohomology of  the cochain complex   $(\oplus_{i=0}^{+\infty}C^{i}(D), \dM_{\CE}^{D})$ is defined to be the {\bf cohomology  for the \dm} $D$.
\end{defi}

Denote the set of $n$-cocycles by $Z^{n}(D)$, the set of $n$-coboundaries by $B^{n}(D)$ and the $n$-th cohomology group by
\begin{equation*}
H^{n}(D)=Z^{n}(D)/B^{n}(D), \quad n\geq 0.
\end{equation*}

It is obvious that $u\in\h$ is closed if and only if
\begin{equation*}
\mu(D(x), u)+\rho(x, u)-D(\eta(x, u))=0, \quad \forall x\in\g,
\end{equation*}
and $f\in\Hom(\g,\h)$ is closed if and only if
\begin{eqnarray*}
&&\mu(D(x), f(y))-\mu(D(y), f(x))+\rho(x, f(y))-\rho(y, f(x))+D(\eta(y, f(x)))-D(\eta(x, f(y)))\\
&=&f(\pi(x, y))+f(\eta(x, D(y))-\eta(y, D(x))), \quad \forall x,y\in\g.
\end{eqnarray*}

Here we provide another intrinsic interpretation of the above coboundary operator.

Let $D$ be a \dm of a quasi-twilled Lie algebra $(\huaG, \g, \h)$. Recall the twisted $L_{\infty}$-algebra  $(\oplus_{n=0}^{+\infty}\Hom(\wedge^{n+1}\g, \h), l^D_1, l^D_2)$ from Theorem \ref{thm:defd1}, we have

\begin{pro}\label{deficoofdefor11}
With the above notations, for any $f\in\Hom(\wedge^{k}\g, \h)$, one has $$l_1^D(f)=(-1)^{k-1}\dM_{\CE}^{D}f.$$
\end{pro}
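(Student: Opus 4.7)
The plan is to show that the twisted operator $l_1^D$ coincides, up to the sign $(-1)^{k-1}$, with the Chevalley--Eilenberg differential $\dM_{\CE}^D$ for the Lie algebra $(\g, \pi^D)$ with coefficients in $(\h, \rho^D)$. I will do this by first rewriting $l_1^D$ as a single Nijenhuis--Richardson bracket against the twisted structures $\pi^D+\rho^D$, and then expanding that bracket as shuffle sums and matching signs with formula \eqref{defidD!}.

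First, by \eqref{twist-1} together with the formulas for $l_1, l_2$ from Theorem \ref{VDLI},
\[
l_1^D(f) = [\pi+\rho, f]_{\NR} + \bigl[[\mu+\eta, D]_{\NR}, f\bigr]_{\NR}.
\]
The key lemma I will verify is that $[\mu+\eta, D]_{\NR}$, viewed as an element of $\Hom(\wedge^2(\g\oplus\h), \g\oplus\h)$ and decomposed via the lifts, equals $(\pi^D - \pi) + (\rho^D - \rho)$. This reduces to a direct computation using the lifts of $\mu$, $\eta$, $D$: since $D$ vanishes on $\h$ while $\mu$ and $\eta$ each demand at least one $\h$-argument, most of the summands in $\mu\circ D - D\circ\mu$ and $\eta\circ D - D\circ\eta$ vanish on the wrong bi-type input, and the surviving pieces reproduce precisely the formulas $\pi^D - \pi$ and $\rho^D - \rho$ recorded in Theorem \ref{protwistd}. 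Substituting back yields
\[
l_1^D(f) = [\pi^D + \rho^D, f]_{\NR}.
\]

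For the second step I expand $[\pi^D + \rho^D, f]_{\NR}$ on $(x_1, \ldots, x_{k+1}) \in \g^{k+1}$. Because $f$ lands in $\h$, the lift conventions force $\pi^D \circ f$ and $f \circ \rho^D$ to vanish on pure $\g$-inputs, so only $f \circ \pi^D$ (a sum over $(2, k-1)$-shuffles) and $\rho^D \circ f$ (a sum over $(k, 1)$-shuffles) survive. Combining the sign $(-1)^{(m-1)(n-1)}$ from the definition of $[\cdot,\cdot]_{\NR}$ with each shuffle sign, a routine accounting gives
\[
[\pi^D, f]_{\NR} = (-1)^{k+1}\sum_{i<j}(-1)^{i+j} f\bigl(\pi^D(x_i,x_j),x_1,\ldots,\hat{x}_i,\ldots,\hat{x}_j,\ldots,x_{k+1}\bigr),
\]
\[
[\rho^D, f]_{\NR} = \sum_{i=1}^{k+1}(-1)^{k+i}\rho^D(x_i) f(x_1,\ldots,\hat{x}_i,\ldots,x_{k+1}).
\]
Comparing term by term with the two families of summands in \eqref{defidD!} shows that each acquires a common prefactor of $(-1)^{k-1}$, proving $l_1^D(f) = (-1)^{k-1}\dM_{\CE}^D f$.

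The main obstacle is the sign bookkeeping: one has to juggle simultaneously the degree convention ($f$ has degree $k-1$), the factor $(-1)^{(m-1)(n-1)}$ built into the Nijenhuis--Richardson bracket, and the shuffle signs for $(2, k-1)$- and $(k, 1)$-shuffles, plus the antisymmetry introduced by the lifts of $\rho$ and $\eta$. The crucial check is that the prefactor comes out uniformly as $(-1)^{k-1}$ on both the $\pi^D$-family and the $\rho^D$-family, so that a single scalar identification with $\dM_{\CE}^D f$ is possible rather than two incompatible signs.
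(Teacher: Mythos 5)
Your proposal is correct and follows essentially the same route as the paper's (much terser) proof: rewrite $l_1^D(f)=[\pi+\rho,f]_{\NR}+[[\mu+\eta,D]_{\NR},f]_{\NR}$, identify $[\mu+\eta,D]_{\NR}$ with $(\pi^D-\pi)+(\rho^D-\rho)$ so that $l_1^D(f)=[\pi^D+\rho^D,f]_{\NR}$, and then match this against $(-1)^{k-1}\dM_{\CE}^{D}f$ by the shuffle-sign computation. You simply supply the intermediate verifications (the key lemma on $[\mu+\eta,D]_{\NR}$ and the sign bookkeeping) that the paper leaves implicit, and your signs check out.
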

\begin{proof}
For $f\in\Hom(\wedge^{k}\g, \h)$, by \eqref{twist-1} and Theorem \ref{VDLI}, we have
\begin{eqnarray*}
l_1^D&=&l_1(f)+l_2(D, f)\\
&=& [\pi+\rho, f]_{\NR}+[[\mu+\eta, D]_{\NR}, f]_{\NR}\\
&=&[\pi^D+\rho^D, f]_\NR\\
&=&(-1)^{k-1}\dM_{\CE}^{D}(f),
\end{eqnarray*}which finishes the proof.
\end{proof}

\begin{ex}
Consider the quasi-twilled Lie algebra  $(\g\oplus_M \g, \g, \g)$ given in  Example \ref{ex:modir}. Let $D:\g\lon \g$ be a modified $r$-matrix. Then $(\g, [\cdot, \cdot]_D)$ is a Lie algebra, where the Lie bracket $[\cdot, \cdot]_D$ is given by
\begin{eqnarray*}
[x, y]_D=[x, D(y)]_\g-[y, D(x)]_\g.
\end{eqnarray*}
Moreover, the Lie algebra $(\g,[\cdot,\cdot]_D)$ represents on the vector space $\g$ via $\rho^D:\g\to\gl(\g)$ given by
$$
\rho^D(x)u=[D(x), u]_\h-D([x, u]_\g),\quad \forall u\in \h, x\in\g.
$$
The corresponding Chevalley-Eilenberg cohomology is taken to be the {\bf cohomology for a modified $r$-matrix}  \cite{JS2}.
\end{ex}
\begin{ex}
Consider the quasi-twilled Lie algebra $(\g\ltimes_\rho \h, \g, \h)$ given in Example \ref{ex:rRB1} obtained from the action Lie algebra $\g\ltimes_\rho \h$. Let $D:\g\lon \h$ be a crossed homomorphism of weight $\lambda$. Then   the Lie algebra $(\g,[\cdot,\cdot]_\g)$ represents on the vector space $\h$ via $\rho^D:\g\to\gl(\h)$ given by
$$
\rho^D(x)u=\rho(x)u+\lambda[D(x), u]_\h,\quad \forall u\in \h, x\in\g.
$$
The corresponding Chevalley-Eilenberg cohomology is taken to be the {\bf cohomology for a crossed homomorphism of weight $\lambda$} \cite{JS, PSTZ}.
\end{ex}
\begin{ex}
Consider the quasi-twilled Lie algebra $(\g\ltimes_\rho V, \g, V)$ given in Example \ref{ex:rRB} obtained from the action Lie algebra $\g\ltimes_\rho V$. Let $D:\g\lon V$ be a derivation from Lie algebra $(\g, [\cdot, \cdot]_\g)$ to the $\g$-module $V$. Then the   Chevalley-Eilenberg cohomology of the Lie algebra $(\g, [\cdot, \cdot]_\g)$ with coefficients in the representation $(V;\rho)$ is taken to be the {\bf cohomology for a derivation} from the Lie algebra $(\g, [\cdot, \cdot]_\g)$ to the $\g$-module $V$  \cite{TFS}.
\end{ex}

\begin{ex}
Consider the quasi-twilled Lie algebra $(\g\oplus\h, \g, \h)$ given in Example \ref{directpro} obtained from the direct product Lie algebra $\g\oplus\h$. Let $D:\g\lon \h$ be a  Lie algebra homomorphism. Then  the Lie algebra $(\g,[\cdot,\cdot]_\g)$ represents on the vector space $\h$ via $\rho^D:\g\to\gl(\h)$ given by
$$
\rho^D(x)u=[D(x), u]_\h,\quad \forall u\in \h, x\in\g.
$$
The corresponding Chevalley-Eilenberg cohomology is taken to be the {\bf cohomology for a Lie algebra homomorphism}. See \cite{Das, Fre, FZ, NR1} for more details.
\end{ex}


\section{The controlling algebras and cohomologies of deformation maps of type II}\label{sec:def}

In this section,  $(\huaG, \g, \h)$ is always a quasi-twilled Lie algebra, and the Lie bracket on $\huaG$ is denoted by $$\Omega=\pi+\rho+\mu+\eta+\theta,$$
where $\pi\in\Hom(\wedge^{2}\g, \g),~ \rho\in\Hom(\g\otimes\h, \h),~ \mu\in\Hom(\wedge^{2}\h, \h),~ \eta\in\Hom(\g\otimes\h, \g)$ and $\theta\in\Hom(\wedge^{2}\g, \h)$.

\subsection{Deformation maps of type II of a quasi-twilled Lie algebra}

In this subsection, we introduce the notion of deformation maps of  type II of a quasi-twilled Lie algebra, which unify relative Rota-Baxter operators, twisted Rota-Baxter operators, Reynolds operators and deformation maps of matched pairs of Lie algebras.
\begin{defi}
Let $(\huaG, \g, \h)$ be a quasi-twilled Lie algebra. A {\bf deformation map of type II} ($\huaD$-map for short) of $(\huaG, \g, \h)$ is a linear map $B: \h\lon\g$  such that
\begin{equation*}
\pi(B(u), B(v))+\eta(B(u), v)-\eta(B(v), u)=B\big(\mu(u, v)+\rho(B(u), v)-\rho(B(v), u)+\theta(B(u), B(v))\big).
\end{equation*}
\end{defi}

These two types of deformation maps are related as follows.

\begin{pro}
Let $D:\g\to\h$ be an invertible linear map. Then $D$ is a  \dm of $(\huaG, \g, \h)$ if and only if $D^{-1}:\h\to\g$ is a $\huaD$-map of $(\huaG, \g, \h)$.
\end{pro}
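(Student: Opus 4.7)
The plan is to prove this by direct substitution. Writing $B := D^{-1}$, I would substitute $x = B(u)$ and $y = B(v)$ with $u, v \in \h$ into the defining equation of a $\frkD$-map,
$$D\bigl(\eta(x, D(y))-\eta(y, D(x))+\pi(x, y)\bigr)=\mu(D(x), D(y))+\rho(x, D(y))-\rho(y, D(x))+\theta(x, y).$$
Since $D(B(u)) = u$ and $D(B(v)) = v$, the right-hand side immediately becomes
$$\mu(u, v)+\rho(B(u), v)-\rho(B(v), u)+\theta(B(u), B(v)),$$
and the argument of $D$ on the left-hand side becomes
$$\eta(B(u), v)-\eta(B(v), u)+\pi(B(u), B(v)).$$
Applying $B = D^{-1}$ to both sides yields exactly the $\huaD$-map equation for $B$. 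The converse direction is symmetric: starting from the $\huaD$-map equation for $B$, substitute $u = D(x)$, $v = D(y)$, and apply $D$ to both sides to recover the $\frkD$-map equation for $D$.

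A conceptually cleaner alternative is to use graphs. By Proposition \ref{pro-mat}, $D$ is a $\frkD$-map if and only if $\mathrm{Gr}(D) = \{(x, D(x)) : x \in \g\}$ is a Lie subalgebra of $\huaG$. When $D$ is invertible and $B = D^{-1}$, every $x \in \g$ equals $B(u)$ for a unique $u = D(x) \in \h$, so
$$\mathrm{Gr}(D) = \{(B(u), u) : u \in \h\}.$$
A routine parallel computation shows $B$ is a $\huaD$-map iff this latter subspace is closed under $\Omega$; the two subalgebra conditions coincide, so the two operator conditions are equivalent. There is no substantive obstacle here: once the invertibility of $D$ is used to rewrite variables, the two equations are the same identity rearranged, and the result is essentially tautological.
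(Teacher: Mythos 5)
Your direct-substitution argument is correct and is exactly the ``straightforward'' verification the paper leaves to the reader (its proof consists only of the words ``It is straightforward''); the two defining equations are indeed the same identity after the change of variables $x=B(u)$, $y=B(v)$ and application of $D^{\pm 1}$. The graph reformulation is also valid and consistent with Proposition \ref{pro-mat}, but it is not needed.
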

\begin{proof}
  It is straightforward.
\end{proof}

\begin{ex}
 Consider the quasi-twilled Lie algebra $(\g\ltimes_\rho \h, \g, \h)$ given in Example \ref{ex:rRB1} obtained from the action Lie algebra $\g\ltimes_\rho \h$.
In this case, a  \ddm  of $(\g\ltimes_\rho\h, \g, \h)$ is a linear map $B:\h\lon\g$ such that
$$
[B(u), B(v)]_\g=B(\rho(B(u))v-\rho(B(v))u+\lambda[u, v]_\h), \quad \forall u, v\in\h,
$$
which is exactly  is a {\bf relative Rota-Baxter operator of weight $\lambda$}   on $\g$ with respect to the action $(\h;\rho)$ \cite{BGN}.
\end{ex}

\begin{ex}
 Consider the quasi-twilled Lie algebra $(\g\ltimes_\rho V, \g, V)$ given in Example \ref{ex:rRB} obtained from the semidirect product Lie algebra $\g\ltimes_\rho V$.
In this case, a \ddm   is a linear map $B:V\lon\g$ such that
$$
[B(u), B(v)]_\g=B(\rho(B(u))v-\rho(B(v))u), \quad \forall u, v\in V,
$$
which is exactly  is a {\bf relative Rota-Baxter operator of weight $0$} (also called an $\huaO$-operator) on $\g$ with respect to the representation $(V;\rho)$ \cite{Ku,STS}.
\end{ex}

\begin{ex}\label{ex:trb}
 Consider the quasi-twilled Lie algebra $(\g\ltimes_{\rho,\omega} V, \g, V)$ given in Example \ref{ex:twistedRB} obtained from a representation $\rho$ of $\g$ on $V$ and a $2$-cocycle $\omega$.
In this case, a  \ddm  of $(\g\ltimes_{\rho,\omega} V, \g, V)$ is a linear map $B:V\lon\g$ such that
$$
[B(u), B(v)]_\g=B(\rho(B(u))v-\rho(B(v))u+\omega(B(u), B(v))), \quad \forall u, v\in V,
$$
which implies that $B$ is a {\bf twisted Rota-Baxter operator} \cite{Das0}.
\end{ex}

\begin{ex}
 Consider the quasi-twilled Lie algebra $(\g\ltimes_{\ad,\omega} \g, \g, \g)$ given in Example \ref{ex:Reynolds}. In this case, a  \ddm  of $(\g\ltimes_{\ad,\omega} \g, \g, \g)$ is a linear map $B:\g\lon\g$ such that
 $$
[B(u), B(v)]_\g=B\big([B(u),v]_\g-[B(v),u]_\g+[B(u), B(v)]_\g\big), \quad \forall u, v\in \g,
$$
which implies that $B$ is a {\bf Reynolds operator} \cite{Das0}.
\end{ex}

\begin{ex}
Consider the quasi-twilled Lie algebra $(\g \bowtie \h, \g, \h)$ given in Example \ref{ex:mch} obtained from a matched pair of Lie algebras.
In this case, a  \ddm  of $(\g\bowtie\h, \g, \h)$ is a linear map $B:\h\lon\g$ such that
$$
[B(u), B(v)]_\g-\eta(v)B(u)+\eta(u)B(v)=B\big([u, v]_\h+\rho(B(u)) v-\rho(B(v)) u\big), \quad \forall u, v\in\h,
$$
which is exactly a {\bf  deformation map} of a matched pair of Lie algebras introduced in \cite{AM}.
\end{ex}

At the end of this subsection, we illustrate the roles that \ddms play in the twisting theory.

Let $B: \h\lon\g$ be a linear map. It follows that $B^{2}=0$ and $[\cdot, B]_{\NR}$ is a derivation of the graded Lie algebra $\big(\oplus_{n=0}^{+\infty } \Hom(\wedge^{n+1}(\g\oplus\h), \g\oplus\h),[\cdot,\cdot]_{\NR}\big)$. Then we gain  that $e^{[\cdot, B]_{\NR}}$ is an automorphism of the graded Lie algebra $\big(\oplus_{n=0}^{+\infty} \Hom(\wedge^{n+1}(\g\oplus\h), \g\oplus\h),[\cdot,\cdot]_{\NR}\big)$.

\begin{defi}
The transformation $\Omega^{B}\triangleq e^{[\cdot, B]_{\NR}}\Omega$ is called the {\bf twisting} of $\Omega$ by $B$.
\end{defi}

Parallel to Lemma \ref{lemtwist}, we have the following lemma.
\begin{lem}\label{lemtwistB}
With the above notations,  $\Omega^B=e^{-B}\circ \Omega \circ (e^{B}\otimes e^{B})$ is a Lie algebra structure on $\huaG$.
\end{lem}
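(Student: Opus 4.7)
The plan is to mirror the argument used for Proposition \ref{lemtwist}, adapting it to a map in the opposite direction. First, I would observe that $B:\h\lon\g$, extended by zero on $\g$, gives a map $\g\oplus\h\lon\g\oplus\h$ that squares to zero, since its image lies in $\g$ on which $B$ vanishes. This nilpotency is precisely what makes the infinite series $e^{B}=\Id+B$ and $e^{[\cdot,B]_{\NR}}$ terminate: the derivation $[\cdot,B]_{\NR}$ of the graded Lie algebra $\big(\oplus_{n=0}^{+\infty}\Hom(\wedge^{n+1}(\g\oplus\h),\g\oplus\h),[\cdot,\cdot]_{\NR}\big)$ is locally nilpotent when applied to $\Omega\in\Hom(\wedge^{2}(\g\oplus\h),\g\oplus\h)$, because each iterated bracket strictly decreases the arity available for further applications.

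Next, since $[\cdot,B]_{\NR}$ is a derivation of degree zero of $[\cdot,\cdot]_{\NR}$, its exponential $e^{[\cdot,B]_{\NR}}$ is an automorphism of the graded Lie algebra. The hypothesis that $\Omega$ is a Lie bracket means $[\Omega,\Omega]_{\NR}=0$. Applying the automorphism yields
\[
[\Omega^{B},\Omega^{B}]_{\NR}=\big[e^{[\cdot,B]_{\NR}}\Omega,\,e^{[\cdot,B]_{\NR}}\Omega\big]_{\NR}=e^{[\cdot,B]_{\NR}}[\Omega,\Omega]_{\NR}=0,
\]
so $\Omega^{B}\in\Hom(\wedge^{2}\huaG,\huaG)$ is again a Maurer-Cartan element, i.e.\ a Lie algebra structure on $\huaG$.

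It remains to identify $\Omega^{B}$ with $e^{-B}\circ\Omega\circ(e^{B}\otimes e^{B})$. The key identity is that for a binary operation $\Omega$ and a linear map $B$ (viewed in the appropriate graded component), one has
\[
[\Omega,B]_{\NR}=\Omega\circ(B\otimes\Id+\Id\otimes B)-B\circ\Omega,
\]
which is exactly the infinitesimal form of the claimed formula. Iterating and using $B^{2}=0$ to truncate the expansions of $e^{B}$ and $e^{-B}$, one gets
\[
e^{[\cdot,B]_{\NR}}\Omega=e^{-B}\circ\Omega\circ(e^{B}\otimes e^{B}),
\]
by the same computation carried out for Proposition \ref{lemtwist} (compare \cite{TS,Uc}); the only change is that $B$ now goes from $\h$ to $\g$ rather than from $\g$ to $\h$, which does not affect the combinatorial identity.

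The main, and essentially only, technical point is the bookkeeping in this last step: one must carefully verify that the alternating signs and multiplicities arising from the binomial expansion of $e^{[\cdot,B]_{\NR}}\Omega$ agree with those of $e^{-B}\circ\Omega\circ(e^{B}\otimes e^{B})$, using $B^{2}=0$ to kill all terms beyond the relevant order. Once this is established, both statements of the lemma follow at once.
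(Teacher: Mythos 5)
Your argument is correct and takes essentially the same route as the paper, which proves this lemma by declaring it ``parallel to'' the twisting by $D$: there, too, one uses that $e^{[\cdot,B]_{\NR}}$ is an automorphism of the graded Lie algebra to get $[\Omega^{B},\Omega^{B}]_{\NR}=0$, and cites \cite{TS,Uc} for the identification of $e^{[\cdot,B]_{\NR}}\Omega$ with $e^{-B}\circ\Omega\circ(e^{B}\otimes e^{B})$. One small caveat: the series terminates not because the arity drops (each bracket with $B$ preserves arity $2$), but because each application of $[\cdot,B]_{\NR}$ strictly decreases the number of $\h$-slots (inputs or output) on which $B$ can still act nontrivially, since $B$ vanishes on $\g$ and takes values in $\g$.
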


Obviously,  $\Omega^{B}$ is decomposed into the unique six substructures $\pi^{B}\in\Hom(\wedge^{2}\g, \g),~\rho^{B}\in\Hom(\g\otimes\h, \h),~\mu^{B}\in\Hom(\wedge^{2}\h, \h),~\eta^{B}\in\Hom(\g\otimes\h, \g),~\theta^{B}\in\Hom(\wedge^{2}\g, \h),~\xi^B\in\Hom(\wedge^{2}\h, \g)$. We have the following result.

\begin{thm}\label{thm:twist}
Write $\Omega=\pi+\rho+\mu+\eta+\theta$ and $\Omega^{B}=\pi^{B}+\rho^{B}+\mu^{B}+\eta^{B}+\theta^{B}+\xi^B$. Then we have
\begin{eqnarray*}
\pi^{B}(x, y)&=&\pi(x, y)-B(\theta(x, y)),\\
\rho^{B}(x, v)&=&\rho(x, v)-\theta(B(v), x),\\
\mu^{B}(u, v)&=&\mu(u, v)+\rho(B(u), v)-\rho(B(v), u)+\theta(B(u), B(v)),\\
\eta^{B}(x, v)&=&\eta(x, v)-\pi(B(v), x)-B(\rho(x, v))+B(\theta(B(v), x)),\\
\theta^{B}(x,y)&=&\theta(x,y),\\
\xi^B(u,v)&=&\eta(B(u), v)-\eta(B(v), u)+\pi(B(u), B(v))\\
&&-B(\mu(u, v))-B(\rho(B(u), v)+B(\rho(B(v), u))+B(\theta(B(v), B(u))),
\end{eqnarray*}
for all $x, y\in\g, u, v\in\h$.

Consequently, $B: \h\lon\g$  is a \ddm if and only if $((\huaG, \Omega^{B}),\g,\h)$ is a quasi-twilled Lie algebra.
\end{thm}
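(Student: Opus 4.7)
The plan is to read off the six components of $\Omega^B$ by direct expansion using the formula $\Omega^B = e^{-B}\circ \Omega\circ(e^B\otimes e^B)$ from Lemma \ref{lemtwistB}. Since the lift of $B:\h\to\g$ sends $(x,u)\mapsto(B(u),0)$, it satisfies $B^2=0$, so $e^{\pm B}=\Id\pm B$ and the expression reduces to
\[
\Omega^B\big((x,u),(y,v)\big)=(\Id-B)\,\Omega\big((x+B(u),u),\,(y+B(v),v)\big).
\]
Substituting $\Omega=\pi+\rho+\mu+\eta+\theta$ and expanding by multilinearity gives a finite sum of terms, which I would sort by bi-input type and output space: inputs in $\g\otimes\g$ versus $\g\otimes\h$ versus $\h\otimes\h$, and output in $\g$ versus $\h$.

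Concretely, I would extract $\pi^B$ and $\theta^B$ by evaluating on $((x,0),(y,0))$; extract $\xi^B$ and $\mu^B$ by evaluating on $((0,u),(0,v))$; and extract $\eta^B$ and $\rho^B$ by evaluating on $((x,0),(0,v))$. Tidying with the antisymmetries $\pi(x,B(v))=-\pi(B(v),x)$ and $\theta(x,B(v))=-\theta(B(v),x)$ yields the six stated formulas. The only place where the projection $\Id-B$ is nontrivial is on the $\h$-valued part of $\Omega$, which explains why $\theta^B=\theta$ and why $B$ appears linearly in $\pi^B,\rho^B,\eta^B$ and quadratically in $\xi^B$.

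For the final equivalence, Lemma \ref{lemtwistB} already guarantees that $\Omega^B$ is a Lie bracket on $\huaG$, so the triple $(\huaG,\g,\h)$ is quasi-twilled with respect to $\Omega^B$ exactly when $\h$ is still an $\Omega^B$-subalgebra, i.e.\ when $\Omega^B\big((0,u),(0,v)\big)\in\h$. The $\g$-component of this expression is precisely $\xi^B(u,v)$, and comparing the explicit formula for $\xi^B$ with the definition of a \ddm shows $\xi^B\equiv 0$ iff
\[
\pi(B(u),B(v))+\eta(B(u),v)-\eta(B(v),u)=B\big(\mu(u,v)+\rho(B(u),v)-\rho(B(v),u)+\theta(B(u),B(v))\big),
\]
which is exactly the \ddm equation.

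There is no substantive obstacle here; the proof is a bookkeeping exercise in bilinear algebra. The only care required is in tracking signs under the antisymmetry of $\pi$ and $\theta$ and in recognizing that the Taylor series for $e^{[\cdot,B]_\NR}$ terminates because $B^2=0$ on the lifted level, so no shuffle-sum issues arise.
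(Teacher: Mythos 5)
Your proposal is correct and is exactly the computation the paper has in mind (the paper simply omits it as "direct but tedious"): expand $\Omega^B=e^{-B}\circ\Omega\circ(e^B\otimes e^B)$ with $e^{\pm B}=\Id\pm B$, evaluate on the three types of input pairs, and observe that the equivalence reduces to $\xi^B\equiv 0$, which is the \ddm{} equation. The formulas you would obtain check out against the statement, so nothing further is needed.
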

\begin{proof}
It follows from a direct but tedious computation. We omit details.
\end{proof}

\subsection{The controlling algebra of  \ddms}

In this subsection, we give the controlling algebra of deformation maps of type II, which is  an $L_\infty$-algebra. An important byproduct is the controlling algebra of deformation maps of matched pairs of Lie algebras introduced in \cite{AM}.

\begin{thm}\label{VDLI1}
Let $(\huaG, \g, \h)$ be a quasi-twilled Lie algebra.  Then there is a $V$-data $(L,F,P,\Delta)$ as following:
\begin{itemize}
\item[$\bullet$] the graded Lie algebra $(L,[\cdot,\cdot])$ is given by $(\oplus_{n=0}^{+\infty}\Hom(\wedge^{n+1}\g\oplus\h, \g\oplus\h), [\cdot, \cdot]_{\NR})$,
\item[$\bullet$] the abelian graded Lie subalgebra $F$ is given by $\oplus_{n=0}^{+\infty}\Hom(\wedge^{n+1}\h, \g)$,
\item[$\bullet$] $P:L\lon L$ is the projection onto the subspace $F$,
\item[$\bullet$] $\Delta=\pi+\rho+\mu+\eta+\theta$.
\end{itemize}
Consequently, we obtain an $L_{\infty}$-algebra $(\oplus_{n=0}^{+\infty}\Hom(\wedge^{n+1}\h, \g), l_1,l_2,l_3)$, where $l_1,l_2,l_3$ are given by
\begin{eqnarray*}
l_1(f)&=&[\mu+\eta, f]_{\NR},\\
l_2(f_1, f_2)&=&[[\pi+\rho, f_1]_{\NR}, f_2]_{\NR},\\
l_3(f_1, f_2, f_3)&=&[[[\theta, f_1]_{\NR}, f_2]_{\NR}, f_3]_{\NR},\\
l_k&=&0, \quad k\geq 4.
\end{eqnarray*}

Furthermore, a linear map $B:\h\lon\g$ is a \ddm of $(\huaG, \g, \h)$ if and only if $B$ is a Maurer-Cartan element of the above $L_{\infty}$-algebra.
\end{thm}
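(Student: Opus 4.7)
The plan is to mirror the proof of Theorem~\ref{VDLI}: first verify that $(L,F,P,\Delta)$ is a (non-curved) $V$-data, then invoke Voronov's derived bracket construction (Theorem~\ref{cV}) to produce the $L_\infty$-structure, and finally match the Maurer-Cartan equation with the $\huaD$-map condition. For the $V$-data axioms, $F$ is abelian because any two elements $f_1,f_2\in F$ take only $\h$-inputs and produce $\g$-outputs, so both $f_1\circ f_2$ and $f_2\circ f_1$ vanish. Idempotency of $P$ and the fact that $\ker(P)$ is a graded Lie subalgebra are immediate, while $[\Delta,\Delta]_{\NR}=0$ is the Jacobi identity for $\Omega$ (system \eqref{111}). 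The key difference with Theorem~\ref{VDLI} is that $\Delta$ now lies entirely in $\ker(P)$: none of $\pi,\rho,\mu,\eta,\theta$ belong to $F$, since each either takes a $\g$-input or produces an $\h$-output. Hence $(L,F,P,\Delta)$ is a genuine $V$-data and the resulting curved $L_\infty$-algebra has $l_0=P(\Delta)=0$.

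To identify which summand of $\Delta$ contributes to each $l_k$, and to prove that $l_k=0$ for $k\geq 4$, I would introduce a \emph{defect} $N(\phi)$ of a homogeneous lifted operator $\phi$ defined as the number of $\g$-inputs of $\phi$ plus $1$ if $\phi$ is $\h$-valued, and plus $0$ otherwise. A direct inspection gives $N(\pi)=N(\rho)=2$, $N(\mu)=N(\eta)=1$, $N(\theta)=3$, and $N(f)=0$ for every $f\in F$, and moreover $F=\{\phi:N(\phi)=0\}$. The point is then that every nonzero Nijenhuis-Richardson bracket $[\phi,f]_{\NR}$ with $f\in F$ strictly decreases $N$ by exactly $1$: either $f$'s $\g$-output is plugged into a $\g$-slot of $\phi$ (removing one $\g$-input) or $\phi$'s $\h$-output is plugged into an $\h$-slot of $f$ (turning an $\h$-output into a $\g$-output). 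Iterating this, the only surviving contributions to $l_k=P([\cdots[[\Delta,f_1],f_2]_{\NR},\cdots,f_k]_{\NR})$ come from the parts of $\Delta$ with $N=k$. This singles out $\mu+\eta$ for $l_1$, $\pi+\rho$ for $l_2$, $\theta$ for $l_3$, and forces $l_k=0$ for $k\geq 4$.

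Finally, to verify the Maurer-Cartan characterization of $\huaD$-maps, I would evaluate the Maurer-Cartan polynomial $l_1(B)+\frac{1}{2}l_2(B,B)+\frac{1}{6}l_3(B,B,B)$ on $u,v\in\h$ using the explicit NR formula~\eqref{eq:fgcirc}. A routine expansion should give
$l_1(B)(u,v)=\eta(B(u),v)-\eta(B(v),u)-B(\mu(u,v))$,
$\frac{1}{2}l_2(B,B)(u,v)=\pi(B(u),B(v))-B(\rho(B(u),v))+B(\rho(B(v),u))$,
and $\frac{1}{6}l_3(B,B,B)(u,v)=-B(\theta(B(u),B(v)))$, whose vanishing is precisely the $\huaD$-map condition. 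The main obstacle I anticipate is sign bookkeeping in these three expansions, especially keeping track of shuffles and of the alternation of $B\circ\phi$ vs.\ $\phi\circ B$ under the NR convention. I expect no conceptual difficulty here because the Maurer-Cartan property is guaranteed abstractly by Theorem~\ref{cV}, so the computation only needs to confirm that the signs align with the defining equation of a $\huaD$-map; a useful consistency check would be to specialize to Example~\ref{ex:rRB} (where $\theta=0$, so $l_3$ drops out) and recover the known controlling DGLA of relative Rota-Baxter operators.
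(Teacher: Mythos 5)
Your proposal is correct and follows essentially the same route as the paper's proof: verify the $V$-data axioms (noting that $\Delta\in\ker(P)$, so $l_0=P(\Delta)=0$ and the $L_\infty$-algebra is uncurved), invoke Voronov's construction, and evaluate the Maurer--Cartan polynomial on $u,v\in\h$ to recover the defining identity of a \ddm; the three terms you write for $l_1(B)$, $\tfrac{1}{2}l_2(B,B)$ and $\tfrac{1}{6}l_3(B,B,B)$ agree with the paper's computation (the $\theta$-term matching up via skew-symmetry of $\theta$). The only difference is presentational: your defect invariant $N$ systematizes what the paper establishes by direct inspection, namely that $\mu+\eta$, $\pi+\rho$ and $\theta$ are the sole contributors to $l_1$, $l_2$ and $l_3$ respectively and that $l_k=0$ for $k\geq 4$.
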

\begin{proof}
It is obvious that $F$ is an abelian graded Lie subalgebra of $L$ and $[\Delta,\Delta]_{\NR}=0,\Delta\in\ker(P)^{1}$.
Since $P$ is the projection onto $F$, it is obvious that $P^2=P$. Thus $(L,F,P,\Delta)$ is a $V$-data.

By Theorem \ref{cV},  we obtain an $L_{\infty}$-algebra $(\oplus_{n=0}^{+\infty}\Hom(\wedge^{n+1}\h, \g), \{l_k\}_{k=1}^{+\infty})$, where $l_k$ are given by
\begin{equation*}
l_k(f_1, \cdots, f_n)=P([\cdots[[\Delta, f_1]_{\NR}, f_2]_{\NR}, \cdots, f_n]_{\NR}).
\end{equation*}
For $f\in\Hom(\wedge^{k}\h, \g)$, we have
\begin{eqnarray*}
l_1(f)=P([\pi+\rho+\mu+\eta+\theta, f]_{\NR})=[\mu+\eta, f]_{\NR}.
\end{eqnarray*}
Moreover, for $f_1\in\Hom(\wedge^{p}\h, \g), f_2\in\Hom(\wedge^{q}\h, \g)$ and $f_3\in\Hom(\wedge^{r}\h, \g)$, we obtain
\begin{eqnarray*}
l_2(f_1, f_2)=P([[\pi+\rho+\mu+\eta+\theta, f_1]_{\NR}, f_2]_{\NR})=[[\pi+\rho, f_1]_{\NR}, f_2]_{\NR}
\end{eqnarray*}
and
\begin{eqnarray*}
l_3(f_1, f_2, f_3)=P([[[\pi+\rho+\mu+\eta+\theta, f_1]_{\NR}, f_2]_{\NR}, f_3]_{\NR})=[[[\theta, f_1]_{\NR}, f_2]_{\NR}, f_3]_{\NR}.
\end{eqnarray*}
Since $F$ is abelian and
$
[[[\pi+\rho+\mu+\eta+\theta, f_1]_{\NR}, f_2]_{\NR}, f_3]_{\NR}\in\Hom(\wedge^{p+q+r-1}\h, \g),
$
we have $l_k=0$ for all $k\geq 4$.

It is straightforward to obtain
\begin{eqnarray*}
  &&l_1(B)(u, v)+\frac{1}{2}l_2(B, B)(u, v)+\frac{1}{6}l_3(B, B, B)(u, v)\\
  &=&[\mu+\eta, B]_{\NR} (u,v)+\frac{1}{2}[[\pi+\rho, B]_{\NR}, B]_{\NR}(u,v)+\frac{1}{6}[[[\theta, B]_{\NR}, B]_{\NR}, B]_{\NR}(u, v)\\
  &=& -B(\mu(u, v))+\eta(B(u), v)-\eta(B(v), u) + \pi(B(u), B(v))+B(\rho(B(v), u)-\rho(B(u), v)) \\
&&+ B(\theta(B(v), B(u))).
\end{eqnarray*}
Thus, $B\in\Hom(\h, \g)$ is a Maurer-Cartan element of $(\oplus_{n=0}^{+\infty}\Hom(\wedge^{n+1}\h, \g), l_1,l_2,l_3)$ if and only if $B:\h\lon\g$ is a \ddm of $(\huaG, \g, \h)$. The proof is finished.
\end{proof}

\begin{cor}\label{cor:carrb}
 Consider the quasi-twilled Lie algebra $(\g\ltimes_\rho \h, \g, \h)$ given in Example \ref{ex:rRB1} obtained from the action Lie algebra $\g\ltimes_\rho \h$.
 Then $(\oplus_{n=1}^{+\infty}\Hom(\wedge^{n}\h, \g), \dM, \Courant{\cdot,\cdot})$  is a differential graded Lie algebra, where the differential $\dM:\Hom(\wedge^{p}\h, \g)\to \Hom(\wedge^{p+1}\h, \g)$ is given by
\begin{eqnarray*}
\dM f(u_1, \cdots, u_{p+1})
&=&\sum_{i<j}(-1)^{p+i+j-1}\lambda f([u_i, u_j]_\h, u_1, \cdots, \hat{u_i}, \cdots, \hat{u_j}, \cdots, u_{p+1}),
\end{eqnarray*}
and $\Courant{\cdot, \cdot}$ is given by
\begin{eqnarray}
\label{eql2}&&\Courant{f_1, f_2}(u_1, \cdots, u_{p+q})\\
\nonumber&=&-\sum_{\sigma\in S(q, 1, p-1)}(-1)^{\sigma}f_1(\rho(f_2(u_{\sigma(1)}, \cdots, u_{\sigma(q)}))( u_{\sigma(q+1)}), u_{\sigma(q+2)}, \cdots, u_{\sigma(q+p)})\\
\nonumber&&+\sum_{\sigma\in S(p, 1, q-1)}(-1)^{pq}(-1)^{\sigma}f_2(\rho(f_1(u_{\sigma(1)}, \cdots, u_{\sigma(p)}))( u_{\sigma(p+1)}), u_{\sigma(p+2)}, \cdots, u_{\sigma(q+p)})\\
\nonumber&&-\sum_{\sigma\in S(p, q)}(-1)^{pq}(-1)^{\sigma}[f_1(u_{\sigma(1)}, \cdots, u_{\sigma(p)}), f_2(u_{\sigma(p+1)}, \cdots, u_{\sigma(p+q)})]_\g,
\end{eqnarray}
for all $f_1\in\Hom(\wedge^{p}\h, \g), f_2\in\Hom(\wedge^{q}\h, \g)$. This differential graded Lie algebra is exactly the {\bf controlling algebra for relative Rota-Baxter operators of weight $\lambda$} initially given in \cite[Corollary 2.17]{TBGS2}. 
\end{cor}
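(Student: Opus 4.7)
The strategy is to specialize Theorem \ref{VDLI1} to the quasi-twilled Lie algebra arising from the action Lie algebra $\g\ltimes_\rho \h$ of Example \ref{ex:rRB1}, and then unwind the Nijenhuis-Richardson bracket to match the formulas displayed in the corollary. The first step is to read off the five components of $\Omega=\pi+\rho+\mu+\eta+\theta$ for this example directly from the Lie bracket $[(x,u),(y,v)]_\rho=([x,y]_\g,\rho(x)v-\rho(y)u+\lambda[u,v]_\h)$. One obtains
\[
\pi(x,y)=[x,y]_\g,\qquad \rho(x,v)=\rho(x)v,\qquad \mu(u,v)=\lambda[u,v]_\h,\qquad \eta=0,\qquad \theta=0.
\]
In particular, the vanishing of $\theta$ makes the trilinear bracket $l_3$ in Theorem \ref{VDLI1} identically zero, so the $L_\infty$-algebra collapses to a genuine differential graded Lie algebra on $\oplus_{n\ge 1}\Hom(\wedge^n\h,\g)$ (after the degree shift built into Theorem \ref{VDLI1}).

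Next I would compute $l_1(f)=[\mu+\eta,f]_\NR=[\mu,f]_\NR$ for $f\in\Hom(\wedge^p\h,\g)$. Since the lift of $f$ lands in $\g$ while $\mu$ feeds on two $\h$-entries and returns an $\h$-entry, the term $\mu\circ f$ vanishes, and only the contribution $-(-1)^{p-1}f\circ\mu$ survives; its evaluation on pure $\h$-inputs $(u_1,\dots,u_{p+1})$ produces exactly the sum over $(2,p-1)$-shuffles of $\lambda f([u_i,u_j]_\h,u_1,\dots,\hat u_i,\dots,\hat u_j,\dots,u_{p+1})$ with the signs prescribed in the statement. A routine rewriting of the shuffle sum into the $\sum_{i<j}$ form then reproduces $\dM$.

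For the binary bracket $l_2(f_1,f_2)=[[\pi+\rho,f_1]_\NR,f_2]_\NR$ with $f_1\in\Hom(\wedge^p\h,\g)$ and $f_2\in\Hom(\wedge^q\h,\g)$, I would split the calculation into the $\pi$-contribution and the $\rho$-contribution. The $\pi$-piece gives the bracket $-[f_1(u_{\sigma(1)},\dots,u_{\sigma(p)}),f_2(u_{\sigma(p+1)},\dots,u_{\sigma(p+q)})]_\g$ in the third line of \eqref{eql2}, since both $f_1$ and $f_2$ land in $\g$ where $\pi$ acts. The $\rho$-piece yields the two terms involving $\rho(f_2(\cdots))u_{\sigma(q+1)}$ and $\rho(f_1(\cdots))u_{\sigma(p+1)}$, because $\rho$ has exactly one $\g$-slot (fed by one of the $f_i$) and one $\h$-slot (fed by a leftover $u$). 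Tracking the shuffle signs produced by the definition \eqref{eq:fgcirc} of $f\circ g$ and the graded symmetry prefactor $(-1)^{(m-1)(n-1)}$ recovers the signs $(-1)^\sigma$ and $(-1)^{pq}$ appearing in \eqref{eql2}.

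The main obstacle is bookkeeping: signs arising from the Koszul/NR conventions and from the degree shift must be reconciled with the classical cochain signs used in the controlling DGLA of relative Rota-Baxter operators. This is handled by carefully re-indexing shuffles into ordered pairs $i<j$ (respectively triples) and using the standard identity $(-1)^{\mathrm{sh}}=(-1)^{i+j-1}$ when extracting a $(2,p-1)$-shuffle that picks positions $i,j$. Once the signs are matched, Theorem \ref{VDLI1} plus the characterization of Maurer-Cartan elements there immediately identifies Maurer-Cartan elements with relative Rota-Baxter operators of weight $\lambda$, and the resulting DGLA coincides with the one in \cite[Corollary 2.17]{TBGS2}.
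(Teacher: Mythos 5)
Your proposal is correct and follows exactly the route the paper intends: the corollary is stated without a separate proof precisely because it is the specialization of Theorem \ref{VDLI1} to the action Lie algebra, where $\eta=0$ and $\theta=0$ force $l_3=0$, and your computations of $l_1(f)=-(-1)^{p-1}f\circ\mu$ and of the two pieces of $l_2$ reproduce the displayed formulas for $\dM$ and $\Courant{\cdot,\cdot}$. The only point handled implicitly on both sides is the d\'ecalage sign relating the degree-$1$ operations of Theorem \ref{VDLI1} to the genuine DGLA bracket, which you correctly flag as bookkeeping.
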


\begin{cor}
 Consider the quasi-twilled Lie algebra $(\g\ltimes_\rho V, \g, V)$ given in Example \ref{ex:rRB} obtained from the semidirect product Lie algebra $\g\ltimes_\rho V$.
Then $(\oplus_{n=1}^{+\infty}\Hom(\wedge^{n}V, \g),\Courant{\cdot,\cdot})$  is a graded Lie algebra, where the graded Lie bracket $\Courant{\cdot,\cdot}$ is given by \eqref{eql2}.
  This graded Lie algebra is exactly the {\bf controlling algebra for $\huaO$-operators} on $\g$ with respect to the representation $(V;\rho)$ given in \cite[Proposition 2.3]{TBGS}.
\end{cor}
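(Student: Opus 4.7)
The plan is to derive the corollary as a direct specialization of Theorem \ref{VDLI1} to the quasi-twilled Lie algebra $(\g\ltimes_\rho V,\g,V)$ from Example \ref{ex:rRB}. First I would identify the five components of the total bracket $\Omega=\pi+\rho+\mu+\eta+\theta$ for the semidirect product: $\pi\in\Hom(\wedge^{2}\g,\g)$ is the Lie bracket of $\g$ and $\rho\in\Hom(\g\otimes V,V)$ is the representation, while $\mu=0$, $\eta=0$, and $\theta=0$, because $V$ carries no bracket, does not act back on $\g$, and no $2$-cocycle term is present.

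With these inputs the three nontrivial $L_{\infty}$-brackets of Theorem \ref{VDLI1} collapse: $l_1(f)=[\mu+\eta,f]_{\NR}=0$ and $l_3(f_1,f_2,f_3)=[[[\theta,f_1]_{\NR},f_2]_{\NR},f_3]_{\NR}=0$, leaving only $l_2(f_1,f_2)=[[\pi+\rho,f_1]_{\NR},f_2]_{\NR}$. The $L_{\infty}$-relations then reduce to the graded Jacobi identity for $l_2$, so $(\oplus_{n=0}^{+\infty}\Hom(\wedge^{n+1}V,\g),l_2)$, equivalently $(\oplus_{n=1}^{+\infty}\Hom(\wedge^{n}V,\g),l_2)$ after the degree shift, is automatically a graded Lie algebra.

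The second step is the computational one: to check that the derived bracket $l_2$ coincides with $\Courant{\cdot,\cdot}$ of \eqref{eql2}. Expanding $[\pi+\rho,f_1]_{\NR}$ first and then $[\cdot,f_2]_{\NR}$ via the shuffle formula \eqref{eq:fgcirc}, and using the lifts that place each $f_i(\cdots)$ in the $\g$-component of $\g\oplus V$, the output splits into three families of terms according to which pieces are contracted: $\rho$-contributions in which $f_2(\cdots)$ feeds the $\g$-slot of $\rho$ produce the first summand of \eqref{eql2} indexed by $S(q,1,p-1)$; $\rho$-contributions with $f_1(\cdots)$ in that slot produce the second summand indexed by $S(p,1,q-1)$; and $\pi$-contributions, which bracket $f_1(\cdots)$ and $f_2(\cdots)$ in $\g$, produce the third summand. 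The overall $(-1)^{pq}$ factors come from the Nijenhuis--Richardson signs of the two nested compositions.

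Finally, the Maurer--Cartan equation of this graded Lie algebra reduces to $\tfrac{1}{2}\Courant{B,B}=0$, which by \eqref{eql2} applied with $p=q=1$ is exactly
\[
[B(u),B(v)]_\g=B\bigl(\rho(B(u))v-\rho(B(v))u\bigr),
\]
the defining relation of an $\huaO$-operator on $\g$ with respect to $(V;\rho)$. By Theorem \ref{VDLI1}, the Maurer--Cartan elements of $(\oplus_{n=1}^{+\infty}\Hom(\wedge^{n}V,\g),\Courant{\cdot,\cdot})$ are therefore precisely the $\huaO$-operators, so this graded Lie algebra is their controlling algebra. The only real obstacle is the shuffle bookkeeping in the second step; since this is a direct instance of the general construction, no new ideas beyond careful expansion are required.
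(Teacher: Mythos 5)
Your proposal is correct and follows exactly the route the paper intends: the corollary is a direct specialization of Theorem \ref{VDLI1} to the semidirect product, where $\mu=\eta=\theta=0$ kills $l_1$ and $l_3$, leaving only the binary bracket, whose Maurer--Cartan equation at $p=q=1$ is the $\huaO$-operator identity. The only cosmetic point worth noting is that the derived bracket $l_2$ and $\Courant{\cdot,\cdot}$ of \eqref{eql2} agree only up to the suspension sign $(-1)^{p-1}$ (cf.\ Proposition \ref{deficoofdefor11} and the sign discrepancy between \eqref{eql2} and the $l_2$ of Corollary \ref{cortwistrb}), which is harmless here since $p=1$ for Maurer--Cartan elements.
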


\begin{cor}\label{cortwistrb}
 Consider the quasi-twilled Lie algebra $(\g\ltimes_{\rho,\omega} V, \g, V)$ given in Example \ref{ex:twistedRB} obtained from a representation $\rho$ of $\g$ on $V$ and a $2$-cocycle $\omega$.
Then $(\oplus_{n=0}^{+\infty}\Hom(\wedge^{n+1}V, \g), l_2, l_3)$  is an $L_{\infty}$-algebra, where $l_2$ and $l_3$ are given by
\emptycomment{
\begin{eqnarray*}
&&l_2(f_1, f_2)(u_1, \cdots, u_{p+q})\\
&=&(-1)^{p}\Big(\sum_{\sigma\in S(q, 1, p-1)}(-1)^{\sigma}f_1(\rho(f_2(u_{\sigma(1)}, \cdots, u_{\sigma(q)}))( u_{\sigma(q+1)}), u_{\sigma(q+2)}, \cdots, u_{\sigma(q+p)})\\
&&-(-1)^{pq}\sum_{\sigma\in S(p, 1, q-1)}(-1)^{\sigma}f_2(\rho(f_1(u_{\sigma(1)}, \cdots, u_{\sigma(p)}))( u_{\sigma(p+1)}), u_{\sigma(p+2)}, \cdots, u_{\sigma(q+p)})\\
&&+(-1)^{pq}\sum_{\sigma\in S(p, q)}(-1)^{\sigma}[f_1(u_{\sigma(1)}, \cdots, u_{\sigma(p)}), f_2(u_{\sigma(p+1)}, \cdots, u_{\sigma(p+q)})]_\g\Big),
\end{eqnarray*}
and
\begin{eqnarray*}
&&l_3(f_1, f_2, f_3)(u_1, \cdots, u_{p+q+r-1})\\
&=&(-1)^{p+q+qr}\sum_{\sigma\in S(q, r, p-1)}(-1)^{\sigma}f_1(\Phi(f_2(u_{\sigma(1)}, \cdots, u_{\sigma(q)}), f_3(u_{\sigma(q+1)}, \cdots, u_{\sigma(q+r)})),\\
&& u_{\sigma(q+r+1)}, \cdots, u_{\sigma(p+q+r-1)})\\
&&-(-1)^{pq+pr}\sum_{\sigma\in S(p, r, q-1)}(-1)^{\sigma}f_2(\Phi(f_1(u_{\sigma(1)}, \cdots, u_{\sigma(p)}), f_3(u_{\sigma(p+1)}, \cdots, u_{\sigma(p+r)})),\\
&& u_{\sigma(p+r+1)}, \cdots, u_{\sigma(p+q+r-1)})\\
&&+(-1)^{pq+pr+qr+q+r}\sum_{\sigma\in S(p, q, r-1)}(-1)^{\sigma}f_3(\Phi(f_1(u_{\sigma(1)}, \cdots, u_{\sigma(p)}), f_2(u_{\sigma(p+1)}, \cdots, u_{\sigma(p+q)})),\\
&& u_{\sigma(p+q+1)}, \cdots, u_{\sigma(p+q+r-1)}).
\end{eqnarray*}
}
\begin{eqnarray*}
&&l_2(f_1, f_2)(u_1, \cdots, u_{p+q})\\
&=&\sum_{\sigma\in S(q, 1, p-1)}(-1)^{p}(-1)^{\sigma}f_1(\rho(f_2(u_{\sigma(1)}, \cdots, u_{\sigma(q)}))( u_{\sigma(q+1)}), u_{\sigma(q+2)}, \cdots, u_{\sigma(q+p)})\\
&&-(-1)^{p(q+1)}\sum_{\sigma\in S(p, 1, q-1)}(-1)^{\sigma}f_2(\rho(f_1(u_{\sigma(1)}, \cdots, u_{\sigma(p)}))( u_{\sigma(p+1)}), u_{\sigma(p+2)}, \cdots, u_{\sigma(q+p)})\\
&&+(-1)^{p(q+1)}\sum_{\sigma\in S(p, q)}(-1)^{\sigma}[f_1(u_{\sigma(1)}, \cdots, u_{\sigma(p)}), f_2(u_{\sigma(p+1)}, \cdots, u_{\sigma(p+q)})]_\g,
\end{eqnarray*}
and
\begin{eqnarray*}
&&l_3(f_1, f_2, f_3)(u_1, \cdots, u_{p+q+r-1})\\
&=&\sum_{\sigma\in S(q, r, p-1)}(-1)^{p+q+qr}(-1)^{\sigma}f_1(\Phi(f_2(u_{\sigma(1)}, \cdots, u_{\sigma(q)}), f_3(u_{\sigma(q+1)}, \cdots, u_{\sigma(q+r)})),\\
&& u_{\sigma(q+r+1)}, \cdots, u_{\sigma(p+q+r-1)})\\
&&-\sum_{\sigma\in S(p, r, q-1)}(-1)^{pq+pr}(-1)^{\sigma}f_2(\Phi(f_1(u_{\sigma(1)}, \cdots, u_{\sigma(p)}), f_3(u_{\sigma(p+1)}, \cdots, u_{\sigma(p+r)})),\\
&& u_{\sigma(p+r+1)}, \cdots, u_{\sigma(p+q+r-1)})\\
&&+\sum_{\sigma\in S(p, q, r-1)}(-1)^{pq+pr+qr+q+r}(-1)^{\sigma}f_3(\Phi(f_1(u_{\sigma(1)}, \cdots, u_{\sigma(p)}), f_2(u_{\sigma(p+1)}, \cdots, u_{\sigma(p+q)})),\\
&& u_{\sigma(p+q+1)}, \cdots, u_{\sigma(p+q+r-1)}).
\end{eqnarray*}

for all $f_1\in\Hom(\wedge^{p}\h, \g), f_2\in\Hom(\wedge^{q}\h, \g)$ and $f_3\in\Hom(\wedge^{r}\h, \g)$. This $L_{\infty}$-algebra is exactly the {\bf controlling algebra for twisted Rota-Baxter operators} given in  \cite[Theorem 3.2]{Das0}.
\end{cor}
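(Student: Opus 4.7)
The plan is to invoke Theorem \ref{VDLI1} directly with the quasi-twilled Lie algebra from Example \ref{ex:twistedRB}. The first step is to identify the five structure components of the Lie bracket $\Omega=[\cdot,\cdot]_{\rho,\omega}$ on $\g\oplus V$. Reading off from
\[
[(x,u),(y,v)]_{\rho,\omega}=\big([x,y]_\g,\;\rho(x)v-\rho(y)u+\omega(x,y)\big),
\]
one sees that in the decomposition $\Omega=\pi+\rho+\mu+\eta+\theta$ we have $\pi=[\cdot,\cdot]_\g$, the mixed piece $\g\otimes V\to V$ is the representation $\rho$, $\theta=\omega$, and crucially $\mu=0$ (since $V$ carries no Lie bracket) and $\eta=0$ (since $V$ does not act on $\g$ in a semidirect-with-cocycle construction).

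With these vanishings, the general $L_\infty$-structure produced by Theorem \ref{VDLI1} collapses dramatically. The unary bracket $l_1(f)=[\mu+\eta,f]_\NR$ is identically zero, so the complex is not even a chain complex in the usual sense; only $l_2$ and $l_3$ remain, and $l_k=0$ for $k\geq 4$ is automatic. Thus the work reduces to computing, explicitly in terms of shuffles and Koszul signs,
\[
l_2(f_1,f_2)=[[\pi+\rho,f_1]_\NR,f_2]_\NR \quad\text{and}\quad l_3(f_1,f_2,f_3)=[[[\omega,f_1]_\NR,f_2]_\NR,f_3]_\NR.
\]

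The remaining work is a direct expansion via the definition of $\circ$ in \eqref{eq:fgcirc}. For $l_2$, observe that the arguments $u_1,\dots,u_{p+q}$ all lie in $V$ while each $f_i$ is $\g$-valued, so in the nested Nijenhuis–Richardson bracket only three configurations yield nonzero terms: inserting $f_2(\cdots)\in\g$ together with a remaining $V$-input into $\rho$ and feeding the resulting element of $V$ into $f_1$; the analogous configuration obtained by swapping the roles of $f_1$ and $f_2$; and pairing the two $\g$-valued outputs of $f_1$ and $f_2$ through $\pi=[\cdot,\cdot]_\g$. Matching these three configurations with the three summands of the stated formula is immediate, once the shuffle-type is identified; the same pattern, applied with $\omega\colon\wedge^2\g\to V$ in place of $\pi+\rho$ and an extra level of nesting, produces the three symmetric terms of $l_3$.

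The main obstacle, and the only genuinely finicky part, is the bookkeeping of Koszul signs: each layer of $[\cdot,\cdot]_\NR$ contributes a sign $(-1)^{(m-1)(n-1)}$, and each shuffle contributes its own sign, so the exponents $(-1)^{p}$, $(-1)^{p(q+1)}$, $(-1)^{p+q+qr}$, $(-1)^{pq+pr}$, $(-1)^{pq+pr+qr+q+r}$ must be assembled consistently with the conventions used throughout the paper. One convenient sanity check is that specializing $\omega=0$ must recover Corollary \ref{cor:carrb} (with $\lambda=0$), which pins down the overall sign of $l_2$; the sign of $l_3$ is then forced by the $L_\infty$-relation $l_1\circ l_3 + l_2\circ l_2 + l_3\circ l_1 = 0$ (after inserting the Maurer–Cartan candidate $B$), but fortunately this is already guaranteed by Theorem \ref{VDLI1}, so no separate verification is needed.
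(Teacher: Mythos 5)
Your proposal is correct and follows exactly the route the paper intends: the corollary is a direct specialization of Theorem \ref{VDLI1} to the quasi-twilled Lie algebra of Example \ref{ex:twistedRB}, with $\pi=[\cdot,\cdot]_\g$, $\theta=\omega$ and $\mu=\eta=0$, so that $l_1=[\mu+\eta,\cdot]_{\NR}=0$ and only $l_2$, $l_3$ survive, the displayed formulas being the shuffle expansion of the iterated Nijenhuis--Richardson brackets. The identification of the vanishing components and the resulting collapse of the general structure is precisely the content of the paper's (omitted) argument.
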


\begin{cor}\label{cor:caRo}
 Consider the quasi-twilled Lie algebra $(\g\ltimes_{\ad,\Phi} \g, \g, \g)$ given in Example \ref{ex:Reynolds}.
Parallel to Corollary \ref{cortwistrb}, $(\oplus_{n=0}^{+\infty}\Hom(\wedge^{n+1}\g, \g), l_2, l_3)$  is an $L_{\infty}$-algebra. This $L_{\infty}$-algebra is exactly the {\bf controlling algebra for Reynolds operators}. See \cite{Das0} for more details.
\end{cor}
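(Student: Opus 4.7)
The plan is to deduce this as a direct specialization of Corollary \ref{cortwistrb}, exploiting the fact already noted in Example \ref{ex:Reynolds} that $(\g\ltimes_{\ad,\Phi}\g,\g,\g)$ is literally the twisted-cocycle quasi-twilled Lie algebra of Example \ref{ex:twistedRB} with the specific data $V=\g$, $\rho=\ad$ and $\omega(x,y)=[x,y]_\g$ (which is indeed a Chevalley--Eilenberg $2$-cocycle with values in the adjoint representation). Under this identification the structure maps in the decomposition \eqref{eq:o} of the ambient bracket $\Omega$ simplify to $\pi=[\cdot,\cdot]_\g$, $\rho=\ad$, $\theta=[\cdot,\cdot]_\g$, while $\mu=0$ and $\eta=0$, so Theorem \ref{VDLI1} immediately forces $l_1=[\mu+\eta,\,\cdot\,]_\NR=0$, and only $l_2$ and $l_3$ survive in the $L_\infty$-structure on $\oplus_{n=0}^{+\infty}\Hom(\wedge^{n+1}\g,\g)$.

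Concretely, first I would invoke Theorem \ref{VDLI1} for the $V$-data with $L=\oplus_n\Hom(\wedge^{n+1}(\g\oplus\g),\g\oplus\g)$, $F=\oplus_n\Hom(\wedge^{n+1}\g,\g)$, $P$ the projection, and $\Delta=\pi+\rho+\theta$, to obtain the $L_\infty$-algebra structure $(F,l_1,l_2,l_3)$. Then I would observe that $l_1=0$ by the vanishing of $\mu+\eta$, and that the explicit formulas for $l_2$ and $l_3$ are obtained from those in Corollary \ref{cortwistrb} by the substitutions $\rho=\ad$ and $\Phi=[\cdot,\cdot]_\g$; this is a purely formal translation of the Nijenhuis--Richardson brackets $[[\pi+\rho,f_1]_\NR,f_2]_\NR$ and $[[[\theta,f_1]_\NR,f_2]_\NR,f_3]_\NR$ via \eqref{eq:fgcirc}.

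Finally, to identify this $L_\infty$-algebra as the controlling algebra for Reynolds operators, I would combine the Maurer-Cartan characterization in Theorem \ref{VDLI1} with the computation already recorded in Example \ref{ex:Reynolds}: a linear map $B:\g\to\g$ is a $\huaD$-map of $(\g\ltimes_{\ad,\Phi}\g,\g,\g)$ precisely when it satisfies $[B(u),B(v)]_\g=B\bigl([B(u),v]_\g-[B(v),u]_\g+[B(u),B(v)]_\g\bigr)$, which is the defining identity of a Reynolds operator. Hence the Maurer-Cartan elements of $(F,l_2,l_3)$ are exactly the Reynolds operators on $\g$, which is precisely the property required.

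I do not anticipate a serious obstacle: the only point that requires care is the explicit matching of signs and shuffle ranges when specializing the $l_2,l_3$ formulas of Corollary \ref{cortwistrb}, and that is a mechanical verification via the derived-bracket construction of Theorem \ref{cV}.
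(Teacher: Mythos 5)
Your proposal is correct and follows essentially the same route as the paper: the paper's own justification is simply ``parallel to Corollary \ref{cortwistrb}'', i.e.\ the specialization $V=\g$, $\rho=\ad$, $\theta=\omega=[\cdot,\cdot]_\g$ (so $\mu=\eta=0$ and hence $l_1=0$) of the $V$-data construction of Theorem \ref{VDLI1}, combined with the identification of Maurer--Cartan elements with $\huaD$-maps, which Example~\ref{ex:Reynolds} and the corresponding example in Section~4.1 show are exactly the Reynolds operators. Your write-up just makes these steps explicit.
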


Theorem \ref{VDLI1} can not only recover some known results, but also gives rise to some new results, e.g. it gives rise to the {\bf controlling algebra for deformation maps of a matched pair of Lie algebras}.

\begin{cor}\label{cor:d}
Consider the quasi-twilled Lie algebra $(\g \bowtie \h, \g, \h)$ given in Example \ref{ex:mch} obtained from a matched pair of Lie algebras.
Then $(\oplus_{n=1}^{+\infty}\Hom(\wedge^{n}\h, \g), \dM, \Courant{\cdot,\cdot})$  is a differential graded Lie algebra, where $\dM:\Hom(\wedge^{p}\h, \g)\to \Hom(\wedge^{p+1}\h, \g)$ is given by
\begin{eqnarray*}
\dM f(u_1, \cdots, u_{p+1})
&=&\sum_{i=1}^{p+1}(-1)^{p+i}\eta(u_i)f(u_1, \cdots, \hat{u_i}, \cdots, u_{p+1}))\\
&&+\sum_{i<j}(-1)^{p+i+j-1}f([u_i, u_j]_\h, u_1, \cdots, \hat{u_i}, \cdots, \hat{u_j}, \cdots, u_{p+1})
\end{eqnarray*}
and the graded Lie bracket $\Courant{\cdot,\cdot}$ is given by \eqref{eql2}. Maurer-Cartan elements of this differential graded Lie algebra are exactly   deformation maps of a matched pair of Lie algebras.
\end{cor}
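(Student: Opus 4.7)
The plan is to obtain this corollary as a direct specialization of Theorem \ref{VDLI1} to the quasi-twilled Lie algebra $(\g \bowtie \h, \g, \h)$ of Example \ref{ex:mch}. The crucial observation is that in the matched pair decomposition $\Omega = \pi + \rho + \mu + \eta + \theta$, one has $\theta = 0$: indeed, for $x, y \in \g$, the bracket $[(x, 0), (y, 0)]_{\bowtie}$ equals $([x, y]_\g, 0)$, which carries no $\h$-component. Consequently, the ternary operation
\begin{equation*}
l_3(f_1, f_2, f_3) = [[[\theta, f_1]_{\NR}, f_2]_{\NR}, f_3]_{\NR}
\end{equation*}
vanishes identically, and the curved $L_\infty$-algebra of Theorem \ref{VDLI1} collapses to an ordinary differential graded Lie algebra determined by $l_1$ and $l_2$ alone.

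Next I would compute $l_1(f) = [\mu + \eta, f]_{\NR}$ for $f \in \Hom(\wedge^p \h, \g)$ using the lift convention. Since $\mu$ outputs in $\h$ while $f$ outputs in $\g$, the composition $\mu \circ f$ (in the sense of \eqref{eq:fgcirc}) vanishes, leaving $[\mu, f]_\NR = -(-1)^{p-1} f \circ \mu$; expanding the $(2, p-1)$-shuffles and tracking signs yields precisely the second summation in the formula for $\dM$. Symmetrically, $f \circ \eta = 0$ because $\eta$ outputs in $\g$ and $f$ accepts only $\h$-inputs, so $[\eta, f]_\NR = \eta \circ f$; summing over $(p, 1)$-shuffles and using $\eta(u) y = -\eta(y, u)$ to rewrite the formula in the matched-pair convention gives the first summation in $\dM$. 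Together these produce the claimed differential.

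For the binary bracket, $l_2(f_1, f_2) = [[\pi + \rho, f_1]_{\NR}, f_2]_{\NR}$ is computed exactly as in the proof of Corollary \ref{cor:carrb}: the pieces coming from $\pi$ and $\rho$ produce respectively the third and first-two summations in formula \eqref{eql2}, and no other terms survive since $f_1, f_2$ output in $\g$. I would simply note that the computation for $(\g \bowtie \h, \g, \h)$ is identical on the nose to that case, only the meanings of $\pi$ and $\rho$ having changed.

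The Maurer-Cartan characterization is then immediate from Theorem \ref{VDLI1}: a linear map $B : \h \to \g$ satisfies $l_1(B) + \tfrac12 l_2(B, B) = 0$ (the $l_3$ term being absent) iff $B$ is a $\huaD$-map of $(\g \bowtie \h, \g, \h)$, which by the discussion preceding the corollary is exactly a deformation map of the matched pair in the sense of \cite{AM}. The only labor-intensive step is the sign bookkeeping in the shuffle sums for $l_1$ and $l_2$, but since no new phenomenon arises beyond what already appeared in Corollary \ref{cor:carrb}, I expect no genuine obstacle.
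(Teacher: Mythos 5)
Your proposal is correct and follows exactly the route the paper intends: the corollary is a direct specialization of Theorem \ref{VDLI1} to the quasi-twilled Lie algebra $(\g\bowtie\h,\g,\h)$, where $\theta=0$ kills $l_3$, the bracket $[\mu+\eta,\cdot]_\NR$ unpacks to the stated $\dM$ (the $\eta$-term giving the first sum and the $\mu$-term the second), $[[\pi+\rho,\cdot]_\NR,\cdot]_\NR$ reproduces \eqref{eql2} exactly as in Corollary \ref{cor:carrb}, and the Maurer--Cartan characterization is the $\huaD$-map condition, i.e.\ a deformation map of the matched pair. The paper offers no separate proof beyond this specialization, so there is nothing further to compare.
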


At the end of this subsection, we give the $L_\infty$-algebra governing deformations of \ddms of a quasi-twilled Lie algebra.

Let $B$ be a \ddm of the quasi-twilled Lie algebra $(\huaG, \g, \h)$. By Theorem \ref{twistLin}, we have the twisted $L_\infty$-algebra structure on $\oplus_{n=0}^{+\infty}\Hom(\wedge^{n+1}\h, \g)$ as following:
\begin{eqnarray}
\label{twist-rb-1}l_1^{B}(f)&=&l_1(f)+l_2(B, f)+\frac{1}{2}l_3(B, B, f),\\
\label{twist-rb-2}l_2^{B}(f_1,f_2)&=&l_2(f_1, f_2)+l_3(B, f_1, f_2),\\
\label{twist-rb-3}l_3^{B}(f_1,f_2,f_3)&=&l_3(f_1, f_2, f_3),\\
l^B_k&=&0,\,\,\,\,k\ge4,
\end{eqnarray}

\begin{thm}\label{thm:defd2}
Let $B$ be a \ddm of the quasi-twilled Lie algebra $(\huaG, \g, \h)$.
Then for a linear map $B':\h\lon\g$, $B+B': \h\lon\g$ is a \ddm of the quasi-twilled Lie algebra $(\huaG, \g, \h)$  if and only if $B'$ is a Maurer-Cartan element of the twisted $L_\infty$-algebra $(\oplus_{n=0}^{+\infty}\Hom(\wedge^{n+1}\h, \g),l_1^{B},l_2^{B},l_3^{B})$.
\end{thm}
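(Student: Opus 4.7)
The plan is to imitate the argument used for Theorem \ref{thm:defd1}, but now with three nontrivial brackets in the controlling $L_\infty$-algebra rather than two. By Theorem \ref{VDLI1}, the linear map $B+B':\h\lon\g$ is a \ddm of $(\huaG,\g,\h)$ if and only if it satisfies the Maurer-Cartan equation in $(\oplus_{n=0}^{+\infty}\Hom(\wedge^{n+1}\h,\g),l_1,l_2,l_3)$, namely
\[
l_1(B+B')+\tfrac{1}{2}l_2(B+B',B+B')+\tfrac{1}{6}l_3(B+B',B+B',B+B')=0.
\]
So the first step is to expand this equation multilinearly, using the graded symmetry of $l_2$ and $l_3$, and then subtract the Maurer-Cartan equation $l_1(B)+\tfrac{1}{2}l_2(B,B)+\tfrac{1}{6}l_3(B,B,B)=0$ satisfied by $B$.

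After this cancellation, the condition on $B'$ becomes
\[
l_1(B')+l_2(B,B')+\tfrac{1}{2}l_3(B,B,B')+\tfrac{1}{2}l_2(B',B')+\tfrac{1}{2}l_3(B,B',B')+\tfrac{1}{6}l_3(B',B',B')=0.
\]
The second step is to recognize the three groups of terms here as exactly $l_1^B(B')$, $\tfrac{1}{2}l_2^B(B',B')$ and $\tfrac{1}{6}l_3^B(B',B',B')$ in view of the twisting formulas \eqref{twist-rb-1}--\eqref{twist-rb-3}. That $(\oplus_{n=0}^{+\infty}\Hom(\wedge^{n+1}\h,\g),l_1^B,l_2^B,l_3^B)$ is indeed an $L_\infty$-algebra follows from Theorem \ref{twistLin} applied to the Maurer-Cartan element $B$, so the right-hand side is a bona fide Maurer-Cartan equation.

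There is essentially no obstacle here beyond careful bookkeeping of binomial coefficients: the only thing that has to be checked is that grouping the six residual terms after Maurer-Cartan-subtracting $B$ exactly reproduces the right combinatorial weights in the twisted brackets, which is standard for Maurer-Cartan twisting of $L_\infty$-algebras. Thus the proof reduces to the display above, together with a one-line invocation of the formulas \eqref{twist-rb-1}--\eqref{twist-rb-3} and Theorems \ref{VDLI1} and \ref{twistLin}.
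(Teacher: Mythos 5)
Your proposal is correct and follows essentially the same route as the paper: invoke Theorem \ref{VDLI1} to write the condition on $B+B'$ as a Maurer-Cartan equation, subtract the Maurer-Cartan equation for $B$, and regroup the six residual terms into $l_1^B(B')+\tfrac{1}{2}l_2^B(B',B')+\tfrac{1}{6}l_3^B(B',B',B')$ via \eqref{twist-rb-1}--\eqref{twist-rb-3}. The binomial bookkeeping you describe checks out, so no gap remains.
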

\begin{proof}

By Theorem \ref{VDLI1}, the linear map $B+B'$ is a \ddm if and only if
$$l_1(B+B')+\frac{1}{2!}l_{2}(B+B', B+B')+\frac{1}{3!}l_{3}(B+B',B+B', B+B')=0.$$
Moreover, by the fact $B$ is a $\huaD$-map, we gain that  the above condition is equivalent to
\emptycomment{
\begin{eqnarray*}
&&\sum_{k=1}^{3}\frac{1}{k!}l_{k}(B+B', \cdots, B+B')\\
&=&l_1(B)+l_1(B')+\frac{1}{2}l_2(B', B')+l_2(B, B')+\frac{1}{2}l_2(B, B)+\frac{1}{3!}l_3(B', B', B')\\
&&\frac{1}{2}l_3(B, B', B')+\frac{1}{2}l_3(B, B, B')+\frac{1}{3!}l_3(B, B, B)\\
&=&\sum_{k=1}^{3}\frac{1}{k!}l_{k}^{B}(B', \cdots, B').
\end{eqnarray*}
}
\begin{eqnarray*}
l_1(B')+l_2(B, B')+\frac{1}{2}l_3(B, B, B')+\frac{1}{2}l_2(B', B')+\frac{1}{2}l_3(B, B', B')+\frac{1}{3!}l_3(B', B', B')=0.
\end{eqnarray*}
Thus $B+B': \h\lon\g$ is a \ddm of the quasi-twilled Lie algebra $(\huaG, \g, \h)$ if and only if $B'$ is a Maurer-Cartan element of the twisted $L_\infty$-algebra $(\oplus_{n=0}^{+\infty}\Hom(\wedge^{n+1}\h, \g),l_1^{B},l_2^{B},l_3^{B})$.
\end{proof}

Apply the above theorem to Corollary \ref{cor:d}, we obtain the differential graded Lie algebra governing deformations of deformation maps of a matched pair of Lie algebras.

\begin{cor}
  Let $B:\h\to\g$ be a deformation map of a matched pair $(\g,\h;\rho,\eta)$ of Lie algebras. Then $\big(s\Big(\oplus_{n=1}^{+\infty}\Hom(\wedge^{n}\h, \g)\Big),\dM^B,\Courant{\cdot,\cdot}\big)$ is a differential graded Lie algebra,  where $\Courant{\cdot,\cdot}$ is given by \eqref{eql2}, and $\dM^B$ is given by
  \begin{eqnarray*}
    &&\dM^Bf(u_1, \cdots, u_{p+1})\\
    &=&\sum_{i=1}^{p+1}(-1)^{p+i}\eta(u_i)f(u_1, \cdots, \hat{u_i}, \cdots, u_{p+1})-\sum_{i<j}(-1)^{p+i+j}f([u_i, u_j]_\h, u_1, \cdots, \hat{u_i}, \cdots, \hat{u_j}, \cdots, u_{p+1})\\
    &&+\sum_{i=1}^{p+1}(-1)^{p+i}B(\rho(f(u_{1}, \cdots, \hat{u_p}, \cdots, u_{p+1}))u_{i})-\sum_{i=1}^{p+1}(-1)^{i+p}f(\rho(B(u_i)), u_{1}, \cdots, \hat{u_i}, \cdots, u_{p+1})\\
&&+\sum_{i=1}^{p+1}(-1)^{i+p}[B(u_{i}), f(u_{1}, \cdots, \hat{u_i}, \cdots, u_{p+1})]_\g.
  \end{eqnarray*}
  Moreover, for a linear map $B':\h\to\g$, $B+B'$ is a deformation map if and only if $B'$ is a Maurer-Cartan element of the differential graded Lie algebra  $\big(\oplus_{n=1}^{+\infty}\Hom(\wedge^{n}\h, \g),\dM^B,\Courant{\cdot,\cdot}\big)$.
\end{cor}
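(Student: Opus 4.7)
My plan is to deduce this corollary by specializing the general twisted $L_\infty$-algebra from Theorem \ref{thm:defd2} to the quasi-twilled Lie algebra $(\g\bowtie\h,\g,\h)$ coming from the matched pair, and then comparing with Corollary \ref{cor:d}.

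The first observation is that for a matched pair, reading off $\Omega$ from the bracket $[\cdot,\cdot]_\bowtie$ one sees that the component $\theta\in\Hom(\wedge^2\g,\h)$ vanishes. Consequently, in the formulas of Theorem \ref{VDLI1} the ternary bracket $l_3=[[[\theta,\cdot]_{\NR},\cdot]_{\NR},\cdot]_{\NR}$ is identically zero, so the $L_\infty$-algebra of Theorem \ref{VDLI1} collapses to the differential graded Lie algebra $(l_1,l_2)$ described in Corollary \ref{cor:d}, with differential equal to the Chevalley–Eilenberg differential of $\h$ with values in $\g$ (for the representation $\eta$) and bracket $\Courant{\cdot,\cdot}$ from \eqref{eql2}.

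Next I would apply Theorem \ref{thm:defd2} to this setting. Since $l_3=0$, the twisted operations simplify to $l_1^B(f)=l_1(f)+l_2(B,f)$, $l_2^B(f_1,f_2)=l_2(f_1,f_2)$, and $l_k^B=0$ for $k\ge 3$. Because the twisted collection still satisfies the $L_\infty$-axioms (Theorem \ref{twistLin}) and only has a unary and binary operation, it is automatically a differential graded Lie algebra. By Theorem \ref{thm:defd2}, its Maurer–Cartan elements $B'$ are exactly the linear maps with $B+B'$ a \ddm, which is the second claim of the corollary.

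It remains to identify $\dM^B$ explicitly as claimed. This is a direct expansion: the first two lines of the formula for $\dM^B$ come from $l_1(f)$, i.e.\ from the Chevalley–Eilenberg differential in Corollary \ref{cor:d}, while the last three lines come from $l_2(B,f)=\Courant{B,f}$ by plugging $f_1=B$ (a $1$-cochain, so $p=1$) into \eqref{eql2} and expanding the three shuffle sums; the degree-$1$ input $B$ contributes the three types of terms involving $\rho(B(\cdot))$, $B(\rho(\cdot))$, and $[B(\cdot),\cdot]_\g$ respectively. The only delicate point, which I expect to be the main bookkeeping obstacle, is to verify that the shuffle signs in \eqref{eql2} produce precisely the signs $(-1)^{p+i}$ and $(-1)^{i+p}$ appearing in the stated formula; this amounts to a routine but careful permutation argument matching $(q,1,p-1)$- and $(p,1,q-1)$-shuffles with the index $i$ used to single out one entry $u_i$. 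Once this sign reconciliation is carried out, the formula for $\dM^B$ follows, completing the proof.
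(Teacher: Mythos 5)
Your proposal is correct and follows essentially the same route as the paper, which simply invokes Theorem \ref{thm:defd2} in the setting of Corollary \ref{cor:d}: since $\theta=0$ for a matched pair, $l_3$ vanishes, the twisted structure reduces to $l_1^B=l_1+l_2(B,\cdot)$ and $l_2^B=l_2$, and the Maurer--Cartan characterization of $B+B'$ is immediate. Your sign bookkeeping for identifying $\dM^B$ with $l_1^B$ (via Proposition \ref{deficoofdefor1}) is exactly the remaining routine verification the paper leaves implicit.
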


\begin{rmk}
  Apply Theorem \ref{thm:defd2} to Corollaries \ref{cor:carrb}-\ref{cor:caRo}, one can also obtain the algebras governing deformations of relative Rota-Baxter operators, twisted Rota-Baxter operators and Reynolds operators.  See \cite{Das0, Das1, TBGS} for more details.
\end{rmk}

\subsection{Cohomologies of \ddms}
In this subsection, we introduce a cohomology theory of a \ddm  using the Chevalley-Eilenberg cohomology of a Lie algebra, and illustrate that it unifies the cohomologies of relative Rota-Baxter operators, twisted Rota-Baxter operators and Reynolds operators. It also helps us to define a cohomology theory of a deformation map  of a matched pair of Lie algebras.

\begin{lem}
 Let $B: \h\lon\g$ be a \ddm of a quasi-twilled Lie algebra $(\huaG, \g, \h)$. Then
$$
\mu^{B}(u, v)=\mu(u, v)+\rho(B(u), v)-\rho(B(v), u)+\theta(B(u), B(v))
$$
is a Lie algebra structure on $\h$ and
$$
\sigma(v)x=-\eta^{B}(x, v)=-\eta(x, v)+\pi(B(v), x)+B(\rho(x, v))-B(\theta(B(v), x))
$$
is a representation of the Lie algebra $(\h, \mu^{B})$ on the vector space $\g$.
\end{lem}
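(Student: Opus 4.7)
The plan is to deduce both assertions by transporting the lemma to the twisted quasi-twilled Lie algebra $((\huaG,\Omega^B),\g,\h)$ and then invoking the structural facts already established for quasi-twilled Lie algebras. More precisely, by Theorem \ref{thm:twist}, the twisting $\Omega^B$ decomposes uniquely as $\Omega^B=\pi^B+\rho^B+\mu^B+\eta^B+\theta^B+\xi^B$, where the six components are given explicitly in that theorem; in particular, $\mu^B$ and $\eta^B$ coincide with the expressions appearing in the statement. Because $B$ is a \ddm, Theorem \ref{thm:twist} further asserts that $\xi^B=0$, so $\h$ is a Lie subalgebra of $(\huaG,\Omega^B)$ and $((\huaG,\Omega^B),\g,\h)$ is itself a quasi-twilled Lie algebra.

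From here the proof reduces to two observations. First, since $\h$ is a Lie subalgebra of the Lie algebra $(\huaG,\Omega^B)$ with bracket induced by the $\mu^B$-component, $(\h,\mu^B)$ is automatically a Lie algebra; this is simply the statement that the $\h$-component of a quasi-twilled Lie bracket is a Lie bracket on $\h$. Second, applying Proposition \ref{etarep} to the quasi-twilled Lie algebra $((\huaG,\Omega^B),\g,\h)$, the map $\sigma\colon \h\to\gl(\g)$ defined by $\sigma(v)x=-\eta^B(x,v)$ is a representation of $(\h,\mu^B)$ on $\g$. Substituting the formula for $\eta^B$ from Theorem \ref{thm:twist} gives exactly the expression in the statement.

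There is essentially no obstacle: once Theorem \ref{thm:twist} is in hand, the lemma is an immediate specialization. If one wished to check things by hand without appealing to that theorem, the main work would be unwinding $[\Omega^B,\Omega^B]_{\NR}=0$ into the analogue of system \eqref{111}, extracting the equations $[\mu^B,\mu^B]_{\NR}=0$ and $2[\mu^B,\eta^B]_{\NR}+[\eta^B,\eta^B]_{\NR}=0$, and using the latter together with Proposition \ref{etarep} to conclude. But via the twisting machinery this is automatic, so I would present the proof as a one-line application of Theorem \ref{thm:twist} together with Proposition \ref{etarep}.
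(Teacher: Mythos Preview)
Your proposal is correct and matches the paper's own proof exactly: the paper simply says ``It follows from Theorem \ref{thm:twist} and Proposition \ref{etarep} directly,'' which is precisely the one-line argument you outline.
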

\begin{proof}
It follows from Theorem \ref{thm:twist} and Proposition \ref{etarep} directly.
\end{proof}

Let $\dM_{\CE}^B: \Hom(\wedge^{k}\h,\g)\lon\Hom(\wedge^{k+1}\h,\g)$ be the corresponding Chevalley-Eilenberg coboundary operator of the Lie algebra $(\h, \mu^B)$ with coefficients in the representation $(\g, \sigma)$. More precisely, for all $f\in\Hom(\wedge^{k}\h,\g)$ and $u_1,\cdots,u_{k+1}\in\h$, we have
\begin{eqnarray}\label{defidD!}
&&\dM_{\CE}^{B}f(u_1,\cdots,u_{k+1})\\
\nonumber&=&\sum_{i=1}^{k+1}(-1)^{i+1}\sigma(u_i)f(u_1,\cdots,\hat{u_i},\cdots,u_{k+1})+\sum_{i<j}(-1)^{i+j}f(\mu^B(u_i, u_j), u_1,\cdots,\hat{u_i},\cdots,\hat{u_j},\cdots,u_{k+1})\\
\nonumber&=&\sum_{i=1}^{k+1}(-1)^{i}\eta(f(u_1,\cdots,\hat{u_i},\cdots,u_{k+1}), u_i)+\sum_{i=1}^{k+1}(-1)^{i+1}\pi(B(u_i), f(u_1,\cdots,\hat{u_i},\cdots,u_{k+1}))\\
\nonumber&&+\sum_{i=1}^{k+1}(-1)^{i+1}B(\rho(f(u_1,\cdots,\hat{u_i},\cdots,u_{k+1})), u_i)-\sum_{i=1}^{k+1}(-1)^{i+1}B(\theta(B(u_i), f(u_1,\cdots,\hat{u_i},\cdots,u_{k+1})))\\
\nonumber&&+\sum_{i<j}(-1)^{i+j}f(\mu(u_i, u_j),u_1,\cdots,\hat{u_i},\cdots,\hat{u_j},\cdots,u_{k+1})\\
\nonumber&&+\sum_{i<j}(-1)^{i+j}f(\theta(B(u_i), B(u_j)),u_1,\cdots,\hat{u_i},\cdots,\hat{u_j},\cdots,u_{k+1})\\
\nonumber&&+\sum_{i<j}(-1)^{i+j}f(\rho(B(u_i), u_j)-\rho(B(u_j), u_i),u_1,\cdots,\hat{u_i},\cdots,\hat{u_j},\cdots,u_{k+1}).
\end{eqnarray}

Now, we define the cohomology of a \ddm $B:\h\lon\g$. Define the space of $0$-cochains $C^{0}(B)$ to be $0$ and define the space of $1$-cochains $C^{1}(B)$ to be $\g$. For $n\geq 2$, define the space of $n$-cochains $C^{n}(B)$ by $C^{n}(B)=\Hom(\wedge^{n-1}\h,\g)$.

\begin{defi}\label{defi:cohdm}
Let $(\huaG, \h, \g)$ be a quasi-twilled Lie algebra and $B:\h\lon\g$ be a \ddm of $(\huaG, \h, \g)$. The cohomology of  the cochain complex   $(\oplus_{i=0}^{+\infty}C^{i}(B), \dM_{\CE}^{B})$ is defined to be the {\bf cohomology  for the \ddm} $B$.
\end{defi}

Denote the set of $n$-cocycles by $Z^{n}(B)$, the set of $n$-coboundaries by $B^{n}(B)$ and the $n$-th cohomology group by
\begin{equation*}
H^{n}(B)=Z^{n}(B)/B^{n}(B), \quad n\geq 0.
\end{equation*}

It is obvious that $x\in\g$ is closed if and only if
\begin{equation*}
-\eta(x, u)+\pi(B(u), x)+B(\rho(x, u))-B(\theta(B(u), x))=0, \quad \forall u\in\h,
\end{equation*}
and $f\in\Hom(\h,\g)$ is closed if and only if
\begin{eqnarray*}
&&-\eta(f(v), u)+\eta(f(u), v)+\pi(B(u), f(v))-\pi(B(v), f(u))+B(\rho(f(v), u))-B(\rho(f(u), v))\\
&=&B(\theta(B(u), f(v)))-B(\theta(B(v), f(u)))+f(\mu(u, v)+\theta(B(u), B(v)))+f(\rho(B(u), v)-\rho(B(v), u)),
\end{eqnarray*}
for all $u, v\in\h$.

Here we provide an intrinsic interpretation of the above coboundary operator.

Let $B:\h\lon\g$ be a \ddm of a quasi-twilled Lie algebra $(\huaG, \g, \h)$. The twisted $L_\infty$-algebra $(\oplus_{n=0}^{+\infty}\Hom(\wedge^{n+1}\h, \g),l_1^{B},l_2^{B},l_3^{B})$ controls deformations of the \ddm $B$. Parallel to  Proposition \ref{deficoofdefor11}, we have the following proposition.

\begin{pro}\label{deficoofdefor1}
With the above notations, for any $f\in\Hom(\wedge^{k}\h, \g)$, one has $$l_1^B(f)=(-1)^{k-1}\dM_{\CE}^{B}f.$$
\end{pro}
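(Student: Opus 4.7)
The proof will parallel Proposition \ref{deficoofdefor11} closely, splitting the claim into two substeps: first, that $l_1^B(f) = [\mu^B + \eta^B, f]_\NR$, and second, that this Nijenhuis--Richardson bracket coincides (up to the sign $(-1)^{k-1}$) with the Chevalley--Eilenberg differential of the Lie algebra $(\h, \mu^B)$ acting on $\g$ via $\sigma(v)x = -\eta^B(x, v)$.

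For the first substep, I would invoke the derived-bracket formulation of Theorem \ref{VDLI1} together with the twisting result (Theorem \ref{twistLin}): since $B$ is a Maurer--Cartan element of the $L_\infty$-algebra (i.e.\ a $\huaD$-map, by Theorem \ref{VDLI1}), the twisted bracket admits the compact description
$$l_1^B(f) = P\bigl([\Omega^B, f]_\NR\bigr),$$
where $P$ projects onto $\oplus_n\Hom(\wedge^{n+1}\h, \g)$ and $\Omega^B = e^{[\cdot, B]_\NR}\Omega$ is the Lie bracket produced by Lemma \ref{lemtwistB}. By Theorem \ref{thm:twist} we decompose $\Omega^B = \pi^B + \rho^B + \mu^B + \eta^B + \theta^B$ (the $\xi^B$ component vanishes because $B$ is a $\huaD$-map). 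A bidegree analysis then shows that among these five components, only $\mu^B\colon\wedge^2\h\to\h$ and $\eta^B\colon\g\otimes\h\to\g$ can bracket with $f\in\Hom(\wedge^k\h, \g)$ to land in $\Hom(\wedge^{k+1}\h, \g)$, so $l_1^B(f) = [\mu^B + \eta^B, f]_\NR$. As a consistency check, one may expand $l_1^B(f) = l_1(f) + l_2(B,f) + \tfrac{1}{2}l_3(B,B,f)$ using Theorem \ref{VDLI1} and the explicit formulas in Theorem \ref{thm:twist} to watch the three summands assemble termwise into $[\mu^B + \eta^B, f]_\NR$.

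For the second substep, $[\mu^B, f]_\NR$ reduces to $(-1)^k f\circ \mu^B$ (the $\mu^B\circ f$ piece vanishes because $f$ outputs into $\g$ while $\mu^B$ only accepts $\h$-inputs), producing the ``shuffle-of-pairs'' term $\sum_{i<j}(-1)^{i+j}f(\mu^B(u_i, u_j), \ldots)$ of the Chevalley--Eilenberg differential. Similarly $[\eta^B, f]_\NR = \eta^B\circ f$, and unpacking the shuffle formula (\ref{eq:fgcirc}) yields the representation term $\sum_i(-1)^{i+1}\sigma(u_i)f(\ldots)$. Collecting these matches \eqref{defidD!} up to the global sign $(-1)^{k-1}$, which comes from the Koszul convention in the NR bracket. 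The main obstacle is the bidegree/arity bookkeeping in the first substep: one must enumerate, for each type of component of $\Omega^B$, the $\g$-versus-$\h$ arity of inputs and output and verify that producing $\wedge^{k+1}\h \to \g$ when bracketed with a $\wedge^k\h \to \g$ map forces the component to be either $\mu^B$ or $\eta^B$. Once this type-count is settled, the remainder of the computation mirrors Proposition \ref{deficoofdefor11} almost verbatim.
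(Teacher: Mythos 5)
Your proposal is correct and follows essentially the same route as the paper: the paper gives no separate proof here but declares the result ``parallel to'' Proposition \ref{deficoofdefor11}, whose proof is exactly your argument --- identify $l_1^B(f)=l_1(f)+l_2(B,f)+\tfrac{1}{2}l_3(B,B,f)$ with $[\mu^B+\eta^B,f]_{\NR}$ via the twisted structure maps of Theorem \ref{thm:twist}, and then recognize this as $(-1)^{k-1}\dM_{\CE}^{B}f$. Your packaging of the intermediate identity as $P([\Omega^{B},f]_{\NR})$ followed by a bidegree analysis is a slightly more conceptual way of organizing the same computation and is valid (note that $[\xi^B,f]_{\NR}=0$ automatically since both lie in the abelian subalgebra $F$, so the vanishing of $\xi^B$ is not even needed at that step).
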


Definition \ref{defi:cohdm} also recover the existing cohomology theories of relative Rota-Baxter operators, twisted Rota-Baxter operators and Reynolds operators.

\begin{ex}
 Consider the quasi-twilled Lie algebra $(\g\ltimes_\rho \h, \g, \h)$ given in Example \ref{ex:rRB1} obtained from the action Lie algebra $\g\ltimes_\rho \h$. Let  $B:\h\lon\g$ be a  relative Rota-Baxter operator of weight $\lambda$ on $\g$ with respect to the action $(\h;\rho)$. Then $(\h,[\cdot,\cdot]_B)$ is a Lie algebra, where the Lie bracket $[\cdot,\cdot]_B$ is given by
$$
[u,v]_B=\lambda[u, v]_\h+\rho(B(u))v-\rho(B(v))u,\quad \forall u,v\in \h.
$$
Moreover, the Lie algebra $(\h, [\cdot,\cdot]_B)$ represents on the vector space $\g$ via $\sigma:\h\to\gl(\g)$ given by
$$
\sigma(v)x=[B(v), x]_\g+B(\rho(x)v),\quad \forall v\in \h, x\in\g.
$$
The corresponding Chevalley-Eilenberg cohomology is taken to be the {\bf cohomology for the relative Rota-Baxter operator of weight $\lambda$}. See \cite{Das1, JSZ} for more details.
\end{ex}
\begin{ex}
 Consider the quasi-twilled Lie algebra $(\g\ltimes_\rho V, \g, V)$ given in Example \ref{ex:rRB} obtained from the semidirect product Lie algebra $\g\ltimes_\rho V$.
Let  $B:V\lon\g$ be a relative Rota-Baxter operator of weight $0$ or an $\huaO$-operator  on $\g$ with respect to the representation $(V;\rho)$. Then $(V,[\cdot,\cdot]_B)$ is a Lie algebra, where the Lie bracket $[\cdot,\cdot]_B$ is given by
$$
[u,v]_B=\rho(B(u))v-\rho(B(v))u,\quad \forall u,v\in V.
$$
Moreover, the Lie algebra $(V,[\cdot,\cdot]_B)$ represents on the vector space $\g$ via $\sigma:V\to\gl(\g)$ given by
$$
\sigma(u)x=[B(u),x]_\g+B(\rho(x)u),\quad \forall u\in V, x\in\g.
$$
The corresponding Chevalley-Eilenberg cohomology is taken to be the {\bf cohomology for the $\huaO$-operator} $B$. See \cite{TBGS} for more details.
\end{ex}

\begin{ex}
 Consider the quasi-twilled Lie algebra $(\g\ltimes_{\rho,\omega} V, \g, V)$ given in Example \ref{ex:twistedRB} obtained from a representation $\rho$ of $\g$ on $V$ and a $2$-cocycle $\omega$. Let $B:\h\lon\g$ be a twisted Rota Baxter operator. Then $(V,[\cdot,\cdot]_B)$ is a Lie algebra, where the Lie bracket $[\cdot,\cdot]_B$ is given by
$$
[u,v]_B=\rho(B(u))v-\rho(B(v))u+\omega(B(u), B(v)),\quad \forall u,v\in V.
$$
Moreover, the Lie algebra $(V, [\cdot,\cdot]_B)$ represents on the vector space $\g$ via $\sigma:V\to\gl(\g)$ given by
$$
\sigma(v)x=[B(v), x]_\g+B(\rho(x)v)-B(\omega(B(v), x)),\quad \forall v\in V, x\in\g.
$$
The corresponding Chevalley-Eilenberg cohomology is taken to be the {\bf cohomology for the twisted Rota-Baxter operator} $B$. See \cite{Das0} for more details.
\end{ex}

\begin{ex}
 Consider the quasi-twilled Lie algebra $(\g\ltimes_{\ad,[\cdot, \cdot]_\g} \g, \g, \g)$ given in Example \ref{ex:Reynolds}. Let $B:\g\lon\g$ be a Reynolds operator. Then $(\g,[\cdot,\cdot]_B)$ is a Lie algebra, where the Lie bracket $[\cdot,\cdot]_B$ is given by
$$
[x, y]_B=[B(x), y]_\g-[B(y), x]_\g+[B(x), B(y)]_\g,\quad \forall x, y\in \g.
$$
Moreover, the Lie algebra $(\g, [\cdot,\cdot]_B)$ represents on the vector space $\g$ via $\sigma:\g\to\gl(\g)$ given by
$$
\sigma(x)y=[B(x), y]_\g+B([y, x]_\g)-B([B(x), y]_\g),\quad \forall x, y\in\g.
$$
The corresponding Chevalley-Eilenberg cohomology is taken to be the {\bf cohomology for the Reynolds operator} $B$. See \cite{Das0} for more details.
\end{ex}

Definition \ref{defi:cohdm} leads the following definition of cohomologies of deformation maps of a matched pair of Lie algebras.

Consider the quasi-twilled Lie algebra $(\g \bowtie \h, \g, \h)$ given in Example \ref{ex:mch} obtained from a matched pair of Lie algebras. Let $B:\h\lon\g$ be a deformation map of a matched pair of Lie algebras. Then $(\h,[\cdot,\cdot]_B)$ is a Lie algebra, where the Lie bracket $[\cdot,\cdot]_B$ is given by
$$
[u, v]_B=[u, v]_\h+\rho(B(u))v-\rho(B(v))u,\quad \forall u,v\in \h.
$$
Moreover, the Lie algebra $(\h, [\cdot,\cdot]_B)$ represents on the vector space $\g$ via $\sigma:\h\to\gl(\g)$ given by
$$
\sigma(v)x=\eta(v)x+[B(v), x]_\g+B(\rho(x)v),\quad \forall v\in \h, x\in\g.
$$
\begin{defi}\label{defi:cohdmm}
The corresponding Chevalley-Eilenberg cohomology of the Lie algebra $(\h, [\cdot,\cdot]_B)$ with coefficients in the representation $(\g,\sigma) $ is taken to be the {\bf cohomology for the deformation map $B$ of a matched pair $(\g,\h)$ of Lie algebras}.
\end{defi}

\begin{rmk}
In \cite{AM}, authors constructed the Lie algebra structure $[\cdot, \cdot]_B$ on $\h$ via another approach, namely    transfer the Lie algebra structure on $\mathrm{Gr}(B)$ to $\h$.
\end{rmk}

\begin{rmk}
  In \cite{Das0, Das1, JSZ,TBGS}, it was showed that one can use the established cohomology theory to classify infinitesimal deformations of relative Rota-Baxter operators, twisted Rota-Baxter operators and Reynolds operators. Similarly, one can also study infinitesimal deformations of a deformation map of a matched pair and show that they are classified by  the second cohomology group of a deformation map of a matched pair given in Definition \ref{defi:cohdmm}. We leave the details to readers.
\end{rmk}




 \end{document}